\documentclass[11pt]{article}

\usepackage[margin=1in]{geometry}
\usepackage[colorlinks]{hyperref}
\usepackage[ruled,vlined,linesnumbered]{algorithm2e}
\usepackage{amsmath, amsthm, amssymb, latexsym}
\usepackage{color}
\usepackage{graphicx}
\usepackage{thm-restate}
\usepackage{subcaption}
\usepackage[capitalize,nameinlink,noabbrev]{cleveref}

\usepackage{tikz}
\usetikzlibrary{shapes}
\usepackage{enumitem}

\usepackage{xcolor}         % For coloring
\usepackage{etoolbox}       % For patching commands

\hypersetup{
    colorlinks=true,
    linkcolor=blue!40!black,     % For \ref and \eqref
    citecolor=teal!60!black,     % For \cite
    urlcolor=cyan!50!black       % For \url
}

\newcommand{\keywords}[1]{\vspace{1em}\noindent\textbf{Keywords:} #1}

\newcommand{\remark}[1]{\par\noindent\textbf{Remark.} #1\par}

\newtheorem{theorem}{Theorem}[section]
\newtheorem{lemma}[theorem]{Lemma}
\newtheorem{claim}[theorem]{Claim}

\newtheorem{proposition}[theorem]{Proposition}

\theoremstyle{definition}
\newtheorem{definition}[theorem]{Definition}
\theoremstyle{plain}

\newcommand{\Exp}{\mathop{\mathbb{E}}}

\newcommand{\OPT}{\mathrm{OPT}}
\newcommand{\COPT}{\mathrm{COPT}}
\newcommand{\TOPT}{\mathrm{TOPT}}
\newcommand{\ALG}{\mathrm{ALG}}
\newcommand{\SOL}{\mathrm{SOL}}
\newcommand{\proj}{\mathtt{proj}}

\title{Online Connectivity Augmentation}
\author{Mohit Garg\thanks{Indian Institute of Science, Bengaluru, India ({garg@uni-bremen.de}). This work was initiated when the first author was supported by a fellowship from the Walmart Center for Tech Excellence at IISc (CSR Grant WMGT-23-0001).}
    \and Aditya Subramanian\thanks{Indian Institute of Science, Bengaluru, India ({adityasubram@iisc.ac.in}). Supported by the Walmart Center for Tech Excellence at IISc (CSR Grant WMGT-23-0001).}}
\date{}

\begin{document}
\maketitle
\thispagestyle{empty}

\begin{abstract}
The {\sc Connectivity Augmentation Problem} (CAP) is a fundamental problem in fault-tolerant network
design and has been extensively studied in the context of approximation algorithms. In this work, we
consider CAP in the \emph{online setting}: given a $k$-edge-connected graph $G$ with $n$ vertices
and a set $L$ of additional edges over the vertices of $G$, called links, online requests arrive one
by one, each specifying two vertices that need to be $(k+1)$-edge-connected. We start with the graph
$G$ and progressively add links to serve these requests. More specifically, upon the arrival of a
request $\{u,v\}$, we must \emph{immediately and irrevocably} add zero or more links from $L$ to the
graph so that $u$ and $v$ are $(k+1)$-edge-connected in the resulting augmented graph. The goal is
to minimize the total number of links added, and we evaluate an algorithm’s performance by its
\emph{competitive ratio} relative to an optimal offline solution.

Prior works by Gupta, Krishnaswamy, and Ravi (2009) and Naor, Umboh, and Williamson (2019) imply the
following bounds on the competitive ratio for online CAP: a randomized $\tilde{O}(k\log^3 n)$ upper
bound, a deterministic $O(\log n)$ upper bound for the special case $k = 1$ (also known as the
online {\sc Tree Augmentation Problem} (TAP)), and an $\Omega(\log n)$ lower bound. These bounds
also extend to the \emph{weighted setting}, where links have weights and the objective is to
minimize the total weight of the links added. We show the following.

\begin{description}
    \item[For Online CAP:] We present a deterministic $O(\log n)$-competitive algorithm, improving
        over the previously known randomized $\tilde{O}(k \log^3 n)$ bound. This result is tight up
        to constant factors, as an $\Omega(\log n)$ lower bound was previously known.
    \item[For Online CAP in the Random-Order Model:] In this model, the adversary fixes the request
        sequence in advance, and the requests are revealed to the algorithm in a uniformly random
        order. This restriction weakens the adversary, potentially allowing improved competitive
        ratios. However, we show that this is not possible: we prove an $\Omega(\log n)$ lower bound
        on the competitive ratio in this setting. Combined with our upper bound, this is tight up to
        constant factors. Notably, our lower bound holds even for TAP, the simplest possible
        setting.
    \item[For Online Weighted CAP:] We present a deterministic $O(\log^2 n)$-competitive algorithm.
        This improves upon the previously known randomized $\tilde{O}(k\log^3 n)$ bound.
    \item[For Online TAP:] We provide a deterministic $O(\log n)$-competitive algorithm with a
        simple combinatorial analysis, distinct from the one known earlier.
\end{description}
\end{abstract}

\keywords{Connectivity Augmentation, Fault-Tolerant Network Design, Online Algorithms.}

\newpage
\setcounter{page}{1}

\section{Introduction}

Networks deployed in the real world are prone to failures. Fault-tolerant network design aims to
construct networks that maintain a desired level of connectivity between nodes or groups of nodes,
while being resilient to a certain number of node or edge failures. In a graph $G$, we say that two
vertices $u$ and $v$ are $k$-edge-connected if deleting any set of fewer than $k$ edges does not
disconnect them. Equivalently, by Menger's theorem, $u$ and $v$ are $k$-edge-connected if and only
if there exist at least $k$ edge-disjoint paths between them. Furthermore, a graph is said to be
$k$-edge-connected if every pair of distinct vertices in the graph is $k$-edge-connected. One of the
most fundamental and extensively studied problems in fault-tolerant network design is the {\sc
Connectivity Augmentation Problem} (CAP). 
In CAP, the input consists of an undirected $k$-edge-connected graph $G$ with $n$ vertices and a set
of additional edges (called links) $L \subseteq \binom{V(G)}{2}$. The goal is to find a subset $L'
\subseteq L$ of minimum cardinality such that every pair of distinct vertices is
$(k+1)$-edge-connected in the augmented graph $G \cup L'$; i.e., $G \cup L'$ is
$(k+1)$-edge-connected.

In this work, we initiate the study of CAP in the \emph{online setting}, where we are given an
undirected $k$-edge-connected graph $G$ along with a set of links $L$, and online requests arrive
one by one, each specifying two vertices that need to be $(k+1)$-edge-connected. We start with the
graph $G$ and progressively add links to serve these requests. More specifically, upon the arrival
of a request $\{u,v\}$, we must \emph{immediately and irrevocably} add zero or more links from $L$
to $G$ so that $u$ and $v$ are $(k+1)$-edge-connected in the resulting augmented graph. The goal is
to minimize the total number of links added, and we evaluate an algorithm’s performance by its
\emph{competitive ratio} relative to an optimal offline solution.

In contrast to the unweighted setting, the weighted version, denoted WCAP, assigns non-negative
weights to links, and the objective is instead to minimize the total weight of the links added.

CAP, and consequently its weighted variant WCAP, are APX-hard even for $k = 1$, and have been
extensively studied in the context of approximation algorithms. The special case of CAP when $k = 1$
is known as the {\sc Tree Augmentation Problem} (TAP), since in this case, any cycle in the input
graph can be safely contracted to a single vertex, as cycles are already $2$-edge-connected;
repeatedly contracting such cycles reduces the input graph to a tree. This case has also been
extensively studied. The corresponding weighted variant is referred to as {\sc Weighted TAP} (WTAP).
For all four problems---CAP, WCAP, TAP, and WTAP---multiple 2-approximations (see, for
example,~\cite{J01}) were long known, and improving this ratio was considered a significant
challenge. In a long line of research, progressively better-than-2 approximation algorithms have
been developed~\cite{A19,CG18a,CG18,EFKN09,FGKS18,GKZ18,KN16,KN16b,N03,BGJ20,CTZ21,TZ21}. The
current best-known approximation ratio is 1.393 for both CAP and TAP~\cite{CTZ21}, and $1.5 +
\varepsilon$ for WCAP~\cite{TZ23} and WTAP~\cite{TZ22}, for any constant $\varepsilon > 0$.

The online version of connectivity augmentation has previously been studied only for the special
case $k = 1$, where Naor, Umboh, and Williamson~\cite{NaorUW22} presented a deterministic $O(\log
n)$-competitive algorithm for online WTAP. This guarantee is tight up to constant factors. For
arbitrary values of $k$, bounds can be inferred from a more general problem studied by Gupta,
Krishnaswamy, and Ravi~\cite{GuptaKR12}: the online {\sc Survivable Network Design Problem} (SNDP).
In their setting, the underlying graph is fixed with non-negative edge weights, and online requests
arrive in the form of vertex pairs $\{s_i, t_i\}$, each with an associated edge-connectivity
requirement $r_i$. The task is to maintain a subgraph that satisfies the required number of
edge-disjoint paths for all requests seen so far. They provide a randomized $\tilde{O}(r_{\max}
\log^3 n)$-competitive algorithm, where $r_{\max} = \max_i r_i$. Observe that WCAP can be embedded
as a special case by treating the given $k$-edge-connected graph as a fixed subgraph with
zero-weight edges and the links as additional weighted edges, with all edge-connectivity
requirements equal to $k+1$. To avoid parallel edges that could arise from superimposing the fixed
graph and the links, we can subdivide each edge of the original graph into two zero-weight edges.
Under this embedding, their result yields a $\tilde{O}(k \log^3 n)$-competitive algorithm for online
WCAP.

\subsection{Our Results}

We obtain tight or near-tight bounds for several online variants of the connectivity augmentation
problem, including its weighted version and the random-order model. Our main results are summarized
below.

\begin{itemize}

  \item \textbf{Online CAP.}  
  We present a deterministic $O(\log n)$-competitive algorithm for online CAP. This improves upon
  the previously known randomized $\tilde{O}(k \log^3 n)$ bound and matches the known $\Omega(\log
  n)$ lower bound up to constant factors.

  \begin{restatable}{theorem}{oncap}
  \label{thm:online-cap}
  There exists a deterministic $O(\log n)$-competitive algorithm for online CAP.
  \end{restatable}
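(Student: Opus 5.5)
We will reduce online CAP to an online covering problem on a structured family of cuts, and then solve it deterministically with a greedy/doubling argument that charges to the optimum.

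\textbf{Step 1: the cut structure.} The minimum cuts of size $k$ in the $k$-edge-connected graph $G$ form a cactus representation. If $k$ is odd, this is a tree. If $k$ is even, it is a cactus, and each cycle contributes cuts given by pairs of its edges. A request $\{u,v\}$ is satisfied exactly when every $k$-cut separating $u$ from $v$ is crossed by some chosen link. In cactus terms, the requirement is as follows. Map $u$ and $v$ to nodes of the cactus, and take the unique sequence of cactus cycles and bridges on the path between them. For each bridge on this path, some link must cover it. For each cycle, the links must cover every pair of edges that separates $u$'s side from $v$'s side on that cycle. A link $\ell=\{a,b\}$ covers a cut iff it crosses it. Its effect is to "cover" the cactus path between the images of $a$ and $b$: it covers every bridge on that path, and on each traversed cycle it covers the cut pairs that it crosses.

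\textbf{Step 2: reduce to a tree-like covering problem.} The standard trick is to replace each cycle of length $m$ by a star. The center is a new node, joined by edges to the cycle's nodes. A two-edge cut on the cycle corresponds to separating an arc, and a link crossing it enters the arc. Requiring that the $u$-arc and $v$-arc sides be connected under the links gives an instance resembling online TAP, but on a cactus. An alternative that I would try first is to keep the cactus and generalize the online TAP algorithm directly. Maintain the current augmented graph. When request $\{u,v\}$ arrives, consider the uncovered $k$-cuts separating $u$ from $v$ (after contracting $(k+1)$-connected classes, these again form a cactus/path-like structure). Then repeatedly buy a link chosen by a "farthest-reaching" rule: among links covering the first uncovered cut adjacent to $u$'s class, pick the one that extends farthest toward $v$ along the cactus path, and symmetrically from $v$.

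\textbf{Step 3: charging analysis.} Fix an optimal offline solution $\OPT$. Charge each purchased link to a link $\ell^*\in\OPT$ that covers the cut the algorithm was trying to cover. The goal is to show that each $\ell^*$ is charged $O(\log n)$ times. Following the intuition for online TAP, each time $\ell^*$ is charged, the farthest-reaching choice merges classes so that the component of the "$\ell^*$-span" containing a fixed endpoint at least doubles in size. Measured in number of original vertices, this gives the $O(\log n)$ bound. Because requests are pairs and cuts on cycles come in pairs of edges, I would handle each cycle by treating its two arcs as two parallel paths and using a doubling argument on each arc separately, losing only a constant factor.

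\textbf{Main obstacle.} The delicate step is the charging in Step 3 when the cactus has cycles (even $k$). A single $\OPT$ link may cross a cycle cut "the other way around", so the farthest-reaching rule does not obviously dominate it. I would first try to prove a monotonicity lemma. It would say that after the algorithm's links are added, the $(k+1)$-edge-connected classes restricted to a cycle form contiguous arcs, and the arc containing an endpoint of $\ell^*$ at least doubles with each charge. If this fails, the fallback is to use the fractional primal-dual covering framework for online set cover restricted to laminar/cactus-structured sets, which yields $O(\log n)$ via the laminarity of the cut family.
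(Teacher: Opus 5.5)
Your plan has a genuine gap in each of its two parts.

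\textbf{The tree part.} The link-selection rule does not deliver the doubling you invoke in Step~3, even for odd $k$, where the cactus is a tree and the problem is online TAP. You rank candidate links only by how far they reach along the $u$--$v$ cactus path. An adversary may present requests between adjacent vertices, and then every link covering the request reaches equally far, so the rule is vacuous and ties may be broken adversarially. Concretely, take a path $p_0p_1\cdots p_m$ rooted at $p_0$ with a pendant leaf $q_i$ at each $p_i$. Let the links be $\{p_0,q_i\}$ for all $i$ together with $\{p_0,p_m\}$, and let the requests be $p_{i-1}p_i$ for $i=1,\ldots,m$ in order. Your rule may buy $\{p_0,q_i\}$ at step $i$, paying $m$ while $\OPT=1$. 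The class containing $p_0$ grows by two vertices per step instead of doubling. The missing idea is a rule for which branch to extend into \emph{beyond} the request. The paper's bottommost link descends into the child subtree with the most vertices. That is exactly what yields $N(w_i)\ge 1+2N(w_{i+1})$ for successive branching vertices on $P(\ell_{\OPT})$, and hence $O(\log n)$ charges per $\OPT$ link. A depth-based ``farthest'' rule also fails: attach at each $p_i$ a pendant path of length $2(m-i)+1$, with links to its far end, instead of a leaf.

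\textbf{The cycle part.} More seriously, the even-$k$ case with genuine cycles is where the difficulty of the theorem lies, and there you offer only a conjecture, which is false as stated. On a $6$-cycle $0,1,\ldots,5$ with the single link $\{0,3\}$, the $3$-edge-connected classes are $\{0,3\},\{1\},\{2\},\{4\},\{5\}$, so classes need not be contiguous arcs. Your other suggestions also fail:
\begin{itemize}
\item Treating the two arcs separately does not work. Each $2$-cut separating $u$ from $v$ pairs an edge of one arc with an edge of the other, so the covering condition does not decompose.
\item Replacing a cycle by a star destroys the arc cuts.
\item The fallback rests on laminarity, which fails because two $2$-cuts of a cycle of length at least $4$ can cross. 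Generic online set cover gives only $O(\log^2 n)$-type bounds.
\end{itemize}

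\textbf{What the paper does.} The paper never extends the TAP doubling argument to cycles. After the DKL reduction to CacAP, it structures the instance: requests lie inside single cycles, links lie either inside a cycle or are uplinks between articulation vertices, and all decomposition-tree edges are covered. It then runs one online TAP instance on the decomposition tree, to which each request endpoint $v$ on the tree forwards $(r_c,v)$. Alongside this it runs an independent CycAP instance per cycle, solved by reduction to online edge-weighted Steiner forest on the crossing graph of the links. Unit costs and degree-$2$ Steiner nodes keep the edge-weighted optimum within a constant factor of the node-weighted one. A charging argument then gives $|\OPT_c|\le 17|\COPT_c^W|$ and $|\TOPT|\le 2|\COPT|$. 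Some mechanism of this kind for the cyclic part is what your plan lacks.
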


  \item \textbf{Online CAP in the Random-Order Model.}  
  In this model, the adversary fixes the entire request sequence in advance, and the algorithm
  receives requests in a uniformly random order. One might hope this weaker adversary allows better
  competitive ratios. However, we show that the $\Omega(\log n)$ hardness persists even against this
  weaker adversary.

  \begin{restatable}{theorem}{randorderlower}
  \label{thm:random-order-lower}
  Any algorithm for online CAP in the random-order model has competitive ratio $\Omega(\log n)$.
  This lower bound holds even for the special case of online TAP.
  \end{restatable}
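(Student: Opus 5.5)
The plan is to reduce the random-order model to the standard oblivious-adversary model through \emph{request duplication}. Repeating a request costs the offline optimum nothing, but it lets the adversary control the order in which \emph{distinct} requests first appear. Everything will be stated for TAP, so it also covers CAP with $k=1$. By Yao's principle it suffices to exhibit a distribution over TAP instances (a tree, a link set, and a fixed request sequence) on which every deterministic algorithm pays, in expectation, $\Omega(\log n)$ times $\OPT$.

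\textbf{Step 1 (oblivious hard distribution).} First I fix a distribution $\mathcal D$ over request sequences $\sigma=(r_1,\dots,r_m)$ on a fixed tree $T$ with links $L$. It should satisfy $\Exp[\ALG(\sigma)]\ge c\log n\cdot \Exp[\OPT(\sigma)]$ for every deterministic online algorithm, and it should have $m=O(\mathrm{poly}(n))$ distinct requests. The intended shape is a hidden-path construction in the spirit of Imase--Waxman, transplanted to trees. Take a complete binary hierarchy of depth $d=\Theta(\log n)$ and choose a uniformly random root-to-leaf branch. At level $i$ the requested pair is determined by the branch prefix, and links at level $i$ cover exactly one of the two children. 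The key properties to arrange are:
\begin{itemize}
\item $\OPT$ can cover all requests along the branch with $O(1)$ links, for instance one long link matched to the revealed branch;
\item an online algorithm guesses the next child wrong with probability $1/2$ at each level, and so pays $\Omega(1)$ extra per level.
\end{itemize}
This gives $\Omega(\log n)$ in total. If the previously known $\Omega(\log n)$ lower bound for online TAP is already given by such a randomized oblivious sequence, I would simply invoke it here.

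\textbf{Step 2 (boosting by multiplicities).} Given $\sigma=(r_1,\dots,r_m)$, form the multiset $\widehat\sigma$ in which $r_i$ appears $M^{\,m-i}$ times, with $M$ a large parameter (say $M=m^3$). The adversary fixes $\widehat\sigma$ and nature permutes it uniformly at random. Repeated requests are free once served and do not change $\OPT$, so $\OPT(\widehat\sigma)=\OPT(\sigma)$. Moreover, any online algorithm on $\widehat\sigma$ induces an online algorithm on the sequence of first occurrences, with the same cost.

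\textbf{Step 3 (first occurrences come in order).} For $i<j$, request $r_i$ has $M$ times more copies than $r_j$. So the probability that the first copy of $r_j$ precedes every copy of $r_i$ is at most $1/(M+1)$. A union bound over the $\binom{m}{2}$ pairs shows that, with probability $1-O(m^2/M)=1-o(1)$, the first occurrences arrive exactly in the order $r_1,\dots,r_m$. On this event the algorithm faces precisely $\sigma\sim\mathcal D$. Since costs are non-negative,
\[
\Exp[\ALG(\widehat\sigma)]\ \ge\ (1-o(1))\,c\log n\cdot\Exp[\OPT],
\]
after conditioning carefully. The conditioning is clean because the event depends only on the permutation, which is independent of the draw from $\mathcal D$. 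The number of vertices is unchanged, so the bound is $\Omega(\log n)$ in terms of $n$.

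\textbf{Main obstacle.} The main difficulty is Step 1: producing a genuinely \emph{oblivious} (non-adaptive) distribution for TAP with an $\Omega(\log n)$ gap, while keeping $\OPT=O(1)$ per instance or at least bounded in expectation. If I cannot cite it directly, I would first try a caterpillar or path tree whose links correspond to dyadic intervals, with the random branch encoding which half-interval is requested next. The constraint to check is that covering a request on a tree path requires covering every edge on it, and this must force a fresh link at each level unless the algorithm guessed the branch correctly.
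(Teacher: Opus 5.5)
Your Steps 2--3 are a correct black-box reduction, but only if the random-order adversary may submit a \emph{multiset} in which request $r_i$ is repeated $M^{m-i}$ times. Under that reading the model collapses: for any problem where repeated requests are free, your argument shows random order is no weaker than adversarial order, and the theorem becomes a restatement of the adversarial $\Omega(\log n)$ bound. The content of the statement, and what the paper proves, is that the bound survives when every request is a distinct pair.

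In TAP you can emulate multiplicities with distinct requests by subdividing a tree edge into a path whose edges are covered by exactly the same links. But with your ratio $M=m^3$ over $m=\Theta(\log n)$ levels, the tree grows to $M^{\Theta(\log n)}=n^{\Theta(\log\log n)}$ vertices, and you get only $\Omega(\log N/\log\log N)$ in terms of the new size $N$. Any scheme that recovers the exact order via a union bound needs $M=\Omega(m^2)$ and has the same problem.

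The missing idea is that you do not need the whole order recovered. It suffices that, for each level $i$, the first request among levels $\ge i$ lies at level $i$ with probability at least $1/2$. This needs only a factor-$2$ decay. Replace level-$i$ edges of a complete binary tree by paths of length $n/2^i$. Then the hidden path has $n/2^i$ requests at level $i$ and only $n/2^i-1$ at all deeper levels combined. You then sum over levels by linearity of expectation, and the tree stays at $O(n\log n)$ vertices.

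Step 1, which you call the main obstacle, is actually the easy part, but you have not pinned it down, and your fallback would fail. On a path, unweighted TAP is online interval cover, which is $O(1)$-competitive: buy the covering interval reaching farthest left and the one reaching farthest right. The instance that works is your first sketch made concrete. Add a link from the root to every leaf of the (subdivided) binary tree, and let the requests be all edges of a uniformly random root-to-leaf path, so $\OPT=1$.

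Call a level-$i$ request fresh if it precedes all other requests at levels $\ge i$. A lazy algorithm buys only on fresh requests. At that moment it holds at most one link through the deepest revealed edge, and only such a link can cover the new request. Hence, given the history, a fresh request is uncovered with probability at least $1/2$. Each level has a fresh request with probability at least $1/2$, so $\Exp[\ALG]\ge\frac14\log n$. This is exactly the paper's proof, and it needs no conditioning on an ordering event.
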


  \item \textbf{Online WCAP.}  
  We give a deterministic $O(\log^2 n)$-competitive algorithm for online WCAP, improving on the
  prior randomized $\tilde{O}(k \log^3 n)$ bound.

  \begin{restatable}{theorem}{wcaponline}
  \label{thm:weighted-cap}
  There exists a deterministic $O(\log^2 n)$-competitive algorithm for online WCAP.
  \end{restatable}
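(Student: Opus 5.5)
We plan to reduce online WCAP to an online covering problem over a structured family of cuts and then solve it by online primal-dual. The starting point is the cactus representation of the minimum cuts of the $k$-edge-connected graph $G$. A pair $\{u,v\}$ is $(k+1)$-edge-connected in $G\cup L'$ if and only if every minimum cut of $G$ separating $u$ and $v$ is crossed by some link of $L'$. We may assume $G$ has edge-connectivity exactly $k$, since otherwise there is nothing to do. The min cuts of $G$ are exactly the cuts of a cactus $\mathcal{C}$ of size $O(n)$, and each such cut corresponds to removing either a tree edge or a pair of edges on a common cycle of $\mathcal{C}$. A request $\{u,v\}$ thus maps to the nodes $\phi(u),\phi(v)$ of $\mathcal{C}$. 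It demands that every such $2$-cut of a cycle, or tree edge, lying between $\phi(u)$ and $\phi(v)$ be covered by a link, where a link $\{a,b\}$ covers a cut iff it separates $\phi(a)$ from $\phi(b)$.

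The algorithm will be a fractional-then-rounding scheme. First, we maintain a fractional solution $x\in[0,1]^L$ by the standard multiplicative-weights online covering method of Buchbinder--Naor. Whenever a request arrives, we repeatedly find a violated cut constraint $\sum_{\ell \text{ covers } C} x_\ell \ge 1$ and raise the variables of links covering $C$ multiplicatively, $x_\ell \leftarrow x_\ell(1+1/w_\ell)+1/(|L|w_\ell)$. The general guarantee gives $O(\log |L|)=O(\log n)$ competitiveness against the fractional optimum, which is at most $\OPT$, since each augmentation raises the primal by $O(1)$ times a dual increase that is feasible up to $O(\log n)$ scaling. The number of constraints is exponential only through the pairs of cycle edges, but a violated one can be found by a minimum-cut computation in $G$ augmented with capacities $x$.

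Second, we round online deterministically. Rather than a randomized threshold, we exploit the cactus structure. Each cycle of $\mathcal{C}$ with its $2$-cuts behaves like a path instance after cutting at one point, and the tree part behaves like TAP. We intend to root $\mathcal{C}$ and decompose it into $O(\log n)$ levels using heavy-path / centroid decomposition. Then any requested $u$--$v$ route meets $O(\log n)$ heavy paths or cycles. On each such path-like piece, covering the requested interval of cuts by links whose projections form intervals is an interval-cover problem. For this problem we want a deterministic online rounding that loses $O(1)$ per piece relative to fractional mass: once the fractional coverage of a sub-interval reaches $1/2$, we buy the link covering the farthest extent on each side. Charging each purchased link to $\Omega(1)$ of fractional mass that is ``used'' only $O(\log n)$ times, once per decomposition level, gives a total loss of $O(\log n)$ in rounding. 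Combined with the $O(\log n)$ fractional ratio, this yields $O(\log^2 n)$.

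The main obstacle is the deterministic rounding step. Specifically, we must show that a link bought for one piece is charged against fractional weight that is not reused unboundedly across overlapping requests and levels. On cycles the interval structure is cyclic, so two links together may be needed to cover a pair of edges. We would first try a ``two-sided farthest-reaching'' greedy per piece, proving via a laminarity argument on projected links that each unit of fractional $x_\ell$ pays for $O(1)$ purchases per level. If the cyclic case breaks this, the fallback is to split each cycle at its heaviest point into two paths, at a constant-factor cost.
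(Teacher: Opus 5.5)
\paragraph{The rounding step is missing.} Your plan has the right shape: an online fractional covering at $O(\log n)$, followed by a deterministic rounding that loses $O(\log n)$. But the rounding step carries the whole theorem, you leave it open, and the rule you sketch for it fails.

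\begin{itemize}
\item The rule ``once coverage reaches $1/2$, buy the farthest-reaching link on each side'' never looks at $w_\ell$. Take a path instance with $|L|=2$ and a single request edge $e$. One link of weight $1$ covers only $e$; another of weight $W$ covers $e$ and much more. After one update $x=(1/2,\,1/(2W))$ and coverage has reached $1/2$. Your rule then buys the weight-$W$ link, paying $W$ against $\OPT=1$.
\item Charging to ``$\Omega(1)$ of fractional mass'' is not available either, since the coverage of $1/2$ can be spread over many links, each with tiny $x_\ell$.
\item No weight-aware repair can lose only $O(1)$ per path-like piece. Meyerson's parking permit problem is a special case of online weighted interval cover on a single path. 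In its interval version every day lies in exactly $K$ permits, so online fractional covering is $O(\log K)$-competitive. Every deterministic algorithm, however, is $\Omega(K)$-competitive. Hence no deterministic online rounding can guarantee $O(1)$ loss against the online fractional solution.
\end{itemize}

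\paragraph{The structural reductions are also incorrect.}
\begin{itemize}
\item The $2$-cuts of a cycle $C$ are pairs $\{e_1,e_2\}$. A link covers such a cut iff its endpoints lie on different arcs of $C-\{e_1,e_2\}$. Cutting the cycle at one point turns only the cuts through that point into interval constraints.
\item Concretely, on the cycle $0,1,\dots,5$ cut at $\{5,0\}$, the link $\{0,5\}$ covers every path edge between $1$ and $4$. Yet the $2$-cut $\{\{1,2\},\{4,5\}\}$ separates $1$ from $4$ and stays uncovered. So ``cycles behave like paths'' is false. For the same reason, your fallback of splitting cycles into paths changes feasibility and comes with no constant-factor guarantee.
\item A $u$--$v$ route in a cactus may pass through $\Theta(n)$ cycles. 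The heavy paths of your decomposition are therefore chains of cycles, not intervals, and the claimed $O(\log n)$ rounding loss has no support.
\end{itemize}

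\paragraph{A simpler repair within your own framework.} Your constraint family is not exponential. The minimum cuts of $G$ number at most $\binom{n}{2}$ and are listed in advance by the cactus. So online WCAP is exactly online weighted set cover: the ground set is the set of min cuts, the sets are the links ($m=|L|\le\binom{n}{2}$), and each request is fed as the batch of min cuts separating $u$ and $v$. The deterministic algorithm of Alon, Awerbuch, Azar, Buchbinder and Naor is $O(\log m\log N)$-competitive, hence $O(\log^2 n)$ here. It consists of a multiplicative fractional update like yours, plus a potential-function derandomization that needs the ground set known in advance.

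\paragraph{How the paper does it.} The paper takes a different route. It passes to WCacAP and builds an NWSF graph with one node per link, incidence edges to projection vertex sets, and edges between crossing links. It proves via \Cref{prop:CacConCross} that feasibility is preserved exactly, then invokes the deterministic $O(\log^2 n)$ online NWSF algorithm. That detour also pays off elsewhere: the same graph, specialized to a single unweighted cycle, gives the $O(\log n)$ CycAP bound used for \Cref{thm:online-cap}.
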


  \item \textbf{Online TAP.}  
  We also provide a new $O(\log n)$-competitive algorithm for online TAP, featuring a simple and
  self-contained combinatorial analysis.

  \begin{restatable}{theorem}{taponline}
  \label{thm:online-tap}
  There exists a deterministic $O(\log n)$-competitive algorithm for online TAP.
  \end{restatable}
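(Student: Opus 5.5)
The plan is a simple deterministic greedy-style algorithm on the tree $T$ obtained by contracting $2$-edge-connected components. Its cost will be charged to a fixed offline optimum $\OPT \subseteq L$ through a combinatorial covering argument.

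\textbf{Reformulation.} A request $\{u,v\}$ is served exactly when every tree edge on the $u$--$v$ path $P_{uv}$ is covered by some added link. Here a link $\ell=\{a,b\}$ covers the edges of $P_{ab}$. So the online problem becomes: tree-edge sets $P_{u_iv_i}$ arrive, and we must keep them covered by links. Root $T$ arbitrarily.

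\textbf{Algorithm.} On arrival of a request, repeatedly pick an uncovered edge $e$ on the path, choosing the one of maximum depth, i.e.\ closest to a leaf. Consider every link $\ell$ that covers $e$. Since $e$ is a deepest uncovered edge, one endpoint of $\ell$ lies in the subtree below $e$, and the part of $\ell$ above $e$ climbs toward some ancestor $\mathrm{top}(\ell)$.
\begin{itemize}
\item If $\ell$ goes to a different subtree, use the lowest common ancestor as $\mathrm{top}(\ell)$.
\item Add the covering link of $e$ whose top is highest.
\end{itemize}
Among links covering $e$, this link dominates all others on the segment from $e$ up to the ancestor. Because of the two-sided structure, I would also add a second link chosen symmetrically for the part below $e$, or handle the two sides with separate up/down variants. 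This gives at most $2$ links per step.

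\textbf{Analysis via heavy-path or centroid charging.} Fix $\OPT$. For a link $o\in\OPT$, its path $P_o$ consists of two vertical segments. I will show that the algorithm's links charged to $o$ number $O(\log n)$. Charge each added link to an $\OPT$ link $o$ that covers the chosen edge $e$; one exists because $\OPT$ serves the request.

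The key claim is that successive charges to the same vertical segment of $o$ become geometrically farther apart. This needs a measure that halves with each charge. Since a vertical segment can be long without $n$ shrinking, I would use heavy-light decomposition to get that measure. A vertical segment meets $O(\log n)$ heavy paths. Within a single heavy path, the highest-top rule guarantees that after one charge at edge $e$, everything from $e$ up to $\mathrm{top}(o)\cap$ that heavy path is covered. So at most $O(1)$ further charges occur per heavy-path piece.

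\textbf{Main obstacle.} The delicate point is justifying that the link chosen with highest top covers at least as much of $o$'s segment above $e$ as $o$ itself. The chosen link might leave the heavy path early, which would break the charging. I would try to fix this by measuring tops only within $o$'s heavy-path pieces, choosing the link maximizing coverage along the current heavy path of $e$. I would then check that uncovered edges charged to $o$ are strictly decreasing in a potential that is a lexicographic pair (heavy-path index, depth). If that fails, a fallback is a fractional primal-dual (multiplicative-weights) covering of the $O(n)$ edge constraints with deterministic rounding, which loses only $O(\log n)$ here because each tree edge constraint is hit by one request at a time.
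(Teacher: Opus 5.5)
Your proposal has a genuine gap. The second, downward link is never actually specified, and the step meant to control it does not work.

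First, the ``main obstacle'' you identify is not where the difficulty lies. The part of any link covering $e$ that lies above $e$ runs along the path from $e$ to the root. So the highest-top link automatically covers all of $o$'s vertical segment above $e$, and no heavy-path issue arises on that side. Consequently every later uncovered request charged to $o$ lies \emph{below} $e$ on $P_o$, and the highest-top rule says nothing about how far below $e$ the chosen link reaches.

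Your claim that after one charge at $e$ only $O(1)$ further charges fall in the same heavy-path piece is therefore unsupported. Take a path $v_0,\dots,v_m$ rooted at $v_0$ (a single heavy path), links $\{v_0,v_j\}$ for all $j$, and requests $(v_{j-1},v_j)$ in increasing order of $j$. With ties broken badly, the top rule pays for every request, while the single link $\{v_0,v_m\}$ is optimal.

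Below $e$ the covering links fan out into different subtrees, so ``chosen symmetrically'' is not a well-defined rule, and natural choices fail. For example, taking the link with the deepest lower endpoint can be forced into $\Omega(\sqrt n)$ mistakes against one optimal link by hanging decoy paths of decreasing length off $P_o$. The fallback does not close the gap either. Generic online primal--dual set cover with rounding gives $O(\log|L|\cdot\log n)=O(\log^2 n)$. Your remark about edge constraints being hit one at a time does not remove a logarithm.

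The missing idea is a concrete downward rule paired with a potential that halves at every mistake. The paper reduces to uplinks and edge requests. Besides a topmost link, it adds a \emph{bottommost} link obtained by descending from $e$: whenever several child subtrees still contain links covering $e$, it enters the child whose subtree has the most vertices. Write $N(\cdot)$ for subtree size. If this link leaves $P_o$ at $w$, then $o$ continues into a child $x$ whose subtree is no larger than that of the child $y$ taken by the link, so $N(w)\ge 1+2N(x)$. All later uncovered requests charged to $o$ lie strictly below $w$. Hence the successive divergence vertices satisfy $N(w_i)\ge 1+2N(w_{i+1})$, and there are only $O(\log n)$ of them.

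Your heavy-path idea can be made to work the same way, but only if you apply it downward. Choose the covering link that follows the heavy path through the lower endpoint of $e$ as far down as possible. It then covers $P_o$ down to the vertex where $o$ leaves that heavy path. Each subsequent uncovered request charged to $o$ therefore lies beyond a new light edge of $P_o$, and a vertical path has at most $\log_2 n$ light edges. As written, though, the proposal contains neither such a rule nor this argument, and that is the heart of the proof.
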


\end{itemize}

\remark{The previous competitive ratio for WCAP, $\tilde{O}(k \log^3 n)$, scaled linearly with $k$,
which could itself grow with $n$. Our results eliminate this dependence on $k$ entirely. Moreover,
our online algorithms run in polynomial time, unlike the earlier method for WCAP, whose running time
was not guaranteed to be polynomial when $k = \omega(1)$.}

\subsection{Our Technique}

We obtain our upper bounds for online CAP and WCAP by first leveraging a classical result of Dinitz,
Karzanov, and Lomonosov~\cite{DKL76}, which reduces any $k$-edge-connected graph to a cactus while
preserving all global minimum cuts. This enables a reduction to the {\sc Cactus Augmentation
Problem} (CacAP). CacAP is a special case of CAP with $k = 2$ where the input graph is a
cactus---that is, a connected graph in which every edge lies on exactly one cycle. The reduction is
known to be approximation-preserving in the offline setting; we show that it can be adapted to the
online setting while preserving competitiveness.

To solve online CacAP and its weighted version (WCacAP), we draw inspiration from an offline
reduction of CAP to the Steiner tree problem~\cite{BasavarajuFGMRS14,BGJ20}. Our reduction
follows a similar high-level approach but introduces new technical elements tailored to the online
setting. Specifically, since connectivity must be augmented only for online-requested pairs (and not
all vertex pairs), we reduce to the online {\sc node-weighted Steiner forest} (NWSF) problem.
Crucially, we include not only degree-2 vertices (as in the offline reduction) but also all
articulation vertices---i.e., vertices that participate in multiple cycles of the cactus---as
Steiner terminals. This refinement is essential for preserving feasibility and competitiveness.

By applying the known $O(\log^2 n)$-competitive algorithm for online NWSF~\cite{Borst0V25}, we
obtain the same bound for online WCacAP, and consequently for online WCAP.

To obtain our $O(\log n)$-competitive algorithm for online CAP, we begin with the special case where
the cactus consists of a single cycle, referred to as the online {\sc Cycle Augmentation Problem}
(CycAP). Here, we reduce to the online {\sc edge-weighted Steiner forest} (EWSF) problem, which
admits a deterministic $O(\log n)$-competitive algorithm~\cite{BermanC97}. The reduction is
facilitated by the fact that all Steiner nodes (i.e., links) have degree 2 and unit weight, allowing
us to simulate node-weighted selection through edge-weighted selection with only a constant-factor
loss in competitiveness.

Our key technical contribution lies in handling the general CacAP instance. Since a cactus graph can
be viewed as a tree of cycles joined at articulation vertices, we reduce the full problem to (i) one
instance of online TAP over a tree defined by these articulation vertices, and (ii) one instance of
CycAP for each cycle in the cactus. These instances are solved independently using known algorithms
for online TAP and online Steiner forest. The idea is that the algorithm handling the cycle instances performs local optimizations within individual cycles, whereas the algorithm for the tree instance manages global, long-range optimizations across the cactus structure. We carefully analyze how each request decomposes into
subrequests handled by the tree and the cycles, and show via a charging argument that the combined
cost is within $O(\log n)$ of the offline optimum. This decomposition and analysis are the central
innovations of our work and yield a tight bound for online CAP.

To complement our upper bounds, we establish an $\Omega(\log n)$ lower bound on the competitive
ratio of online TAP, even under random-order arrivals, showing that the inherent hardness persists
despite the weaker adversary. The construction is based on a complete binary tree in which each edge
is replaced by a path whose length decreases geometrically with its level, resulting in a slightly
larger tree. Links are placed between the root and each leaf, and the online requests consist of
pairs of consecutive vertices along a uniformly random root-to-leaf path. As a result, the offline
optimum can satisfy all requests using a single link. A key insight is that requests from only the higher
levels are likely to appear early in the sequence, revealing little about the specific path being
followed. This uncertainty forces the algorithm to commit to incorrect links, incurring an expected
cost of $\Omega(\log n)$---matching our upper bounds up to constant factors.

Finally, we present a simple $O(\log n)$-competitive algorithm for online TAP, along with a
self-contained analysis. In contrast, the earlier algorithm by Naor et al.~\cite{NaorUW22}, which
also applies to the weighted version (online WTAP), is based on a more involved primal-dual
framework. For online TAP (and WTAP), each request can be assumed---without loss of generality---to
correspond to an edge in the underlying tree. For each such request, the algorithm selects two
links: one that connects to the highest ancestor and another that descends to the deepest reachable
vertex. The ``deepest'' vertex is determined via a greedy traversal, always descending into the
subtree with the most nodes. The key intuition is that each non-optimal link selected by the
algorithm eliminates a substantial portion of the tree, thereby limiting the adversary's ability to
mislead the algorithm in future steps. This structural property allows us to bound the total number
of such mistakes, ultimately yielding the desired competitive ratio.

\subsection{Further Related Work}
Connectivity and Tree Augmentation problems are closely related to several fundamental network
design problems. In the {\sc 2-Edge-Connected Spanning Subgraph} (2-ECSS) problem, the goal is to
find a minimum-cardinality spanning subgraph that is 2-edge-connected. Numerous better-than-2
approximation algorithms are known~\cite{KV94,CSS01,HVV19,SV14,GGJ23soda,KN23,GVS93,GargHL24}, for
this problem, with the current best approximation factor being
$\frac{5}{4}$~\cite{Bosch-Calvo00HA25}. However, obtaining a better-than-2 approximation for the
weighted version of 2-ECSS remains a major open problem. WTAP can be viewed as special
cases of the weighted 2-ECSS problem.

Other well-studied problems that generalize 2-ECSS and fall under the umbrella of weighted 2-ECSS
include the {\sc Matching Augmentation Problem} (MAP), studied
in~\cite{BDS22,CDGKN20,CCDZ23,GHM23}), and the {\sc Forest Augmentation Problem} (FAP), studied
in~\cite{GJT22,Felix25}. In MAP, the input includes a graph with a matching as a weight-zero
subgraph; for FAP, the zero-weight subgraph is a forest. The objective in both cases is to augment
the input with a minimum-cost set of additional edges to achieve 2-edge connectivity. These
augmentation problems and their weighted variants also present natural and compelling directions for
study in the online setting.

Connectivity in the online setting has also been extensively studied, especially in the context of
of Steiner tree and Steiner forest and their variants. See, for example,
\cite{HajiaghayiLP17,Borst0V25} and the references therein.

\subsection*{Organization of the Paper}
We begin by introducing the basic terminology and stating a few essential propositions in \Cref{sec:prelim}. 

In \Cref{sec:onlineCAP}, we present the $O(\log n)$-competitive algorithm for online CAP (\Cref{thm:online-cap}), assuming certain statements about the Cycle (\Cref{thm:unwtcycap}) and Tree Augmentation (\Cref{thm:online-tap}) problems, which are proved later. 

\Cref{sec:ROlowerbound} establishes an $\Omega(\log n)$ lower bound for CAP under random-order arrivals (\Cref{thm:random-order-lower}). 

In \Cref{sec:logsqnWCAP}, we first give a reduction from CacAP to the node-weighted Steiner forest problem, thereby obtaining an $O(\log^2 n)$-competitive algorithm for WCAP (\Cref{thm:logsqcacap}). We then apply a black-box reduction to the edge-weighted Steiner forest problem to derive an $O(\log n)$-competitive algorithm for CycAP (\Cref{thm:unwtcycap}). 

Finally, in \Cref{sec:onlineTap}, we present a simple $O(\log n)$-competitive algorithm for TAP (\Cref{thm:online-tap}). 

The proofs of the propositions stated in the preliminaries are provided in \Cref{app:captocacap}, \Cref{sec:transitivity}, and \Cref{app:connectedCrossing}.

%%%%%%%%%%%%End of Section%%%%%%%%%

\section{Preliminaries} \label{sec:prelim}

The performance of an online algorithm is measured by its \emph{competitive ratio}.

\begin{definition}[Competitive Ratio]
Let $\ALG$ be an online algorithm for connectivity augmentation.  We say that $\ALG$ is
\emph{$c$-competitive} if, for every $k$-edge-connected input graph $G$, link set $L$, and request
sequence $\sigma$,
\[
\ALG(G,L,\sigma)\; \le\; c \,\cdot\, \OPT(G,L,\sigma),
\]
where $\ALG(G,L,\sigma)$ is the cost incurred by $\ALG$ and $\OPT(G,L,\sigma)$ is the cost of an
optimal offline solution on the same instance.
\end{definition}

For randomized algorithms, or when the adversary reveals the requests in a uniformly random order, the
inequality is understood in expectation, i.e., the left-hand side is $\Exp[\ALG(G,L,\sigma)]$.

We now formally define a cactus which is central to our work and analysis.

\begin{definition}[Cactus Graph and Articulation Vertex]
A \textbf{cactus} is a connected graph in which every edge lies in exactly one cycle. A vertex that
lies on two or more cycles of the cactus is called an \textbf{articulation vertex}.
\end{definition}

\noindent\textbf{Reduction to cactus augmentation.}
Cactus graphs play a central role in our approach because the following reduction transforms any
connectivity augmentation instance into an equivalent cactus augmentation instance, while preserving
the competitive ratio.

\begin{restatable}{proposition}{CapFromCacap}
\label{thm:cap-from-cacap}
If there exists a $c$-competitive online algorithm for the cactus augmentation problem (CacAP or
WCacAP), then there exists a $c$-competitive online algorithm for the connectivity augmentation
problem (CAP or WCAP, respectively). 
\end{restatable}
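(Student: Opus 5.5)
We will use the cactus representation of Dinitz, Karzanov, and Lomonosov~\cite{DKL76}. Let $G$ be $k$-edge-connected on vertex set $V$. There is a cactus $\mathcal{C}$ together with a map $\phi : V \to V(\mathcal{C})$ (some cactus nodes may be empty) such that the minimum cuts of $G$ (those with exactly $k$ edges) correspond exactly to the minimum cuts of $\mathcal{C}$. Such a cut of $\mathcal{C}$ removes one tree edge, or, after subdividing, two edges of a single cycle. Concretely, for every $S \subseteq V$ with $|\delta_G(S)| = k$ there is a set $X \subseteq V(\mathcal{C})$ with $|\delta_{\mathcal{C}}(X)| = 2$ and $S = \phi^{-1}(X)$, and conversely every 2-cut $X$ of $\mathcal{C}$ gives a $k$-cut $\phi^{-1}(X)$ of $G$. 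Tree edges (cut-edges of the cactus) are doubled, which turns $\mathcal{C}$ into a genuine cactus in which every edge lies on exactly one cycle; its min-cut value is then $2$.

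We map each link $uv \in L$ to the link $\phi(u)\phi(v)$ of the same weight, discarding links with $\phi(u)=\phi(v)$. Such links cover no min cut, so an optimal solution never needs them. A request $\{u,v\}$ is mapped to $\{\phi(u),\phi(v)\}$. If $\phi(u)=\phi(v)$, then no $k$-cut separates $u$ and $v$, so they are already $(k+1)$-edge-connected and the request is ignored.

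The main lemma is the following: for any link set $F \subseteq L$ with image $F'$, the vertices $u,v$ are $(k+1)$-edge-connected in $G \cup F$ if and only if $\phi(u),\phi(v)$ are $3$-edge-connected in $\mathcal{C} \cup F'$. By Menger's theorem, $u,v$ are $(k+1)$-edge-connected in $G \cup F$ exactly when every $k$-cut $S$ of $G$ separating $u$ from $v$ is crossed by some link of $F$. Every other cut already has at least $k+1$ edges of $G$, and a set $S$ with $|\delta_G(S)|<k$ is impossible. The correspondence sends $k$-cuts separating $u,v$ to 2-cuts $X$ of $\mathcal{C}$ separating $\phi(u),\phi(v)$. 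A link $ab$ crosses $S=\phi^{-1}(X)$ if and only if $\phi(a)\phi(b)$ crosses $X$. The same Menger argument in $\mathcal{C}$ completes the equivalence.

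The online algorithm keeps a mapped CacAP (or WCacAP) instance and runs the $c$-competitive cactus algorithm on it. Whenever that algorithm buys a cactus link $\ell'$, we buy one fixed preimage $\ell \in L$. For weighted instances we take the minimum-weight preimage, which we may assume by keeping only the cheapest link per cactus pair. By the lemma, every request is served immediately and irrevocably, and the cost of our algorithm equals the cost of the cactus algorithm.

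For the comparison with $\OPT$, map an optimal offline solution of the CAP instance to the cactus. By the lemma it is feasible for the cactus request sequence, and its cost does not increase. Hence $\OPT_{\mathrm{cac}} \le \OPT$, which gives $\ALG \le c\cdot\OPT_{\mathrm{cac}} \le c\cdot \OPT$. The instance size is no larger: the cactus has $O(n)$ nodes, and empty nodes only help if we want $n$ to stay the same. The structural facts are also needed in the other direction, namely that every 2-cut of the cactus corresponds to a $k$-cut of $G$. This is needed because we must not confuse cactus cuts coming from empty nodes.

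The main obstacle is stating the DKL correspondence precisely enough for the lemma, in particular handling empty cactus nodes and cut-edges. An empty node could make a cactus 2-cut $X$ have $\phi^{-1}(X)=\emptyset$. Such a cut never separates images of real vertices, so it is irrelevant for requests, but the argument must note this explicitly. We would take the standard statement from~\cite{DKL76} and verify these details in the appendix.
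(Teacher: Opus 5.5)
Your proposal is correct and follows essentially the same route as the paper: apply the Dinitz--Karzanov--Lomonosov cactus representation, map links and requests through $\phi$, and prove that a link set serves a request in $G$ if and only if its image serves the mapped request in the cactus, using both directions of the min-cut correspondence exactly as in the paper's feasibility claim. Your extra bookkeeping fills in details the paper leaves implicit without changing the argument: links and requests that collapse under $\phi$, empty cactus nodes, keeping the cheapest preimage per cactus pair so that $\mathrm{OPT}_{\mathrm{cac}} \le \OPT$, and the $O(n)$ size of the cactus for $n$-dependent ratios.
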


Such a reduction is known in the offline setting via~\cite{DKL76}, and we adapt it to the online
setting. The proof of \Cref{thm:cap-from-cacap} is provided in \Cref{app:captocacap}.

\noindent\textbf{Cactus augmentation machinery.} Given that connectivity augmentation reduces to
cactus augmentation, we discuss some basic definitions and propositions that will help us in
developing and analyzing online algorithms for cactus augmentation. These definitions and
propositions are adapted from~\cite{BGJ20}, with minor modifications where necessary.

Given a graph $G$, a cut is a partition of $V(G)$ into two parts. 
Given a cut $(V_1, V_2)$ in a cactus graph, we say that a link 
$\ell = \{u, v\}$ \textit{covers} the cut if $u \in V_1$ and $v \in V_2$, or vice versa.
A key component of our approach for cactus graphs
is to decompose any given link based on the cycles a path in the cactus between the link's endpoints traverses. This decomposition gives rise to
the notions of a link's projection vertex set and its projections onto cycles. Sometimes it will be
useful to decompose a pair of distinct vertices which need not be a link, so the following
definition is stated more generally.
%,and its head, tail, and middle segments.

\begin{definition}[Link Decomposition]
Let $G$ be a cactus and let $\ell=(u,v)$ be an arbitrary vertex pair consisting of two distinct
vertices. Note that although links are inherently unordered pairs, we will, for technical
convenience, sometimes treat them as ordered.

\begin{enumerate}
    \item \textbf{Projection Vertex Set:}
        Consider all simple paths between $u$ and $v$ in the cactus graph.
        Let $(c_1, c_2, \dots, c_k)$ be the sequence of articulation vertices (excluding $u$ and
        $v$) that appear on every such path from $u$ to $v$, in order. The \textbf{Projection
        Vertex Set} of $\ell$ is the ordered set of vertices $\mathcal{V}(\ell) = (u, c_1, \dots,
        c_k, v)$.

    \item \textbf{Projection on a Cycle:} The Projection Vertex Set $\mathcal{V}(\ell)$ decomposes
        $\ell=(u,v)$ %the path between $u$ and $v$
        into a set of segments. Each segment $(x,y)$, where $x, y$ are
        consecutive vertices in $\mathcal{V}(\ell)$, lies on a unique cycle $C$. We call such a
        segment $(x,y)$ the \textbf{projection} of $\ell$ onto the cycle $C$, denoted
        $\proj_C(\ell)$.
\end{enumerate}
\end{definition}

The projection vertex set of a link $\mathcal{V}(\ell)$ characterizes which vertex pairs become
$3$-edge-connected when $\ell$ is added to $G$.  
We have the following two simple propositions.

\begin{restatable}{proposition}{transitivity}\label{prop:transitivity}
    If vertices $u$ and $v$ become $3$-edge-connected in the cactus $G$ after adding some links,
    then every pair of distinct vertices $(x,y) \in \mathcal{V}((u,v)) \times \mathcal{V}((u,v))$ is
    also $3$-edge-connected.
\end{restatable}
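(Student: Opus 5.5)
The plan is to prove 3-edge-connectivity by a cut argument, using Menger's theorem: $x$ and $y$ are 3-edge-connected in $G \cup L'$ if and only if every cut separating $x$ from $y$ is crossed by at least three edges of $G \cup L'$. Here $L'$ is the set of added links. So fix two distinct vertices $x, y \in \mathcal{V}((u,v))$ and an arbitrary cut $(S, V\setminus S)$ with $x \in S$ and $y \notin S$. The goal is to show that this cut is crossed by at least three edges of $G \cup L'$.

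\textbf{Step 1 (few cactus edges crossing).} In a cactus every cut is crossed by an even number of edges, at least two. If at least four cactus edges cross $(S, V\setminus S)$, we are done. So assume exactly two cactus edges cross it. These two edges then lie on a single cycle $C$: removing two edges from different cycles cannot disconnect a cactus, since each cycle stays connected after removing one of its edges.

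\textbf{Step 2 (the cut also separates $u$ from $v$).} The key claim is that any 2-cut separating $x$ from $y$ also separates $u$ from $v$. Removing the two crossing edges $e_1, e_2$ of $C$ splits $C$ into two arcs. The graph then splits into two components: one arc together with everything hanging off it, and the other arc together with everything hanging off it. Since $x$ and $y$ lie on $\mathcal{V}((u,v))$, they appear in order along every $u$--$v$ path, say $u, \dots, x, \dots, y, \dots, v$ (or with $x$ and $y$ swapped).

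Suppose, for contradiction, that $u$ and $v$ lie on the same side, say in $S$. Then $y$ lies in the other component. Every simple $u$--$v$ path passes through $y$, so some simple $u$--$v$ path must cross the cut into $y$'s component and return. This is possible only through the two edges $e_1, e_2$. But a simple path that enters the cycle $C$ leaves it at a single articulation vertex, and it uses at most one of the two arcs between its entry and exit points on $C$. Making this precise is the main obstacle.

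\textbf{Handling the obstacle.} I would argue as follows. Let $a$ and $b$ be the first and last vertices of $C$ on the $u$--$v$ path; these are well defined because every path from $u$ to $v$ touches $C$ in a single contiguous stretch of the tree of cycles. Then $u$'s side of the cactus attaches to $C$ only at $a$, and $v$'s side only at $b$. The vertex $y$ appears on every simple $u$--$v$ path. If $y$ lies off $C$, then it lies on the $u$-side or the $v$-side of the tree of cycles, and hence in the same component as $a$ or $b$. If $y$ lies on $C$, then $y \in \{a, b\}$, because both $a$--$b$ arcs are valid routes and $y$ lies on both of them. In either case $y$ is in the component of $a$ or of $b$.

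Since $u$ and $v$ are on the same side, $a$ and $b$ are also on that side: $a$ is connected to $u$ without using $C$-edges, and likewise $b$ to $v$. So $y$ is on that side too, which is a contradiction. The same reasoning applies with the roles of $x$ and $y$ exchanged. Hence the cut separates $u$ from $v$.

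\textbf{Step 3 (conclude).} Since $u$ and $v$ are 3-edge-connected in $G \cup L'$, the cut is crossed by at least three edges of $G \cup L'$. It was an arbitrary cut separating $x$ from $y$, so $x$ and $y$ are 3-edge-connected.

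The cleanest way to write Step 2 is probably to root the block-cut tree of the cactus and use the fact that the articulation vertices common to all $u$--$v$ paths are exactly the cut vertices on the block-cut tree path from $u$ to $v$.
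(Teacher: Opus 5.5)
Your proof is correct and follows the paper's route. You show that every 2-cut $F$ of the cactus separating $x$ from $y$ also separates $u$ from $v$, and then use that $u,v$ are 3-edge-connected in $G \cup L'$; your Steps 1 and 3 make explicit the cut-counting the paper leaves implicit. The ``main obstacle'' in Step 2 is not really one, and the paper handles it in a line: any $u$--$v$ path in $G - F$ is a simple $u$--$v$ path of $G$, so it contains both $x$ and $y$ and hence an $x$--$y$ subpath avoiding $F$. Equivalently, if $u,v$ were on the same side, that side is connected in $G - \{e_1,e_2\}$ and gives a $u$--$v$ path missing $y$, so the attachment vertices $a,b$ and the block-cut tree are unnecessary.
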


The above proposition is proved in \Cref{sec:transitivity}.
Moreover, when we augment $G$ with a single link $\ell$, its endpoints immediately become
$3$-edge-connected, leading to the following immediate consequence.

\begin{proposition}
    If a link $\ell$ is added to a cactus $G$, then every pair of distinct vertices $(u,v) \in
    \mathcal{V}(\ell) \times \mathcal{V}(\ell)$ has three edge-disjoint paths between them.
\end{proposition}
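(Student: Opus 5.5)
The statement follows from \Cref{prop:transitivity} once we know that the two endpoints of $\ell=(u,v)$ are $3$-edge-connected in $G\cup\{\ell\}$. So the plan is to establish that fact first and then invoke \Cref{prop:transitivity} with the augmenting link set $\{\ell\}$. Since $\mathcal{V}(\ell)=\mathcal{V}((u,v))$ by definition, this yields exactly the claim: every pair of distinct vertices in $\mathcal{V}(\ell)\times\mathcal{V}(\ell)$ is $3$-edge-connected, and by Menger's theorem this means there are three edge-disjoint paths between them.

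To show $u$ and $v$ are $3$-edge-connected in $G\cup\{\ell\}$, I would argue via cuts. By Menger's theorem it suffices that every cut $(V_1,V_2)$ separating $u$ from $v$ has at least three crossing edges.
\begin{itemize}
\item A cactus is $2$-edge-connected, because every edge lies on a cycle. Hence every cut of $G$ already has at least two crossing edges of $G$.
\item The link $\ell=\{u,v\}$ itself crosses every cut separating $u$ and $v$, contributing one more edge.
\item Therefore every such cut has at least three crossing edges in $G\cup\{\ell\}$.
\end{itemize}
Alternatively, one can exhibit the paths directly: the single edge $\ell$ is one path, and the two edge-disjoint $u$--$v$ paths in $G$ guaranteed by $2$-edge-connectivity are the other two. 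Since these two paths lie entirely in $G$, they are disjoint from $\ell$.

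There is essentially no obstacle here. The only point that needs care is making sure \Cref{prop:transitivity} applies to an arbitrary set of added links, in particular the singleton $\{\ell\}$. We also use that $u\neq v$, which holds because links are pairs of distinct vertices.
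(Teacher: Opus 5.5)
Your proof is correct and follows the paper's own route. The paper also notes that the endpoints of $\ell$ immediately become $3$-edge-connected, then gets the claim as a direct consequence of \Cref{prop:transitivity}; your justification of the first step (a cactus is $2$-edge-connected, and $\ell$ adds a third edge across every $u$--$v$ cut) fills in the detail the paper leaves implicit.
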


We now define the notion of when two links ``cross".

\begin{definition}[Crossing Links] \label{def:links}
Let $G=(V,E)$ be a cactus graph.
\begin{enumerate}
    \item \textbf{Within a cycle:} Let $\ell = (x, y)$ and $\ell' = (x', y')$ be two links whose
        endpoints are all in the same cycle $C$. We say that $\ell$ and $\ell'$ \textbf{cross in C}
        if one of the following two conditions holds:
    \begin{enumerate}
        \item The links share exactly one endpoint (e.g., $y=x'$ and $x \neq y'$).
        \item The endpoints of the two links are distinct and they interleave on the cycle. That is,
            traversing a simple path from $x$ to $y$ along an arc of $C$ encounters exactly one
            vertex from $\{x', y'\}$ as an internal vertex.
    \end{enumerate}
    \item \textbf{General case:} Two links $\ell$ and $\ell'$ are said to \textbf{cross} if they
        share exactly one common end-point, or if there exists a projection $\proj_C(\ell)$ of
        $\ell$ and a projection $\proj_C(\ell')$ of $\ell'$ onto the same cycle $C$, such that these
        two projections cross in $C$.
\end{enumerate}
\end{definition}

In \Cref{fig:crossing_links} we are given links $\ell_a=\{a_1,a_6\}$ and $\ell_b=\{b_1,b_4\}$.
Their projection vertex sets are $\mathcal{V}(\ell_a)=(a_1,a_2,\ldots,a_6)$ and
$\mathcal{V}(\ell_b)=(b_1,b_2,\ldots,b_4)$ respectively. Note here that even though some path from
$a_1$ to $a_6$ might pass through the articulation vertex $b_2$, \textit{every} path between the
vertices would not, and hence $b_2$ is not in the projection vertex set of $\ell_a$. Further we can
say that $\ell_a$ and $\ell_b$ cross since their projections cross within the cycle
$\{b_2,a_4,b_3,a_3\}$. For another example of crossing links see \Cref{fig:cac_cross}.
The notion of crossing links enables us to determine whether adding a set of links \(A \subseteq L\)
to the cactus \(G\) makes a given pair of vertices \(u\) and \(v\) 3-edge-connected. This is
formalized in the following proposition.

\begin{figure}
    \centering
    \includegraphics[scale=1.2,page=5]{fig_cactus.pdf}
    \caption{Links $\ell_a$ and $\ell_b$ cross, since their projections $(a_3,a_4)$ and $(b_2,b_3)$ cross within a cycle.}
    \label{fig:crossing_links}
\end{figure}

\begin{restatable}{proposition}{CanConCross}\label{prop:CacConCross}
  If we add a set of links \(A \subseteq L\) to a cactus \(G\), then a pair of distinct vertices
  \((u, v)\) are \(3\)-edge-connected in \(G \cup A\) if and only if there exists a sequence of
  links in \(A\) such that  
    (i) $u$ lies in the projection set of the first link,  
    (ii) $v$ lies in the projection set of the last link, and  
    (iii) every two consecutive links in the sequence cross.  
We refer to such a sequence as a \emph{sequence of crossing links from $u$ to $v$}.
\end{restatable}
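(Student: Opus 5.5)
We prove the two directions separately. Throughout we use the fact that in a cactus every $2$-edge cut consists of two edges of a single cycle $C$. Removing two edges $e,f$ of $C$ splits $C$ into two arcs. Every vertex of $G$ hangs off exactly one vertex of $C$ through the tree-of-cycles structure, so the two arcs together with everything hanging off them give the two sides of the cut. Hence $u,v$ are $3$-edge-connected in $G\cup A$ if and only if every such $2$-cut $(S,\bar S)$ separating $u$ from $v$ is covered by some link of $A$.

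\emph{Sufficiency.} Suppose a sequence $\ell_1,\dots,\ell_m$ of crossing links from $u$ to $v$ exists. By the preceding proposition, adding $\ell_i$ makes all pairs in $\mathcal{V}(\ell_i)$ $3$-edge-connected. Since $3$-edge-connectivity is an equivalence relation, it suffices to show that whenever $\ell_i$ and $\ell_{i+1}$ cross, some vertex of $\mathcal{V}(\ell_i)$ is $3$-edge-connected to some vertex of $\mathcal{V}(\ell_{i+1})$ in $G\cup\{\ell_i,\ell_{i+1}\}$.

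If the two links share an endpoint, we are done immediately. Otherwise their projections $(x,y)$ and $(x',y')$ cross in a common cycle $C$, and $x,y,x',y'\in C$.
\begin{itemize}
\item If they share exactly one endpoint, say $y=x'$, the common vertex lies in both projection sets and we are done.
\item If they interleave, we show that $x$ and $x'$ are $3$-edge-connected. Take any $2$-cut separating $x$ from $x'$. It is formed by two edges of some cycle; we may take it to be $C$, since a cut from another cycle does not separate vertices of $C$. The two edges split $C$ into two arcs, one containing $x$ and one containing $x'$. By interleaving, either $y$ lies on the arc of $x'$, so $\ell_i$ covers the cut, or $y$ lies on the arc of $x$, in which case $x$ and $y$ both lie on one arc while $x'$ lies on the other. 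Interleaving then forces $y'$ onto the arc of $x$, so $\ell_{i+1}$ covers the cut.
\end{itemize}
We also use that a link covers a $2$-cut of $C$ exactly when its projection onto $C$ has endpoints on the two arcs. This holds because the endpoints of a link hang off the endpoints of its projection onto $C$.

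\emph{Necessity.} Suppose $u,v$ are $3$-edge-connected in $G\cup A$. Build the crossing graph on $A$. Let $X$ be the set of vertices $w$ reachable from $u$, meaning that $w=u$ or $w$ lies in $\mathcal{V}(\ell)$ for some link $\ell$ in the crossing-component reachable from links containing $u$ in their projection sets. Assume for contradiction that $v\notin X$; we will exhibit an uncovered $2$-cut separating $u$ and $v$.

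Let $c$ be the first vertex of $\mathcal{V}((u,v))$ that is not in $X$. By Proposition~\ref{prop:transitivity}, we may restrict attention to the projection of $(u,v)$ lying on a cycle $C$ that contains the last vertex $p\in X$ of this order together with $c$.

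Now consider the links of $A$ reachable from $u$ whose projections onto $C$ are nontrivial. These projections pairwise do not cross except through chains. Their endpoints on $C$, together with $p$, form a set $Z$ of vertices on $C$ that is contained in one arc avoiding $c$. Equivalently, the projections form a laminar, non-interleaving family, and their union is a connected arc-structure. Choose the two edges of $C$ bounding the minimal arc containing $Z$ but not $c$. Every reachable link has both projected endpoints on this arc. Every link not reachable from $u$ whose projection onto $C$ had one endpoint inside the arc and one outside would interleave with, or share an endpoint with, a reachable projection, and so would itself be reachable. Hence this $2$-cut is uncovered, which is a contradiction.

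\emph{Main obstacle.} The necessity direction is the hard part, specifically two claims:
\begin{enumerate}
\item the reachable projections on $C$ span an arc excluding $c$;
\item any link covering that arc's boundary must cross some reachable projection.
\end{enumerate}
A link whose endpoint is strictly inside the arc yet crosses no reachable projection must lie in a gap between nested projections. Such a link covers no cut that matters, but the arc boundary must be chosen carefully to handle this. I would define the arc as the connected closure, under interleaving, of reachable endpoints on $C$, and then verify both claims by case analysis on endpoint positions.
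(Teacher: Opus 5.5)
Your sufficiency direction is correct and takes a different route from the paper. The paper shows directly that a separating $2$-cut missed by every $\ell_i$ would place two consecutive links on opposite sides of the cut. You instead combine the single-link proposition, transitivity of $3$-edge-connectivity, and a local check for crossing pairs. A side benefit of your route is that it also handles consecutive links that merely share a projection vertex.

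The necessity direction has a genuine gap, and the problem is worse than Claims 1 and 2 being deferred. Claim 2 is false, and the implication from $3$-edge-connectivity to a crossing sequence fails as the statement is written. Let $G$ be four triangles $C_1,\dots,C_4$ glued at a single vertex $z$. Pick $a\in C_1$, $b\in C_2$, $c\in C_3$, $d\in C_4$ (all different from $z$), and let $A=\{\{a,b\},\{c,d\}\}$. Every $2$-cut separating $a$ from $c$ lies in one of two places:
\begin{itemize}
\item in $C_1$, where it separates $a$ from $z$ and is covered by $\{a,b\}$ (since $b$ hangs off $z$);
\item in $C_3$, where it separates $z$ from $c$ and is covered by $\{c,d\}$ (since $d$ hangs off $z$).
\end{itemize}
Hence $a$ and $c$ are $3$-edge-connected in $G\cup A$.

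However, the only link whose projection set contains $a$ is $\{a,b\}$, with $\mathcal{V}(\{a,b\})=(a,z,b)$. The only link whose projection set contains $c$ is $\{c,d\}$, with $\mathcal{V}(\{c,d\})=(c,z,d)$. These two links share no endpoint, and their projections lie on $C_1,C_2$ and on $C_3,C_4$ respectively, so they never meet on a common cycle. They do not cross, and no crossing sequence from $a$ to $c$ exists.

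In your argument this is the case $X=\{a,z,b\}$, $p=z$, $C=C_3$, $Z=\{z\}$. The boundary cut of $C_3$ at $z$ is covered by $\{c,d\}$. Its projection $(c,z)$ meets $Z$ only in $p$ and crosses nothing reachable, because $p$ entered $X$ through a link with no projection on $C$.

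The paper's own necessity proof breaks at the same point. It applies the single-cycle case between consecutive vertices of $\mathcal{V}((u,v))$ and concatenates; on this example the concatenation is $\{a,b\},\{c,d\}$, which is not a crossing pair. The statement does hold when $u$ and $v$ lie on a common cycle. In general it is repaired by weakening (iii) to ``$\ell_i$ and $\ell_{i+1}$ cross or $\mathcal{V}(\ell_i)\cap\mathcal{V}(\ell_{i+1})\neq\emptyset$''. This weaker relation is exactly what paths through terminal nodes of $H$ encode in Lemma~\ref{lem:reduction_preservation}.

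With the weakened relation, your sufficiency proof goes through verbatim. For necessity you no longer need the arc construction:
\begin{itemize}
\item By Proposition~\ref{prop:transitivity}, $p$ and $c$ are $3$-edge-connected, hence $3$-edge-connected in $C$ together with the projections of $A$ onto $C$.
\item The single-cycle case (Proposition~\ref{prop:connectedCrossing}) gives a chain of crossing projections from $p$ to $c$.
\item The first lifted link contains $p$. It is therefore adjacent to the reachable link that put $p$ into $X$, or it is a starting link if $p=u$.
\item Consecutive lifted links cross, and the last one contains $c$, contradicting $c\notin X$.
\end{itemize}
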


We prove \Cref{prop:CacConCross} in \Cref{app:connectedCrossing}.

%%%%%%%%%%%%End of Section%%%%%%%%%

\section{An \texorpdfstring{$O(\log n)$}{O(log n)}-Competitive Algorithm for Online CAP} \label{sec:onlineCAP}

In this section, we obtain an $O(\log n)$-competitive algorithm for CAP, thereby proving
\Cref{thm:online-cap}. From \Cref{thm:cap-from-cacap}, we know that online CAP reduces to online
CacAP. Thus, it suffices to prove the following theorem.

\begin{theorem}\label{thm:logncacap}
    There exists an $O(\log n)$-competitive online algorithm for the online Cactus
    Augmentation Problem (CacAP).
\end{theorem}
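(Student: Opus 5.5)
We follow the decomposition outlined in the introduction. We treat the cactus $G$ as a tree of cycles and build an auxiliary tree $T$. Its nodes are the articulation vertices of $G$ together with one node $\hat C$ for each cycle $C$. Each cycle node is joined to the articulation vertices lying on that cycle.

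Each link $\ell=(u,v)$ is handled through its projection vertex set $\mathcal{V}(\ell)$. Its projections $\proj_C(\ell)$ onto the traversed cycles become links of the cycle instances, where each cycle is viewed as its own CycAP instance. Its first and last articulation vertices, possibly replaced by the cycle nodes containing $u$ and $v$, give a tree link in $T$.

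A request $\{u,v\}$ is decomposed into the following subrequests:
\begin{enumerate}
\item a TAP request in $T$ between the tree nodes that represent $u$ and $v$, namely the first and last elements of $\mathcal{V}((u,v))$ after replacing non-articulation endpoints by their cycle node;
\item one CycAP request $\proj_C((u,v))$ on each cycle $C$ that the request traverses.
\end{enumerate}
We run the online TAP algorithm of \Cref{thm:online-tap} on $T$ and the online CycAP algorithm of \Cref{thm:unwtcycap} on every cycle independently. Whenever a subalgorithm buys a projected or tree link, we buy an original link realizing it. The total cost is then at most the sum of the subalgorithm costs.

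Feasibility is checked with \Cref{prop:CacConCross}. The cycle algorithms guarantee that consecutive elements of $\mathcal{V}((u,v))$ are joined by a sequence of crossing links inside each cycle. The tree links are needed when the cycle-level solutions use projections of long links, because then we must certify that the chosen original links cross. If this is awkward, we instead let a subrequest be satisfied by the tree solution whenever the tree link covers the relevant segment. Chaining everything with \Cref{prop:transitivity} then shows that $u$ and $v$ are $3$-edge-connected.

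For competitiveness we fix an optimal offline solution $\OPT$ for the requests. Projecting $\OPT$ onto $T$ gives a feasible TAP solution of cost at most $|\OPT|$, so the tree part costs $O(\log n)\,|\OPT|$.

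The cycle parts are the main obstacle. A single link of $\OPT$ projects onto many cycles, so naively $\sum_C \OPT_C$ can be much larger than $|\OPT|$. The plan is a charging argument. On each cycle, a projection of an $\OPT$ link is called \emph{internal} if it is the head or tail segment of its link, meaning one of its endpoints is an original endpoint of the link. It is called \emph{through} if both of its endpoints are articulation vertices. Each link has at most two internal projections, so internal projections cost at most $2|\OPT|$ in total.

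For through projections we want to modify the cycle subrequests rather than charge them. A cycle subrequest whose segment is already covered by the tree algorithm's links, namely links whose projection vertex set contains both endpoints of the segment, is treated as satisfied and is not passed to the cycle algorithm. What remains must be shown to be feasible for an instance that uses only $\OPT$'s internal projections, plus pairs that the tree solution already handles. By \Cref{prop:CacConCross}, a crossing sequence in $\OPT$ that passes through a cycle either stays local to that cycle or enters and leaves through articulation vertices. In the second case the tree component already accounts for it.

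Making this precise is the crux. The tree algorithm's solution need not coincide with $\OPT$'s through links, so an exchange argument is needed. One shows that $\OPT$ restricted to head and tail segments, together with the links the tree algorithm chose for the same requests, yields a feasible cycle solution of cost $O(|\OPT|+\text{tree cost})$. Each cycle algorithm, which is $O(\log n)$-competitive, then pays $O(\log n)$ times its share. Summing over cycles would give $O(\log n)\cdot O(|\OPT|+\text{tree cost})$, which is $O(\log^2 n)\,|\OPT|$ since the tree cost is $O(\log n)\,|\OPT|$. To avoid this extra factor, the charging should use only $\OPT$'s internal projections, $O(|\OPT|)$, on the cycles. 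The tree links would serve only to discharge subrequests and would not be charged to the cycle benchmark. Summing would then give $O(\log n)\,|\OPT|$.
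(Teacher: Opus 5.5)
There is a genuine gap, exactly where you locate the crux: nothing in the proposal bounds the cycle algorithms' cost, and the intermediate statement you aim for is false. You want the cycle subrequests that survive the discharge to be feasible using only $\OPT$'s head/tail projections. Take a request $(x,y)$ with $x,y$ non-articulation vertices of a cycle $C$, in cyclic order $x,a,b,y$. Suppose $\OPT$ serves it with within-cycle links $(x,a)$ and $(b,y)$ plus a long link whose projection onto $C$ is the through segment $(a,b)$. Consecutive projections share exactly one endpoint, so these links cross. Your tree subrequest is trivial ($\hat C$ to $\hat C$), so nothing is discharged. Yet $\{(x,a),(b,y)\}$ alone does not $3$-edge-connect $x,y$.

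So the cycle benchmark must contain through projections, and what must be shown is that the number needed is $O(\text{number of } \OPT \text{ links inside } C)$. Without structuring even that fails. Through projections on $C$ may join arbitrary pairs of articulation vertices, so one request can need a chain $(a_1,a_2),(a_2,a_3),\dots,(a_{k-1},a_k)$ against only two within-cycle links. Your block-cut tree also cannot do the discharging. The path $a-\hat C-b$ is covered by the images of two non-crossing within-cycle links $(a,x')$ and $(y',b)$. Hence a TAP-feasible purchase need not $3$-edge-connect any cycle segment, and you have no argument tying the tree algorithm's choices to the cycle subrequests.

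The paper closes this with three ingredients you are missing.

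\emph{Structuring.} At constant loss, requests are split along $\mathcal{V}((u,v))$ into within-cycle requests. Links are split into head, middle and tail, and middles are split at LCAs in a rooted decomposition tree on the articulation vertices, where each articulation vertex of $c$ is attached to the cycle root $r_c$. Then every non-within-cycle link is an uplink, so every through projection on $c$ is incident to $r_c$, and a shortest crossing sequence inside $c$ uses at most two of them.

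\emph{Coupling.} For an uncovered request in $c$, each endpoint $v$ on the decomposition tree first generates the TAP request $(v,r_c)$. TAP-covering that edge by an uplink genuinely $3$-edge-connects $v$ and $r_c$, since the uplink's projection vertex set is its tree path. Only if the request is still not $3$-edge-connected is it forwarded to the cycle algorithm, so requests between two articulation vertices never reach it.

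\emph{Charging.} The paper matches a minimal set of needed through projections to forwarded requests and uses that forwarded requests form a forest. A three-case count follows: a non-articulation endpoint forces a within-cycle first or last link, and when the $t_i$ are articulation vertices the TAP phase forces the $s_i$ to be distinct. This gives $|\OPT_c|\le 17|\COPT_c^W|$. A separate argument gives $|\TOPT|\le 2|\COPT|$, and summing yields $O(\log n)\,|\COPT|$ with no extra logarithm.
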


To prove~\Cref{thm:logncacap}, we first describe \textit{structured instances} which satisfy some
properties that can be assumed without loss of generality. In order to define structured instances, we first need to define a decomposition tree.

\subsection*{Decomposition Tree}
To each cactus, we associate a rooted tree, called its \textit{decomposition tree}.

To construct a decomposition tree $T$ of a cactus $G$,
first choose an arbitrary vertex $r\in V(G)$ which will become the root of the tree $T$. The vertex set of the tree we construct will be $V(T) = \mathcal{A}\cup\{r\}$, where $\mathcal{A}\subseteq V$ is the set of all
articulation vertices in the cactus $G$. We will construct $T$ incrementally. Initially, $V(T)=\emptyset.$
\begin{itemize}
  \item Insert $r$ into $V(T)$ and mark it visited.
  \item While there exists a cycle $c$ in $G$ that has only one visited vertex and the root of the
      cycle $c$ has not been identified:
  \begin{itemize}
      \item Denote the visited vertex to be the root of the cycle, $r_c$.
      \item Add all articulation vertices in $c$ other than $r_c$ to $V(T)$ and mark them as visited. Furthermore, connect these articulation vertices in the cycle $c$ to $r_c$ by creating new edges
          in $E(T)$.
  \end{itemize}
\end{itemize}

It is straightforward to verify that the above procedure visits all articulation vertices and, for
each cycle $c$, identifies a corresponding root $r_c$.

In \Cref{fig:cac_input} we can see a given cactus, and some
feasible set of links in dashed blue lines. In \Cref{fig:cac_tree} the decomposition tree
is constructed, and depicted in red, with the root being the vertex labeled $r$.

Given a tree $T$ and a link $\ell\in {V(T) \choose 2}$, we say that $\ell$ \textit{covers} an edge of the tree $T$, if adding the link $\ell$ to $T$ ensures that there are
two edge-disjoint paths between the end points of the edge. Also, $\ell$ is said to be an \textit{uplink} if one end point of $
\ell$ is an ancestor of
the other on the tree $T$.

We are now ready to define a Structured instance for online CacAP.

\begin{definition}[Structured instance]
    An instance of online CacAP is said to be a \textit{structured instance} if it satisfies the
    following properties. Below, the decomposition tree refers to the one associated with the cactus provided in the instance.   
    \begin{enumerate}
        \item Every online request lies within a cycle (i.e., each request specifies two vertices of a cycle).
        \item Every link in the instance either is between two vertices of the decomposition tree
            or lies within a cycle.
        \item All links between vertices of the decomposition tree are uplinks.
        \item Every edge of the decomposition tree is covered by a link in the input.
    \end{enumerate}
\end{definition}

Now, the proof of \cref{thm:logncacap} immediately follows from the following two lemmas, which we prove in the following subsections.

\begin{lemma}\label{lem:structuring}
If there exists an $O(\log n)$-competitive online algorithm for structured instances of CacAP, then there exists an $O(\log n)$-competitive algorithm for CacAP.
\end{lemma}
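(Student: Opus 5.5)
We will reduce a general online CacAP instance $(G,L,\sigma)$ to a structured instance $(G',L',\sigma')$ online. The reduction must satisfy two conditions. First, every link of $L'$ can be simulated by $O(1)$ links of $L$, so that a feasible online solution for $(G',L',\sigma')$ yields a feasible online solution for the original instance. Second, $\OPT(G',L',\sigma')=O(\OPT(G,L,\sigma))$. These two properties give the lemma. We enforce the four properties in order, on the same fixed cactus $G$ with its decomposition tree $T$.

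\textbf{Requests (property 1).} A request $\{u,v\}$ is replaced by the consecutive pairs of $\mathcal{V}((u,v))$. Each consecutive pair lies on a single cycle, since it is a projection. These subrequests are fed to the structured algorithm one after another.

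Feasibility follows from \Cref{prop:CacConCross}. If every consecutive pair $(x_i,x_{i+1})$ is $3$-edge-connected, then so is $(u,v)$. To see this, concatenate the crossing sequences for consecutive pairs. Where two sequences meet, both the last link of one and the first link of the next contain $x_{i+1}$ in their projection sets. This junction needs a short check: such links either cross, or can be linked by crossing at the shared articulation vertex. Alternatively, the transitivity of $3$-edge-connectivity as an equivalence relation gives the concatenation directly.

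The optimum does not increase, by \Cref{prop:transitivity}: any solution connecting $u$ and $v$ also connects every pair in $\mathcal{V}((u,v))$.

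\textbf{Links (properties 2 and 3).} A link $\ell=\{u,v\}$ is split via its projection vertex set. The projection of $\ell$ onto each of its first and last cycles becomes a within-cycle link; these are the "head" and "tail" segments. The middle part, between the first and last articulation vertices $c_1,c_k$ of $\mathcal{V}(\ell)$, is replaced by tree links. In $T$, the path from $c_1$ to $c_k$ passes through their lowest common ancestor $w$, so we use the two uplinks $\{c_1,w\}$ and $\{c_k,w\}$. Vertex $w$ lies in $\mathcal{V}(\ell)$ or is the root $r_c$ of a cycle that $\ell$ traverses; in the latter case we add the corresponding cycle projection as a within-cycle link.

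Each original link thus becomes $O(1)$ new links. Hence $\OPT'\le O(1)\cdot\OPT$, provided the replacement set makes $3$-edge-connected every pair that $\ell$ made $3$-edge-connected. We check this via crossing chains: the head segment, the uplinks and the tail segment all share projection vertices in a chained way.

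Conversely, each new link must be simulated by original links. When the structured algorithm buys a new link, we buy the original link it came from. This costs at most $1$ per new link, and the original link's projection vertex set contains that of the new link. By \Cref{prop:transitivity}, every pair made $3$-edge-connected by the new links is then also $3$-edge-connected using the original links.

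\textbf{Coverage (property 4).} We discard every tree edge not covered by any link. The resulting components are handled as independent structured instances; equivalently, for such an edge we may pretend the edge is contracted. The justification is that if no link covers a tree edge $\{a,r_c\}$, then no request separated by it can ever be satisfied.

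\textbf{Main obstacle.} The hardest step is the link-splitting argument: showing that replacing the middle portion by the uplinks to the LCA preserves exactly the right crossing relations. In particular, links that crossed only in their middle portions must still be chainable after replacement, and an argument is needed for why uplinks to a common ancestor cross in the tree sense exactly when the original projections interacted. I would first try proving it through the cut characterization rather than through crossing chains. A pair $(x,y)$ on a cycle $C$ is $3$-edge-connected iff every $2$-cut of $G$ separating $x$ and $y$ is covered. Each such $2$-cut consists of two edges of $C$ and separates a union of subtrees hanging at articulation vertices of $C$, so one compares which links cover each such cut before and after replacement.
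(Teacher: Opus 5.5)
\textbf{Gap: translating the structured solution back to $L$.} Your outline matches the paper's: split requests along $\mathcal{V}$, split links into head/middle/tail, reroute through the LCA, and split the cactus at uncovered tree edges. The direction $\OPT'=O(\OPT)$ is fine. The gap is in the reverse direction. You claim that $\mathcal{V}(\ell)$ contains the projection vertex set of every new link, so that buying $\ell$ simulates it. This holds for the head and tail segments, and for the uplinks when the LCA $w$ lies in $\mathcal{V}(\ell)$. It fails in exactly the case you single out: $w=r_C$ with $\proj_C(\ell)=\{c_{j-1},c_j\}$ a pair of siblings. Then $r_C\in\mathcal{V}(\{c_1,w\})\setminus\mathcal{V}(\ell)$. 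The uplink $\{c_1,w\}$ covers the $2$-cut formed by the two edges of $C$ at $r_C$, but $\ell$ does not, since $u$ and $v$ both lie on the side of $c_{j-1},c_j$.

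\textbf{Counterexample: no smarter translation works.} Take a cycle $r,a_1,\dots,a_m$ rooted at $r$, with a pendant cycle at every $a_i$. Let the links be $\{a_i,a_{i+1}\}$ for $1\le i<m$ and $\{a_m,r\}$, and let the single request be $\{a_1,r\}$. For each $j$, let $e_j$ be the edge leaving $a_j$ towards $a_{j+1}$ (or towards $r$ when $j=m$). The $2$-cut $\{(r,a_1),e_j\}$ is covered by exactly one link, so $\OPT=m$. Now enforce property~3 by LCA rerouting, as the paper's step~3 does. Each sibling link $\{a_i,a_{i+1}\}$ becomes $\{a_i,r\},\{r,a_{i+1}\}$, and the single new link $\{a_1,r\}$ already serves the request. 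So the structured instance can be cheaper by an unbounded factor, and no $O(1)$-per-link translation back to $L$ can exist.

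\textbf{Property~3 is not met by your construction either.} Your extra within-cycle projection $\{c_{j-1},c_j\}$ does not help: it is itself a non-uplink between tree vertices, so it violates property~3. The same holds for head, tail, or original within-cycle links joining two sibling articulation vertices, which your construction never reroutes.

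\textbf{Where the difficulty actually lies.} This is the real content of your ``main obstacle'', but it sits in the reverse direction from where you looked. The two uplinks share $w$, so they trivially cross and together dominate $\ell$. The problem is that each of them individually is stronger than $\ell$. The paper's sketch performs the same replacement and justifies it only by saying the two links ``serve the same purpose as $\ell$''. It uses this solely to bound the optimum, so it does not supply the missing direction either.

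\textbf{What a repair would need.} It would have to avoid routing through an off-path cycle root. For example, split such a middle part as $\{c_1,c_{j-1}\},\{c_{j-1},c_j\},\{c_j,c_k\}$, each of which is dominated by $\ell$, and relax property~3 to allow within-cycle sibling links. \Cref{lem:solvingStructured} would then have to be rechecked under this weaker definition.

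\textbf{Property~4.} The correct justification is the paper's: after properties 1--3, no request and no link has endpoints strictly on both sides of the split, so both the optimum and the algorithm's cost decompose additively. That requests separated by the edge are unsatisfiable is not the reason. ``Contracting'' the tree edge is not a valid operation on the cactus.
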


\begin{lemma}\label{lem:solvingStructured}
There exists an $O(\log n)$-competitive online algorithm for structured instances of CacAP.
\end{lemma}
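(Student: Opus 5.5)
The plan is to split a structured instance into one online TAP instance on the decomposition tree $T$ and one online CycAP instance per cycle. The TAP instance is solved by the algorithm of \Cref{thm:online-tap}, and each CycAP instance by the algorithm of \Cref{thm:unwtcycap}. The link between the two levels is one observation: in a structured instance every tree link is an uplink, and the projection of an uplink onto any cycle $C$ it passes through is exactly a pair $(r_C,w)$, where $w$ is an articulation vertex of $C$ that is a child of $r_C$ in $T$. So from inside $C$, the only thing an uplink can contribute is a \emph{virtual link} $(r_C,w)$. Such a virtual link is realized by any uplink covering the tree edge $\{r_C,w\}$. Conversely, the definition of $T$ shows that every uplink covering $\{r_C,w\}$ has both $r_C$ and $w$ in its projection vertex set.

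\textbf{Algorithm.} For each cycle $C$, I create a CycAP instance on $C$. Its link set is the set of real links lying inside $C$, together with one unit-cost virtual link $(r_C,w)$ for each tree child $w$ of $r_C$ on $C$. Property~4 of structured instances guarantees that each such virtual link is realizable by some input link. When a request $(u,v)$ arrives, it lies within a single cycle $C$ by Property~1, and I forward it to $C$'s CycAP algorithm. Whenever that algorithm buys a real link, I buy it. Whenever it buys a virtual link $(r_C,w)$, I issue the request ``cover tree edge $\{r_C,w\}$'' to the TAP algorithm on $T$, with the uplinks as the link set, and buy whatever uplinks it buys.

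\textbf{Feasibility.} By \Cref{prop:CacConCross}, the CycAP solution yields a sequence of crossing links inside $C$ from $u$ to $v$. I replace each virtual link $(r_C,w)$ in this sequence by the TAP-bought uplink covering $\{r_C,w\}$. That uplink's projection onto $C$ is $(r_C,w)$, and crossing between links is defined through their projections onto a common cycle. Hence consecutive links in the new sequence still cross, $u$ and $v$ still lie in the projection sets of the first and last links, and \Cref{prop:CacConCross} gives $3$-edge-connectivity of $u$ and $v$.

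\textbf{Cost.} I will charge against a fixed optimal solution $O$. The TAP cost is at most $O(\log n)$ times the TAP optimum for the set of requested tree edges. The CycAP cost on each $C$ is at most $O(\log n)\cdot \OPT_C$, where $\OPT_C$ is obtained by taking the links of $O$ inside $C$ and replacing each uplink of $O$ by its projection onto $C$.

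\textbf{Main obstacle.} A single uplink of $O$ projects onto every cycle along its tree path. Summing $\OPT_C$ over all cycles can therefore overcount $|O|$ by up to the tree depth, and in addition the TAP algorithm may be asked to cover edges that $O$ never needs to cover. I would try to overcome this by charging each uplink of $O$ in a cycle $C$ only when its projection is actually used in the crossing sequence from $u$ to $v$ inside $C$. Such uses should come in two kinds: the uplink either crosses an in-cycle link of $O$, or it is needed to cross another projected uplink at a branching point of $T$. Uses of the first kind I would charge to the in-cycle link, at $O(1)$ per link. For uses of the second kind, the plan is to show that the set of tree edges ever requested from the TAP algorithm is covered by $O(|O|)$ uplinks. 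Then the TAP side costs $O(\log n)\cdot|O|$, and the cycle sides together cost $O(\log n)$ times the in-cycle part of $|O|$ plus the TAP side, which gives $O(\log n)$-competitiveness overall. If free virtual links turn out to make this charging cleaner, I would switch the virtual links to zero cost in the cycle instances, provided the CycAP algorithm handles zero-cost links, and move the entire uplink accounting to the TAP side.
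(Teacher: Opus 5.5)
\textbf{Gap.} Your feasibility argument is fine. The cost bound is the actual content of the lemma, and you leave it as a plan. With your routing it cannot be completed.

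The first problem is overcounting on the cycle side. You forward every request to its cycle instance, including requests between two tree vertices. Your only handle on a cycle subroutine is cost $\le c_r\log n\cdot\OPT_C$, where $\OPT_C$ counts virtual links at unit cost, and $\sum_C\OPT_C$ is not $O(|O|)$. Concretely, take a chain of cycles $C_1,\dots,C_d$ with $r_{C_i}=w_{i-1}$, $w_0=r$, and $w_i\in C_i$, plus a pendant cycle at $w_d$. Put two crossing in-cycle links $(w_{i-1},x_i)$ and $(y_i,w_i)$ in each $C_i$, add the single uplink $(w_d,w_0)$, and request $(w_{i-1},w_i)$ for every $i$. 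Then $|O|=1$, yet $\OPT_{C_i}=1$ for every $i$. A cycle subroutine that answers each request with the two real in-cycle links stays within its guarantee, so you pay $2d=\Theta(n)$. In this example the projected uplink serves the request on its own. That is neither of your two kinds of use, so your charging never sees it.

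The second problem is that your TAP requests are generated by the cycle subroutine's purchases of virtual links. That subroutine may buy $\Theta(\log n)\cdot\OPT_C$ links, many of them virtual links that $O$ never uses. So the requested tree edges need not be coverable by $O(|O|)$ uplinks, and the best the two black-box guarantees give is their product, $O(\log^2 n)$. Making virtual links free does not fix this. \Cref{thm:unwtcycap} is proved only for unit weights (see \Cref{clm:k_bound}), and a subroutine could then buy every virtual link at $r_C$ at no cost, flooding TAP with edges $O$ never needs.

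\textbf{Missing idea.} The paper decouples the TAP requests from the cycle subroutine. When $(s_i,t_i)$ arrives in cycle $c$, it first sends $(r_c,v)$ to $\ALG_T$ for each endpoint $v$ that is a tree vertex. Only if $(s_i,t_i)$ is still uncovered does it go to $\ALG_R^c$, whose link set is all of $\mathcal L(c)$, with each projection bought as its original link.

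This has two consequences.
\begin{enumerate}
\item The TAP requests are now dictated by the input. Each one can be charged either to a link of $\COPT$ covering it, or to the first in-cycle link of $\COPT$'s crossing sequence from $s_i$. This gives $|\TOPT|\le 2|\COPT|$.
\item No request between two tree vertices ever reaches a cycle instance, and any articulation endpoint of a forwarded request is already joined to $r_c$. This is what makes the charging $|\OPT_c|\le 17|\COPT_c^W|$ possible. All middle projections meet $r_c$, so a shortest crossing sequence uses at most two of them, and matched requests are charged to first or last in-cycle links.
\end{enumerate}
The cycle side is therefore paid entirely by in-cycle links of $\COPT$, uplinks are paid once on the TAP side, and the two $O(\log n)$ factors add instead of multiplying. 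In the chain example, every request is absorbed by the TAP phase.
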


\subsection{Structuring the Input Instance}

We first prove \cref{lem:structuring}.

\begin{proof}

We claim that the four properties of a structured instance can be assumed with only a constant factor loss in the cost of
the optimal solution. We sketch a brief proof for each of the four properties.
    
\begin{enumerate}
    \item{Every online request lies within a cycle:} When a request $(u,v)$ arrives, $3$-edge-connecting $(u,v)$ requires $3$-edge-connecting
        every pair of vertices in its projection vertex set $\mathcal{V}((u,v)) = (u, c_1, \dots,
        c_k, v)$ by \Cref{prop:transitivity}.  Equivalently, the adversary may instead present the
        sequence of requests $(u, c_1), (c_1, c_2), \dots, (c_k, v)$, each confined to a single
        cycle, potentially making it harder for our algorithm to maintain a good competitive ratio.
        Formally, whenever a request $(u,v)$ arrives, our algorithm processes it by sequentially
        handling the requests $(u, c_1), (c_1, c_2), \dots, (c_k, v)$.  Notice, however, that the
        offline optimum remains unchanged in both cases.

    \item{Every link in the instance either is between two vertices of the decomposition tree
            or lies within a cycle:} Consider any link $\ell = (u,v)$ on
        the rooted cactus.  If $u$ and $v$ lie in the same cycle of the input cactus, then $\ell$ trivially
        satisfies the property.
        Otherwise, let the projection vertex set of $\ell$ be  
        \[ \mathcal{V}(\ell) = (u, c_1, \dots, c_k, v), \]  where $k\geq 1$.
        We know that any pair of consecutive vertices in this sequence lie on the same cycle and the $c_i$s are articulation vertices. So we
        replace the link $\ell$ with at most three links as follows: the head $H(\ell)=\{u,c_1\}$, the middle $M(\ell) = \{c_1,c_k\}$, and the tail $T(\ell)=\{c_k, v\}$. If $c_1=c_k$, $M(\ell)$ does not exist. Clearly, $H(\ell)$, $M(\ell)$, and $T(\ell)$ satisfy the property of being either within a cycle, or between two vertices of the decomposition tree (as all articulation vertices are on the decomposition tree).

        Since $H(\ell)$ crosses $M(\ell)$, and $M(\ell)$ crosses $T(\ell)$, by
        \Cref{prop:CacConCross} they serve the same purpose as $\ell$ in 3-edge-connecting requests.
        Thus, we can safely do all such replacements, incurring at most a factor-3 loss in the size of the optimal solution.

    \item{All links between vertices of the decomposition tree are uplinks:} Consider any link $\ell=(u,v)$ between vertices of the decomposition tree. If it is
        not an uplink, let $w$ be the lowest common ancestor of $u$ and $v$ on the decomposition
        tree.
        Replace $\ell$ with two links: $(u,w)$ and $(w,v)$.

        Again, it is easy to see by
        \Cref{prop:CacConCross}, that $\{u,w\}$ and $\{w,v\}$ serve the same purpose as $\ell$ in 3-edge-connecting requests.
        Furthermore, by repeatedly doing this replacement, we ensure that every link between vertices of the decomposition tree is
        an uplink, and for this step 
        we incur at most a factor-2 loss in the size of the optimal solution.
        For an example of splitting a link into uplinks, see \Cref{fig:cac_uplink}.

\item{Every edge of the decomposition tree is covered by a link in the input:} If there exists an edge $e = (u,v)$ on the decomposition tree $T$ that is not
        covered by any link, we break the cactus into two smaller cactus graphs by separating out an
        appropriate sub-cactus and then solve these instances independently.  

        More specifically, suppose the endpoints of $e$, $u$ and $v$, lie on the same cycle $C$ and
        $u$ is the parent of $v$ in the tree $T$.  In this case, we split the cactus
        at $v$ so that the subtree of $T$ rooted at $v$ becomes the tree corresponding to the
        separated sub-cactus. Note that $u$ continues to be an articulation vertex, and hence will
        still be a vertex in the \textit{parent} tree. Meanwhile $v$ need not be an articulation
        vertex anymore.  We can repeatedly apply this splitting procedure until the assumption is
        satisfied for every edge of $T$.

        Since each request is confined to a single cycle, no request has endpoints in two different
        obtained cacti. Furthermore, there cannot be any link with endpoints in the two obtained cactus graphs; since any such link is now necessarily between two articulation vertices (from (2) above), and is an uplink (from (3) above), it definitely covers the tree edge $\{u,v\}$, a contradiction. 
        Now, each of the resulting cactus graphs contains fewer vertices, applying the
        $O(\log n)$-competitive algorithm in parallel to both the instances yields an overall competitiveness at most the
        maximum of the two, which remains $O(\log n)$.  We can now repeatedly split the obtained cactus graphs further if any of their decomposition trees contain an uncovered edge. 
        \qedhere
\end{enumerate}
\end{proof}

\begin{figure}
    \centering
    \begin{subfigure}{0.48\textwidth}
        \centering
        \includegraphics[scale=0.7,page=1]{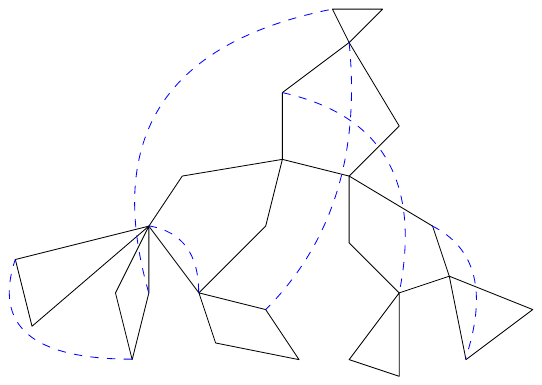}
        \caption{Cactus and links forming a CacAP instance.}
        \label{fig:cac_input}
    \end{subfigure}
    \begin{subfigure}{0.48\textwidth}
        \centering
        \includegraphics[scale=0.7,page=2]{fig_cactus.pdf}
        \caption{Tree obtained from articulation vertices.}
        \label{fig:cac_tree}
    \end{subfigure}
    \begin{subfigure}{0.48\textwidth}
        \centering
        \includegraphics[scale=0.7,page=3]{fig_cactus.pdf}
        \caption{Crossing Links.}
        \label{fig:cac_cross}
    \end{subfigure}
    \begin{subfigure}{0.48\textwidth}
        \centering
        \includegraphics[scale=0.7,page=4]{fig_cactus.pdf}
        \caption{Splitting a link into uplinks.}
        \label{fig:cac_uplink}
    \end{subfigure}
    \caption{Example instance of Cactus Augmentation Problem.}
    \label{fig:cactus}
\end{figure}

\subsection{Solving Structured Instances} All that is left to prove is~\cref{lem:solvingStructured}, i.e., obtaining an $O(\log n)$-competitive algorithm for structured instances.
We
make use of the following two theorems to give an algorithm that can solve such
instances.
\taponline*

\begin{restatable}{theorem}{unwtcycap}
 \label{thm:unwtcycap}
There exists a deterministic $O(\log n)$-competitive algorithm for
online {\sc Cycle Augmentation Problem} (CycAP).
\end{restatable}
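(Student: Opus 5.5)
The plan is to reduce online CycAP to online edge-weighted Steiner forest (EWSF) and invoke the deterministic $O(\log n)$-competitive algorithm of~\cite{BermanC97}, as outlined in the technique overview.

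\textbf{Construction.} Let the input cycle be $C$ on vertices $v_1,\dots,v_n$ in cyclic order, with link set $L$. Since $C$ is $2$-edge-connected, a pair $u,v$ becomes $3$-edge-connected exactly when every $2$-cut of $C$ separating $u$ from $v$ is covered by an added link. Such a cut is determined by a pair of cycle edges, one on each of the two $u$--$v$ arcs. By \Cref{prop:CacConCross} with a single cycle, $u,v$ are $3$-edge-connected in $C\cup A$ iff there is a sequence of links of $A$ from $u$ to $v$ whose consecutive links cross in $C$, with $u$ an endpoint of the first and $v$ an endpoint of the last.

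I build an auxiliary graph $H$ with one terminal node per cycle vertex and one Steiner node per link. Each link node $x_\ell$ is joined to the terminal nodes of the two endpoints of $\ell$. The intent is that every crossing interaction between two links be realized through their shared endpoints or through a chain of such adjacencies.

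Crossing in case (b), interleaving endpoints with no shared vertex, is not captured by sharing terminals. So I would instead work in the node-weighted view: make $x_\ell$ and $x_{\ell'}$ adjacent whenever $\ell$ and $\ell'$ cross, give link nodes weight $1$, and give terminals weight $0$. A request $(u,v)$ then becomes the Steiner forest request connecting $u$ to $v$ in $H$. By \Cref{prop:CacConCross}, a set of chosen link nodes is feasible for CycAP iff it is feasible for this node-weighted Steiner forest (NWSF) instance, so both optima coincide. The remaining difficulty is that NWSF admits only $O(\log^2 n)$-type bounds online.

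\textbf{Conversion to edge weights.} Every link node has unit weight and only two terminal neighbours, so the goal is an edge-weighted graph $H'$ on the cycle vertices. For each link $\ell=\{a,b\}$, put an edge $ab$ of weight $1$, and add zero-weight edges that encode crossings. The natural candidate is to place a zero-weight edge between consecutive cycle vertices $v_i v_{i+1}$ on the shorter side. That is wrong, however, since it would make uncovered vertices free to connect.

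The cleaner route I would take: replace each link node $x_\ell$ by an edge of weight $1/2$ from each endpoint to a node $x_\ell$, and connect crossing link nodes by zero-weight edges. Then any edge-weighted solution pays at least $1/2$ for every link node it touches, and any node-weighted solution of cost $k$ gives an edge solution of cost at most $k$. The online EWSF algorithm outputs edges; whenever an edge incident to $x_\ell$ is bought, the CycAP algorithm buys $\ell$. Feasibility follows because a bought $u$--$v$ path in $H'$ enters the link nodes through endpoint edges and moves between them only along crossing edges, yielding a crossing sequence. Cost is at most twice the EWSF cost, which is $O(\log |V(H')|)\cdot \OPT_{\mathrm{EW}}\le O(\log n)\cdot\OPT$, since $|V(H')|\le n+|L|=O(n^2)$.

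\textbf{Main obstacle.} The subtle point is the feasibility direction. A path in $H'$ might use zero-weight crossing edges between link nodes whose links were never bought, reaching them without paying for any endpoint edge. I would handle this by buying every link whose node lies on a bought path, not only those with a bought incident edge. To keep the cost bounded I would then argue that each zero-weight-only component of link nodes can be shortcut. If this fails, I would replace the zero-weight crossing edges with weight-$1/2$ edges split at both ends, so every visited link node is charged. That loses only a constant factor, since each path step charges the link nodes it traverses.
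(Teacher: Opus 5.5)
There is a genuine gap. Your primary construction (incidence edges of weight $1/2$, crossing edges of weight $0$) fails at the cost step, and the proposed shortcutting repair cannot work. The online EWSF guarantee only controls the weight of the bought edges. With free crossing edges, that weight says essentially nothing about how many links you must buy.

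Concretely, take the cycle $v_0,\dots,v_{n-1}$, the links $\{v_i,v_{i+2}\}$ (indices mod $n$), and the single request $(v_0,v_{n/2})$. Consecutive links interleave and hence cross, so the EWSF optimum is $1$: two incidence edges plus free crossing edges. Yet every feasible CycAP solution needs at least $n/4$ links. Indeed, if one edge on each $v_0$--$v_{n/2}$ arc is spanned by no chosen link, the resulting $2$-cut is uncovered. So the inequality ``CycAP cost at most twice the EWSF cost'' is simply false.

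Shortcutting a zero-weight chain $x_{\ell_1},\dots,x_{\ell_m}$ is also impossible in general, because $\ell_1$ and $\ell_m$ need not cross. Worse, a black-box EWSF algorithm may route through arbitrarily many free edges even when the single link $\{u,v\}$ is optimal.

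In short, with free crossing edges you have $\OPT_E\le|\OPT_N|$ but lose $|\ALG_N|=O(\ALG_E)$. With priced crossing edges you regain the latter but must re-prove the former.

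Your fallback, giving crossing edges positive weight, is the right construction and is essentially the paper's: unit weight on every edge of $H$, and you buy every link whose node is incident to a bought edge. This gives $|\ALG_N|\le 2|\ALG_E|$, and feasibility follows from \Cref{lem:reduction_preservation}. But your justification (``each path step charges the link nodes it traverses'') only covers this algorithm side.

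Once crossing edges cost something, the comparison $\OPT_E=O(|\OPT_N|)$ is no longer free, since connecting the chosen link nodes may use many crossing edges. This is exactly the substantive part of the proof, and you have not supplied it. The paper gets it in two steps:
\begin{enumerate}
\item Take a spanning forest of the subgraph of $H$ induced by $\OPT_N$ and the $k$ requested terminals. It has at most $|\OPT_N|+k-1$ edges (\Cref{clm:opt_relation}).
\item Note that $k\le 2|\OPT_N|$, because each link node has exactly two terminal neighbours and every requested terminal must be adjacent to a chosen link node (\Cref{clm:k_bound}).
\end{enumerate}
Together these give $|\OPT_E|\le 3|\OPT_N|$. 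Paths through non-requested terminals are handled by noting that two links sharing exactly one endpoint cross. Alternatively, you can put all (at most $2|\OPT_N|$) terminals adjacent to $\OPT_N$ into the forest. With this argument added, your fallback proves the theorem.
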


\Cref{thm:online-tap} follows from the work of Naor et al.~\cite{NaorUW22}. We include an alternate
proof in~\Cref{sec:onlineTap}. \Cref{thm:unwtcycap} follows from a reduction to the online Steiner
forest problem, and we give the proof in \Cref{sec:logsqnWCAP}. 
 
We will design an algorithm $\ALG$ for structured instances of
online CacAP and prove that it is $O(\log n)$-competitive. Let $\ALG_{T}$ be a $(c_t \log
n)$-competitive algorithm for online TAP and $\ALG_{R}$  be a $(c_r  \log n)$-competitive algorithm
for online CycAP promised by \Cref{thm:online-tap,thm:unwtcycap}, respectively, where $c_t$ and
$c_r$ are absolute positive constants. These algorithms will act as subroutines of our main
algorithm $\ALG$.

Let the input to our algorithm $\ALG$ be a cactus $G=(V,E)$ and the set of links $\mathcal{L}$.
Given $G$ and $\mathcal{L}$, we create an online CycAP instance for each cycle in $G$ and a single
online TAP instance. Each cycle instance is handled by an independent copy of $\ALG_R$, while the
TAP instance is handled by $\ALG_T$. Upon the arrival of a request, we carefully generate
corresponding subrequests for the relevant subroutines.  The idea is that $\ALG_R$ handles local
optimizations within individual cycles, while $\ALG_T$ manages global, long-range optimizations
across the cactus structure.

\paragraph{The TAP instance.}  
The TAP instance is defined on the decomposition tree of the cactus. And the set of links
$\mathcal{M}$ for this instance, consists of all
the links which have both their end points on vertices of the decomposition tree.

\paragraph{The CycAP instances.}  
Each cycle in $G$ is treated as a separate CycAP instance. For the instance corresponding to a cycle
$C$, we define its link set as  
\[ \mathcal{L}(C) = \bigcup_{\ell \in \mathcal{L}} \proj_C(\ell). \]  
Notice that $\mathcal{L}(C)$ consists of the projections of all links in $\mathcal{L}$ onto the
cycle $C$.

\paragraph{Algorithm.}

With the above setup, we can now describe our algorithm $\ALG$.  
When an uncovered request $(s_i, t_i)$ arrives in some cycle $c$:  

\begin{enumerate}
    \item TAP phase:  For each vertex $v \in \{s_i, t_i\}$ that is on the decomposition
        tree, forward the request $(r_c, v)$ to the online TAP algorithm $\ALG_T$ if it is not
    already covered. Add the link
        selected by $\ALG_T$ to our solution.
    \item CycAP phase:  If $(s_i, t_i)$ remains uncovered, forward it to the online CycAP
       algorithm $\ALG_R^c$ corresponding to cycle $c$.  For every projected link $\proj_c(\ell)$
       selected by $\ALG_R^c$, include the corresponding original link $\ell \in \mathcal{L}$ in our
       solution.  
\end{enumerate}
Notice that $\ALG$ always succeeds in serving any request whenever a feasible solution exists.  
This holds because every request lies entirely within a single cycle, and for each cycle instance,  
the online algorithm has access to the projections of all links in $\mathcal{L}$ onto that cycle.

\subsection{Analysis via a Charging Argument}

We now analyze the performance of $\ALG$.

Let $\COPT$ denote the offline optimal solution for the given structured cactus instance.  
Fix a cycle $c$ in the cactus $G$, and let $\COPT_c$ be the set of projections of $\COPT$ onto cycle $c$.  
We partition  
\[
\COPT_c = \COPT_c^{W} \sqcup \COPT_c^{M},
\]  
where $\COPT_c^{W}\subseteq \mathcal{L}$ consist of all links in $\mathcal L$ that lie within the cycle $c$, and $\COPT_c^{M}$ consist of projections of the other links onto the cycle $c$. Intuitively, links in $\COPT_c^M$ consist of projections that are in the `middle' of other links. Notice that one endpoint of each link in $\COPT_c^M$ is necessarily $r_c$, the root of the cycle $c$.

Let $\OPT_c$ denote the offline optimal solution for all the requests that our algorithm $\ALG$ forwarded to the cycle-specific algorithm $\ALG_R^c$.
For simplicity of exposition, we will refer to all elements of $\OPT_c$ and $\COPT_c$ as links rather than projections of links. 

Let $\TOPT$ denote the optimal solution for all the requests that our algorithm $\ALG$ forwarded to the tree algorithm $\ALG_T$.  

We will prove the following two bounds on the sizes of $\COPT^c$ and $\TOPT$.

\begin{lemma}\label{lem:cyclecharge}
    $17 \cdot |\COPT_c^{W}| \ge |\OPT_c|$.
\end{lemma}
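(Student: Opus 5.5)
The plan is to exhibit a feasible solution for the CycAP instance on $c$ (the requests forwarded to $\ALG_R^c$) of size at most $17|\COPT_c^{W}|$. It will consist of $\COPT_c^{W}$ together with at most $16$ projected links from $\COPT_c^{M}$ charged to each link of $\COPT_c^{W}$. Since $\OPT_c$ is optimal for these requests, the lemma then follows.

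\textbf{Step 1: forwarded requests have a non-tree endpoint and a chain that starts with a $W$-link.} Let $(s,t)$ be a request forwarded to $\ALG_R^c$. If both $s$ and $t$ were vertices of the decomposition tree, the TAP phase would already have $3$-edge-connected each of them to $r_c$. By \Cref{prop:transitivity} (via \Cref{prop:CacConCross}), the request would then be covered and not forwarded. So, say, $s$ is not an articulation vertex and $s\neq r_c$.

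Since $\COPT$ is feasible, \Cref{prop:CacConCross} gives a sequence of crossing links of $\COPT$ from $s$ to $t$. Projecting onto $c$ should give a sequence of crossing links of $\COPT_c$ inside $c$ from $s$ to $t$. I would verify this using the structured-instance properties: requests lie within a cycle, and every link is either within a cycle or an uplink between tree vertices.

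Every link of $\COPT_c^{M}$ has both endpoints at articulation vertices or at $r_c$, namely $r_c$ and some articulation vertex $x$ of $c$. Hence the first link of the projected sequence must lie in $\COPT_c^{W}$. Symmetrically, the last link is a $W$-link unless $t$ is a tree vertex, a case handled with the same gadget.

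\textbf{Step 2: the gadget.} All $M$-links form a fan centred at $r_c$. Any two of them with distinct other endpoints share exactly one endpoint, and therefore cross. Thus in any crossing sequence, a maximal consecutive block of $M$-links can be shortcut to at most two $M$-links: the one entering the block and the one leaving it. Each of these crosses a $W$-link adjacent to the block.

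For each $w=(a,b)\in\COPT_c^{W}$ I will retain a constant number of representative $M$-links crossing $w$. These are the extremal ones: on each of the two arcs of $c$ determined by $a$ and $b$, the fan link whose endpoint is closest to $a$ and the one closest to $b$. I also retain those sharing an endpoint with $w$. This gives at most $4$ per arc-type and per crossing mode, and the count is to be bounded by $16$.

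\textbf{Step 3: rerouting, the main obstacle.} I must show that replacing an arbitrary $M$-link $m$ crossing $w$ by a retained representative $m'$ still allows the chain to continue to the next $W$-link $w'$. Here $m$ crosses both $w$ and $w'$, while $m'$ need not cross $w'$.

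The route is $w \to m' \to m'' \to w'$, where $m''$ is a representative retained for $w'$. This works because $m'$ and $m''$ share $r_c$ and hence cross, unless they coincide, in which case the route is even shorter. The delicate part is showing that some representative of $w$ and some representative of $w'$ exist at all, which holds since $m$ itself witnesses both crossings. One also needs that the degenerate cases (shared endpoints, $m'$ having endpoint equal to an endpoint of $w$, $t$ a tree vertex) do not break crossing.

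Once this case analysis is complete, every forwarded request is served by $\COPT_c^{W}$ plus at most $16|\COPT_c^{W}|$ retained $M$-links, giving $|\OPT_c|\le 17|\COPT_c^{W}|$.
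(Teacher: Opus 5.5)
Your treatment of the interior of a chain is sound, and simpler than you make it. Every link of $\COPT_c^{M}$ contains $r_c$, so any two distinct ones cross. Retaining a single M-link crossing each W-link (whenever one exists) already makes the reroute $w\to m'\to m''\to w'$ work. Extremality is never used, and ``those sharing an endpoint with $w$'' would be unbounded when $r_c$ is an endpoint of $w$.

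The genuine gap is the case you defer as ``handled with the same gadget'': a forwarded request $(s,t)$ with $s$ a non-tree vertex and $t\neq r_c$ an articulation vertex that $\COPT_c$ reaches only through the M-link $(r_c,t)$. No representative chosen for a W-link need contain $t$, so $(r_c,t)$ itself must be retained. Nothing in the structure of $\COPT_c$ bounds the number of such $t$. Concretely, place $r_c,s,t_1,\dots,t_q,y$ around $c$, let $w=(s,y)$ be the only W-link, and let $(r_c,t_i)$ be the only projected link containing $t_i$. Each $(r_c,t_i)$ crosses $w$, but your representatives for $w$ are only $(r_c,t_1)$ and $(r_c,t_q)$. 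If all requests $(s,t_i)$ were forwarded, you would have $|\OPT_c|\ge q$ against $|\COPT_c^{W}|=1$.

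So the lemma needs the algorithm's forwarding rule, exactly as in the paper's Case~3. Once some $(s,t)$ with $t$ a tree vertex has been forwarded and served, $s$ is $3$-edge-connected to $t$ and, by the TAP phase, to $r_c$. For any later $(s,t')$ with $t'$ a tree vertex, the TAP phase makes $t'$ $3$-edge-connected to $r_c$, so $(s,t')$ is covered before the CycAP phase. Hence each non-tree $s$ has at most one forwarded request ending at a tree vertex. Each such $s$ is an endpoint of the first link of its chain, which lies in $\COPT_c^{W}$, so there are at most $2|\COPT_c^{W}|$ such requests. Retain $(r_c,t)$ for each of them; nothing is needed when $t=r_c$, since every retained M-link contains $r_c$. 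This completes your construction and gives $|\OPT_c|\le 4|\COPT_c^{W}|$.
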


\begin{lemma}\label{lem:treecharge}
    $2 \cdot |\COPT| \ge |\TOPT|$.
\end{lemma}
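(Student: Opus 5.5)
We prove the bound by building, from $\COPT$, a feasible solution of size at most $2|\COPT|$ for the TAP instance on all forwarded tree requests. First we fix what those requests are. A tree request has the form $(r_c,v)$, where $c$ is a cycle and $v\neq r_c$ is a vertex of $c$ lying on the decomposition tree. By construction, $v$ is then a child of $r_c$ in $T$, so every forwarded request is exactly a tree edge $\{r_c,v\}$ of $T$. It therefore suffices to find a set $\mathcal{M}'\subseteq\mathcal{M}$ with $|\mathcal{M}'|\le 2|\COPT|$ that covers every tree edge $\{r_c,v\}$ for which some request $(s_i,t_i)$ in cycle $c$ with $v\in\{s_i,t_i\}$ was forwarded. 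We will in fact cover every such edge arising from any request of the sequence.

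\textbf{Step 1: $\COPT$ separates each request edge.} Suppose $(s,t)$ is a request in cycle $c$ and $v\in\{s,t\}$ is a tree vertex other than $r_c$. Removing $v$'s two cycle edges in $c$ together with a suitable set of cycle edges cuts off a structure. Concretely, consider the cut whose side $S_v$ consists of $v$ together with the entire sub-cactus hanging below $v$; call this set $D_v$. The cactus has exactly two edges crossing this cut, namely the two edges of $c$ at $v$. Since $s$ and $t$ are on opposite sides, and $\COPT$ makes them $3$-edge-connected, some link $\ell\in\COPT$ must cover the cut $(D_v,V\setminus D_v)$. In the structured instance, $\ell$ either lies inside a cycle or is an uplink between tree vertices.
\begin{itemize}
\item If $\ell$ is an uplink, it covers the tree edge $\{r_c,v\}$ directly, because one endpoint lies in $D_v$ and the other lies outside it, above $v$.
\item If $\ell$ lies inside a cycle, we map it to a tree link, as described in Step 2.
\end{itemize}

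\textbf{Step 2: mapping within-cycle links to tree links.} Let $\ell=(x,y)$ lie in a cycle $c'$. For each endpoint $z\in\{x,y\}$ that is a tree vertex other than $r_{c'}$, we use the tree link $(r_{c'},z)\in\mathcal{M}$. We need these pairs to be available links. One option is to argue via structuring property 4 that some uplink covers each edge. A simpler route is to note that we are free to bound $\TOPT$ by any feasible tree solution.

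The issue is that $\TOPT$ must use links of $\mathcal{M}$. So instead, for each endpoint $z$ we choose an arbitrary uplink of $\mathcal{M}$ covering the edge $\{r_{c'},z\}$, which exists by property 4. This charges at most $2$ tree links to each within-cycle link of $\COPT$, and $1$ tree link to each uplink of $\COPT$. The result is a set $\mathcal{M}'$ of size at most $2|\COPT|$.

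\textbf{Step 3: feasibility.} We check that every request edge $\{r_c,v\}$ is covered by $\mathcal{M}'$. Take the link $\ell\in\COPT$ covering the cut $D_v$ from Step 1.
\begin{itemize}
\item If $\ell$ is an uplink, it covers the edge itself.
\item If $\ell$ lies in a cycle $c'$, it has exactly one endpoint $z$ in $D_v$. A within-cycle link crossing the boundary of $D_v$ must lie in $c$ itself, so $c'=c$, and then $z=v$ exactly, because every other vertex of $c$ lies outside $D_v$. Since $v\neq r_c$, Step 2 added an uplink covering $\{r_c,v\}$.
\end{itemize}
Hence $\mathcal{M}'$ is feasible for all forwarded tree requests, which gives $|\TOPT|\le|\mathcal{M}'|\le 2|\COPT|$.

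\textbf{Main obstacle.} The delicate point is the case analysis in Step 3. It must be verified carefully that a within-cycle link covering the cut $D_v$ lies in $c$ and has $v$ as an endpoint. Equally, it must be verified that an uplink crossing $D_v$ genuinely covers the tree edge $\{r_c,v\}$ in $T$, meaning one endpoint is a descendant of $v$ and the other is not. Both facts follow from the way the decomposition tree nests sub-cacti under articulation vertices.
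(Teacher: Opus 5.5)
Your proof is correct, and its charging scheme matches the paper's. Each uplink of $\COPT$ pays for itself, and each within-cycle link of $\COPT$ pays for at most two tree edges $\{r_c,v\}$ (one per endpoint), each bought through an arbitrary covering uplink supplied by property~4.

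The difference is how you identify the responsible $\COPT$ link. The paper splits $\COPT$ by whether a link covers some forwarded tree request. For the remaining tree requests $\{s_i,r_c\}$, it uses \Cref{prop:CacConCross} to argue that the \emph{first} link of a crossing sequence for the parent request $\{s_i,t_i\}$ must be a link of $c$ at $s_i$. You instead observe that the two edges of $c$ at $v$ form a $2$-cut that isolates $D_v$ and separates $s$ from $t$, so some $\COPT$ link crosses it. Any such link is either an uplink from the subtree of $v$ to a proper ancestor of $v$, or a link of $c$ with endpoint exactly $v$.

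Your cut argument is more direct and more robust. It needs no crossing-sequence characterization. It also handles a case the paper's wording passes over: a first link that has $s_i$ as an endpoint but otherwise lies in the sub-cactus below $s_i$. Such a link has $s_i$ in its projection vertex set, yet it neither lies in $c$ nor covers $\{s_i,r_c\}$.

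Your Step~2 would read better stated directly: put every uplink of $\COPT$ into $\mathcal{M}'$, and for every other link of $\COPT$, lying in a cycle $c'$, add one property-4 uplink per endpoint $z\neq r_{c'}$ on $T$. The construction and the bound $|\mathcal{M}'|\le 2|\COPT|$ are sound as written.
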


Before proving the above lemmas, we first show how they suffice to complete the proof of \Cref{lem:solvingStructured}
\begin{proof}[Proof of \Cref{lem:solvingStructured}]
    Since each link in $\COPT$ which is within a cycle is part of only one cycle, we have 
    \[|\COPT| \geq  \sum_c|\COPT_c^{W}| .\] 

    Also the cost of our algorithm is at most
    \begin{align*}
        |\ALG| &\le \left(\sum_c c_r\log n_c\cdot|\OPT_c|\right) + c_t\log n\cdot|\TOPT| \\
               &\le (c_r+c_t)\log n\left( \sum_c|\OPT_c| + |\TOPT|\right) \\
               &\le (c_r+c_t)\log n\left( 17\sum_c |\COPT_c^{W}| + 2\cdot|\COPT| \right)
                \tag{From \Cref{lem:cyclecharge} and \ref{lem:treecharge}} \\
               &\le 19(c_r+c_t)\log n\cdot|\COPT| =O(\log n)\cdot|\COPT|.
    \end{align*}
    Hence we see that the algorithm returns a solution that is at most $O(\log n)$ times the optimal
    offline solution, giving us the required competitive ratio.
\end{proof}

\subsubsection{Bound for a Cycle Instance}
We now prove~\cref{lem:cyclecharge}, i.e.,
    $17 \cdot |\COPT_c^{W}| \ge |\OPT_c|$.

\begin{proof}
Note that $\COPT_c$ is a feasible solution for the cycle instance $c$, but $\COPT_c^{W}$ alone need
not be feasible.  Therefore, consider the set $\SOL = \COPT_c^{W} \sqcup \COPT_c^{M'}$, where  
$\COPT_c^{M'} \subseteq \COPT_c^{M}$ is a minimal subset such that $\SOL$ covers all requests assigned to the cycle $c$.  
Thus, we have
\[
    |\COPT_c^{M'}| + |\COPT_c^{W}| \ge |\OPT_c| .
\]

We will show via a charging argument that $|\COPT_c^{M'}| \le 16\,|\COPT_c^{W}|$, which from above will immediately give us the desired bound: 
\[ 17\,|\COPT_c^{W}| \;\ge\; |\OPT_c|.\]

Now to show $|\COPT_c^{M'}| \le 16\,|\COPT_c^{W}|$, construct a bipartite graph whose left nodes correspond to all links in $\COPT_c^{M'}$, and
whose right nodes correspond to all requests $(s_i, t_i)$ sent to cycle $c$.  To define edges in
this bipartite graph, fix for each request $(s_i, t_i)$ a shortest $s_i$–$t_i$ path in $c$, meaning
a sequence of crossing links from $\SOL$ where the first link is incident to $s_i$, the last link is
incident to $t_i$ (such a
sequence of links exists by \Cref{prop:CacConCross}).

Observe that each link in $\COPT_c^{M'}$ belongs to the middle portion of some original link and is
therefore incident on the root $r_c$ of cycle $c$ (since structured instances have only uplinks between two vertices that are not within a cycle).
Consequently, any shortest $s_i$–$t_i$ path can include at most two links from $\COPT_c^{M'}$, as
otherwise it would revisit $r_c$ multiple times, contradicting our assumption that the path is shortest.  

In the bipartite graph, we add an edge between a link $e_i \in \COPT_c^{M'}$ (left node) and a
request $(s_j, t_j)$ (right node) if $e_i$ lies on the fixed shortest $s_j$–$t_j$ path.  By
construction, the degree of any right node (request) is at most two, while every left node has
degree at least one (by the minimality of $\COPT_c^{M'}$). It is easy to see that such a bipartite graph with these degree bounds has a matching $D$ such that

\[
    2|D| \ge |\COPT_c^{M'}| .
\]

Finally, note that a request $(s,t)$ between two articulation vertices is never assigned to a cycle:
such requests are instead handled in the TAP phase, which processes $(s,r_c)$ and $(t,r_c)$.  Since
the TAP instance is assumed feasible, it $3$-edge-connects $(s,r_c)$ and $(r_c,t)$, and hence also
$(s,t)$.  Therefore, without loss of generality, we may assume that each $s_i$ is a non-articulation
vertex.

Now, consider the set of terminals of the requests matched in $D$ above (i.e., the $s_i$’s and
$t_i$’s) and form a graph where each request is represented as an edge between its two terminals.
This graph cannot contain a cycle, since any edge that would close a cycle corresponds to a request
that is already satisfied and therefore would never have been sent to the CycAP instance.  Let there
be $a$ terminals labeled $s_i$ (non-articulation vertices) and $b$ terminals labeled $t_i$. Since this graph is a forest with
$|D|$ edges, we have $a + b \ge |D| + 1\geq |D|$.

    \begin{description}
        \item[Case 1: $a\ge|D|/2$.] Consider a link $e_i$ and a request $(s_i, t_i)$ that are matched in $D$.
             Since $s_i$ is a non-articulation vertex, the first link on the
            $s_i$--$t_i$ path must lie in $\COPT_c^{W}$, because both endpoints of any link in
            $\COPT_c^{M'}$ are articulation vertices.  Thus, there are at least $|D|/4$ distinct links in $\COPT_C^{W}$, which act as the first links in the $s_i$--$t_i$ paths above. Thus, $|D|\leq 4|\COPT_c^W|$.

        \item[Case 2: $b \geq |D| /2$, and more than $|D|/4$ of the $t_i$'s are
            non-articulation vertices.] Consider a link $e_i$ and a request
            $(s_i, t_i)$ that are matched in $D$ and $t_i$ is a non-articulation vertex.  Since $t_i$ is not an articulation vertex, the last link on the $s_i$--$t_i$ path must belong to $\COPT_c^{W}$. Thus, there are at least $|D|/8$ distinct links in $\COPT_C^{W}$, which act as the last links in the $s_i$--$t_i$ paths above. Thus, $|D|\leq 8|\COPT_c^W|$.

        \item[Case 3: $b \geq |D| /2$, and more than $|D|/4$ of the $t_i$'s are articulation
            vertices.]
            Consider all requests $(s_i,t_i)$ mapped in $D$, where $t_i$ is an articulation vertex.
            In this situation we show that all $s_i$ in these requests must be distinct. To see why, suppose
            $s_i = s_j$ for some $j$ arriving after $i$.  When the request $(s_j, t_j)$ arrives, the
            algorithm would already have served the requests  $(r_c, t_i)$, $(s_i, t_i)$, and $(r_c,
            t_j)$.  Hence, $(s_j, t_j)$ would already be connected at the time of arrival and would
            therefore never be sent to the cycle instance. Thus, there are at least $|D|/4$ distinct $s_i$, and similar to case 1 above, there are at least $|D|/8$ distinct links in $\COPT^W_c$, giving us $|D|\leq 8|\COPT^W_C|$.

        We have shown from the above charging argument that $|D| \le 8\,|\COPT_c^{W}|$, which implies  
$|\COPT_c^{M'}| \le 16\,|\COPT_c^{W}|$, as desired.  \qedhere
    \end{description}
\end{proof}

\subsubsection{Bound for the Tree Instance} Finally, we prove~\cref{lem:treecharge}. i.e.,
    $2 \cdot |\COPT| \ge |\TOPT|$.
\begin{proof}

We will construct a feasible solution $\TOPT'$ for the TAP instance—of size at most $2|\COPT|$. We
partition $\COPT = \COPT^1 \cup \COPT^2$, where $\COPT^1$ consists of all links in $\COPT$ each of
which covers at least one tree request forwarded to $\ALG_T$. Add all requests in $\COPT^1$ to
$\TOPT'$.

Now, consider the set of all tree requests, of cardinality $q$, generated for a cycle $c$: $\{\{s_1,
r_c\},\dots, \{s_q, r_c\}\}$ that are not already covered by $\COPT^1$. A request $\{s_i,r_c\}$ was
generated due to some \textit{parent} request $\{s_i,t_i\}$ for the cactus instance. Thus, $\COPT$
must have a sequence of crossing links covering $\{s_i,t_i\}$, where $s_i$ is in the projection
vertex set of the first link of the sequence. If the first link in the sequence is $\{s_i,r_c\}$ or
if it is not within the cycle $c$, then it must be an uplink covering $\{s_i,r_c\}$, a contradiction
to our assumption that $\{s_i, r_c\}$ is not already covered by $\COPT^1$. Thus, the first link in
the sequence is $\{s_i, w_i\}\in \COPT^2$, which is within the cycle $c$ and $w_i\neq r_c$. Thus,
there are at least $q/2$ links in $\COPT^2$ within the cycle $c$ that serve as the first link in the
sequence of crossing links to cover the $q$ requests $\{s_i,t_i\}$ corresponding to the $q$ tree
requests $\{s_i,r_c\}$. In $\TOPT'$ add $q$ links each covering one $\{s_i,r_c\}$  tree request
(which exist as each edge of the decomposition tree in a structured instance is covered by some
link). Do this for each cycle $c$. Thus, clearly, $|\TOPT'| \leq |\COPT^1| + 2|\COPT^2| \leq
2|\COPT|$. Now since $\TOPT'$ is a feasible solution for the tree instance and $\TOPT$ is an optimal
solution, $|\TOPT|\leq|\TOPT'|$, and we are done.
\end{proof}

%%%%%%%%%%%%End of Section%%%%%%%%%

\section{An \texorpdfstring{$\Omega(\log n)$}{Ω(log n)} Lower Bound for TAP under Random Order Arrivals}
\label{sec:ROlowerbound}
In this section, we prove \Cref{thm:random-order-lower}. More specifically, we construct an instance of TAP in which the requests arrive in random order, and show that every algorithm for TAP has competitive ratio $\Omega(\log n)$.

Consider a complete binary tree $B$ with $n$ vertices. We now subdivide the edges of the tree in the
following way to obtain a tree $T$. Edges at level $1$ in $B$ are replaced by paths of length $n/2$
in $T$, edges at level $2$ in $B$ are replaced by paths of length $n/4$ in $T$, and so on. In
general, edges at level $i$ in $B$ are replaced by paths of length $n/2^i$ in $T$. The branching
vertices in $T$ are in one to one correspondence with the vertices of $B$. An edge in $T$ is said to
be in level $i$ if it appears between two consecutive branching vertices whose corresponding levels
in $B$ are $i-1$ and $i$. The number of vertices in $T$ are $O(n\log n)$ and the number of leaves in
$T$ and $B$ are the same.

The tree augmentation instance will consist of the tree $T$ and for each leaf $\ell$, $(r,\ell)$
will be a link, where $r$ is the root of $T$. The online requests consist of all edges on a root to
leaf path, where the leaf is chosen uniformly at random from the leaf. Equivalently, the requests
consists of all edges on a uniform random walk starting at the root and going down the tree.

The requests are revealed to an online algorithm in a uniformly random order.  Without loss of
generality, an online algorithm when presented an uncovered request will pick precisely one link
covering that request. 

We call a request $e$ at level $i$ a \emph{fresh} request, if it is requested before any other
requests in levels $\{i,i+1,\ldots,\log n\}$. Note that the algorithm never picks a link on a
request which is not a fresh request (uncovered implies fresh). Furthermore, on a fresh request the
algorithm picks an edge with probability at least $\frac 1 2$ (the algorithm has picked precisely
one link that covers all the previous requests).

Let $S$ denote the number of fresh requests. Let $X_i$ be an indicator random variable which is $1$
iff on the $i^\text{th}$ fresh request the algorithm picks an edge. Thus the cost of the algorithm
is precisely $X_1 + \cdots + X_S$.

Now the probability that a request from level $i$ is fresh is at least $\frac 1 2$, as the total
number of requests in levels $\{i+1, \cdots, \log n\}$ is one less than the number of requests in
level $i$. Thus, $\mathbb E[S] \geq \frac {\log n} 2$.

Now, we can compute the expected cost:
 \begin{align*}  \mathbb E[X_1 + \cdots + X_S]= \sum_{i=1}^{\log n}\Pr[S=s]\mathbb E[X_1 + \cdots + X_s| S = s]
    \geq \sum_{i=1}^{\log n}\Pr[S=s] \frac s 2
    = \frac 1 2 \mathbb E[S]
    \geq \frac 1 4 \log n.
\end{align*}

%%%%%%%%%%%%End of Section%%%%%%%%%

\section{%
\texorpdfstring{$O(\log^2 n)$}{O(log^2 n)}- and
\texorpdfstring{$O(\log n)$}{O(log n)}-Competitive Algorithms
for Online WCAP and Online CycAP} \label{sec:logsqnWCAP}

In this section, we present an $O(\log^2 n)$-competitive algorithm for WCAP, proving~\Cref{thm:weighted-cap}. By~\Cref{thm:cap-from-cacap}, it suffices to design an $O(\log^2 n)$-competitive algorithm for the weighted cactus augmentation problem. We establish the following result.

\begin{restatable}{theorem}{logsqcacap}\label{thm:logsqcacap}
There exists an $O(\log^2 n)$-competitive online algorithm for the online
{\sc Weighted Cactus Augmentation Problem} (WCacAP).
\end{restatable}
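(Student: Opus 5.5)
We plan to reduce online WCacAP to online node-weighted Steiner forest (NWSF) and invoke the deterministic $O(\log^2 n)$-competitive algorithm for online NWSF of~\cite{Borst0V25}. The construction follows the offline CAP-to-Steiner-tree reduction of~\cite{BasavarajuFGMRS14,BGJ20}, with two changes: we aim for a forest rather than a tree, and we include articulation vertices among the terminals.

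\textbf{Construction.} Given the cactus $G$ and weighted links $L$, build an auxiliary graph $H$ as follows.
\begin{itemize}
\item For every link $\ell\in L$ there is a Steiner node $x_\ell$ of weight $w(\ell)$.
\item For every vertex $v\in V(G)$ of degree $2$ or which is an articulation vertex, there is a zero-weight terminal node $t_v$.
\item Add an edge $\{x_\ell,x_{\ell'}\}$ whenever $\ell$ and $\ell'$ cross in the sense of \Cref{def:links}.
\item Add an edge $\{t_v,x_\ell\}$ whenever $v\in\mathcal{V}(\ell)$.
\end{itemize}
By \Cref{prop:CacConCross}, for any $A\subseteq L$, two vertices $u,v$ are $3$-edge-connected in $G\cup A$ if and only if $t_u$ and $t_v$ lie in one component of $H[\{t_\cdot\}\cup\{x_\ell:\ell\in A\}]$, provided $u$ and $v$ have terminal nodes. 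If a requested vertex $u$ is neither a degree-$2$ nor an articulation vertex (this cannot happen in a cactus, where every vertex has even degree $\ge 2$ and degree $>2$ forces lying on $\ge 2$ cycles), then every vertex has a terminal. It therefore suffices to treat degree-$2$ vertices and articulation vertices, which is why the articulation vertices must be terminals.

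\textbf{Online simulation.} On arrival of a request $\{u,v\}$, forward the pair $\{t_u,t_v\}$ to the online NWSF algorithm. Whenever it buys a node $x_\ell$, add the link $\ell$.
\begin{itemize}
\item \emph{Feasibility.} Once the NWSF solution connects $t_u$ and $t_v$, there is a path $t_u,x_{\ell_1},\dots,x_{\ell_m},t_v$. This path is exactly a sequence of crossing links from $u$ to $v$ with $u\in\mathcal{V}(\ell_1)$ and $v\in\mathcal{V}(\ell_m)$, so \Cref{prop:CacConCross} gives $3$-edge-connectivity.
\item \emph{Cost.} Conversely, any offline WCacAP solution $A$ serving all requests yields the NWSF solution $\{x_\ell:\ell\in A\}$ of equal weight, again by \Cref{prop:CacConCross}. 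Hence the NWSF optimum is at most the WCacAP optimum, and the cost of our algorithm equals the NWSF algorithm's cost. The competitive ratio carries over with no loss.
\end{itemize}

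\textbf{Size of $H$.} $H$ has $|L|+O(n)=O(n^2)$ nodes, so $\log|V(H)|=O(\log n)$ and the bound is $O(\log^2 n)$. The NWSF bound of~\cite{Borst0V25} should be in terms of the number of terminals or nodes; in either case this is polynomial in $n$, so the bound stays $O(\log^2 n)$. Together with \Cref{thm:cap-from-cacap}, this also yields \Cref{thm:weighted-cap}.

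\textbf{Main obstacle.} The main difficulty is checking that \Cref{prop:CacConCross} applies exactly as encoded. The delicate direction is that a path in $H$ through a terminal $t_w$ lying on two links that share $w$ as a projection vertex (but not as an endpoint) still certifies connectivity. I would handle it with \Cref{prop:transitivity}: each link makes all pairs in its projection set $3$-edge-connected, and $3$-edge-connectivity is transitive. So any $H$-path through terminal nodes, not only through crossing edges, certifies $3$-edge-connectivity of $u$ and $v$. The reverse direction, that the offline optimum maps to a connected NWSF solution, is exactly the "only if" part of \Cref{prop:CacConCross}.
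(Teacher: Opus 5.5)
Your proposal is correct and follows the paper's proof essentially step for step. It uses the same auxiliary graph $H$: Steiner nodes for links, terminals for all cactus vertices (your ``degree-$2$ or articulation'' set is exactly $V(G)$), incidence edges via projection vertex sets, and crossing edges. It then runs the same online simulation and invokes the $O(\log^2 n')$-competitive NWSF algorithm of~\cite{Borst0V25} with $n' = O(n^2)$. Your explicit appeal to \Cref{prop:transitivity} and transitivity of $3$-edge-connectivity for $H$-paths through intermediate terminals adds care that the paper's proof of \Cref{lem:reduction_preservation} glosses over. This matters when two links share an articulation vertex in their projection sets without crossing.
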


    We then improve this competitive ratio for the special case in which the underlying cactus is a cycle and the links are unweighted, proving~\Cref{thm:unwtcycap}.

\unwtcycap*

Our proof of \Cref{thm:logsqcacap} proceeds via a reduction to the Online Node-Weighted Steiner Forest (NWSF) problem and the application of a known online NWSF algorithm. For the special case of unweighted cycle augmentation, we show that the resulting NWSF instance admits an approximation via a reduction to the Online Edge-Weighted Steiner Forest (EWSF) problem, for which a better competitive algorithm is known, thereby yielding \Cref{thm:unwtcycap}. We first recall the NWSF and EWSF problems.

\begin{definition}[Node-Weighted Steiner Forest]
Given an undirected graph $G = (V,E)$, a cost function $c : V \to \mathbb{R}_{\ge 0}$ on the vertices, and a set of terminal pairs $\mathcal{T} \subseteq V \times V$, find a minimum-cost subset of vertices $S \subseteq V$ such that for each $(s,t) \in \mathcal{T}$, the induced subgraph $G[S]$ contains a path between $s$ and $t$.
\end{definition}

\begin{definition}[Edge-Weighted Steiner Forest]
Given an undirected graph $G = (V,E)$, a cost function $c : E \to \mathbb{R}_{\ge 0}$ on the edges, and a set of terminal pairs $\mathcal{T} \subseteq V \times V$, find a minimum-cost subset of edges $F \subseteq E$ such that for each $(s,t) \in \mathcal{T}$, the subgraph $(V,F)$ contains a path between $s$ and $t$.
\end{definition}

In the online versions of the above problems, the set of terminal pairs is not given upfront; instead, request pairs arrive one by one in an online fashion. Upon the arrival of a request pair, an online algorithm must irrevocably select vertices or edges so as to connect the requested pair using the vertices or edges chosen so far.

\subsection{Reduction to Online NWSF}

Our reduction from Online WCacAP to Online NWSF is inspired by earlier reductions of the offline
Connectivity Augmentation Problem to the Steiner Tree problem
\cite{BasavarajuFGMRS14,BGJ20}.

Given an instance of WCacAP with cactus $G$ and a set of links $L$, we construct an instance of NWSF,
denoted by $H$, where the vertex set is $V(H) = V_H \sqcup T_H$, as follows.
\begin{itemize}
  \item \textbf{Steiner nodes.}
  For every link $\ell = \{u,v\} \in L$ with cost $w_\ell$, we create a Steiner node $v_\ell \in V_H$
  with cost $c(v_\ell) = w_\ell$.

  \item \textbf{Terminal nodes.}
  For every vertex $v \in V(G)$, we create a terminal node $v \in T_H$ of zero cost.

  \item \textbf{Incidence edges.}
  For every link $\ell \in L$, and for every vertex $u$ in the projection vertex set of $\ell$, we add
  the edge $\{v_\ell, u\}$ to $E(H)$.

  \item \textbf{Crossing edges.}
  For every pair of crossing links $\ell_1$ and $\ell_2$, we add the edge
  $\{v_{\ell_1}, v_{\ell_2}\}$ to $E(H)$.
\end{itemize}

When a WCacAP request $(s,t)$ arrives, it is translated into an NWSF request to connect the
corresponding terminal nodes $s$ and $t$ in the graph $H$. We handle these requests using an online
algorithm for NWSF. For each Steiner node $v_\ell$ selected by the online NWSF algorithm, we purchase
the corresponding link $\ell$ in the WCacAP instance. An example of this reduction is illustrated in
\Cref{fig:NWSFreduction}.

\begin{figure}
\hspace{15pt}
\begin{tikzpicture}
    % Define the style for 'state' nodes (for v_i nodes)
    \tikzstyle{state}=[fill=white, draw, minimum size=10pt, circle]
    
    % Define the nodes
    \node[state] (1) {$v_1$};
    \node[state, below right of=1, node distance=1.7cm] (2) {$v_2$};
    \node[state, left of=2, node distance=1.7cm] (3) {$v_3$};
    \node[state, right of=1, node distance=2.5cm] (8) {$v_8$};
    \node[state, above of=8, node distance=1.2cm] (9) {$v_9$};
    \node[state, right of=9, node distance=1.3cm] (10) {$v_{10}$};
    \node[state, below of=8, node distance=1.2cm] (11) {$v_{11}$};
    \node[state, right of=11, node distance=1.3cm] (12) {$v_{12}$};
    \node[state, above right of=1, node distance=1.3cm] (7) {$v_7$};
    \node[state, above left of=1, node distance=1.3cm] (4) {$v_4$};
    \node[state, above of=7, node distance=1.3cm] (6) {$v_6$};
    \node[state, above of=4, node distance=1.3cm] (5) {$v_5$};

    % Define the paths for edges
    \path (1) edge[bend right=15] (8);
    \path (1) edge[bend left=15] (8);
    \path (1) edge[bend left=15] (2);
    \path (1) edge[bend right=15] (2);
    \path (1) edge[bend left=15] (3);
    \path (1) edge[bend right=15] (3);
    \path (8) edge (9);
    \path (8) edge (10);
    \path (10) edge (9);
    \path (8) edge (11);
    \path (8) edge (12);
    \path (12) edge (11);
    \path (1) edge (4);
    \path (4) edge (5);
    \path (5) edge (6);
    \path (6) edge (7);
    \path (7) edge (1);

    % Define dashed edges
    \path [dashed] (9) edge [bend right=0] node[above] {$\ell_1$}(7);
    \path [dashed] (7) edge[bend left=25] node[below] {$\ell_2$} (5);
    \path [dashed] (4) edge[bend left=12] node[above] {$\ell_5$} (6);
    \path[dashed] (2) edge node[below] {$\ell_4$}(3);
    \path[dashed] (10) edge node[right]{$\ell_3$}(12);
    \path[dashed] (11) edge[bend right=25] node[below] {$\ell_6$} (12);
    \path[dashed] (9) edge[bend left=25] node[above] {$\ell_7$} (10);
    \path[dashed] (8) edge[bend right=40] node[left] {$\ell_8$} (11);
\end{tikzpicture}
\hfill
\begin{tikzpicture}
    % Define the style for 'state' nodes (for \ell_i nodes)
    \tikzstyle{state}=[fill=white, draw, minimum size=5pt, circle]
    
    % Define the \ell_i nodes in circles
    \node[state] (l1) {$\ell_1$};
    \node[state, right of=l1, node distance=1.7cm] (l3) {$\ell_3$};
    \node[state, below of=l1, node distance=2.2cm] (l2) {$\ell_2$};
    \node[state, left of=l2, node distance=1.2cm] (l5) {$\ell_5$};
    \node[state, left of=l5, node distance=1.1cm] (l4) {$\ell_4$};
    \node[state, right of=l2, node distance=1.7cm] (l7) {$\ell_7$};
    \node[state, right of=l7, node distance=1.7cm] (l8) {$\ell_8$};
    \node[state, right of=l8, node distance=1.2cm] (l6) {$\ell_6$};

    % Define the rectangle nodes (for v nodes)
    \tikzstyle{vstate}=[fill=white, draw, minimum size=8pt, rectangle]
    \node[vstate, above of=l4, node distance=2cm] (2) {$v_2$};
    \node[vstate, right of=2, node distance=1cm] (3) {$v_3$};
    \node[vstate, below of=l4, node distance=1.9cm] (4) {$v_4$};
    \node[vstate, right of=4, node distance=1cm] (6) {$v_6$};
    \node[vstate, right of=6, node distance=1cm] (5) {$v_5$};
    \node[vstate, right of=5, node distance=1cm] (7) {$v_7$};
    \node[vstate, right of=7, node distance=1cm] (9) {$v_9$};
    \node[vstate, right of=9, node distance=1cm] (10) {$v_{10}$};
    \node[vstate, right of=10, node distance=0.9cm] (11) {$v_{11}$};
    \node[vstate, right of=l3, node distance=1.4cm] (12) {$v_{12}$};
    \node[vstate, right of=11, node distance=1.2cm] (8) {$v_{8}$};
    \node[vstate, right of=12, node distance=1.4cm] (1) {$v_1$};

    % Define edges
    \path (l1) edge[bend left=10] (7);
    \path (l1) edge (9);
    \path (l3) edge[bend left=5] (10);
    \path (l3) edge (12);
    \path (l2) edge (5);
    \path (l2) edge (7);
    \path (l4) edge (2);
    \path (l4) edge (3);
    \path (l5) edge (4);
    \path (l5) edge (6);
    \path (l7) edge (9);
    \path (l7) edge (10);
    \path (l6) edge (12);
    \path (l6) edge (11);
    \path (l1) edge[bend left=0] (8);
    \path (l1) edge[bend left=22] (1);
    \path (l3) edge[bend left=19] (8);
    \path (l4) edge[bend left=5] (1);

    % Define cross edges
    \path (l4) edge (l1);
    \path (l3) edge (l7);
    \path (l3) edge (l1);
    \path(l3) edge (l6);
    \path (l2) edge (l5);
    \path (l1) edge (l7);
    \path (l8) edge (11);
    \path (l8) edge (l6);
    \path (l8) edge (l3);
    \path (l8) edge (l1);
    \path (l1) edge (l2);
\end{tikzpicture}
\hspace{15pt}
\caption{\cite{BGJ20} {\bf (left)} Instance of CacAP, where dashed edges denote links. {\bf (right)} The Corresponding Steiner forest instance, where square nodes denote terminals.}
\label{fig:NWSFreduction}
\end{figure}

The following lemma shows that, for any sequence of requests, a set of links $ L'\subset L$  is feasible for the WCacAP instance if and only if the corresponding Steiner nodes $S_{L'}=\{v_\ell \mid \ell \in L'\}$ is feasible for the associated NWSF instance.

\begin{lemma}\label{lem:reduction_preservation}
A set of links $L' \subseteq L$ ensures three edge-disjoint paths between a pair $(s,t)$ in
$G \cup L'$ if and only if $s$ and $t$ are connected in the induced subgraph
$H[S_{L'} \cup T_H]$, where $S_{L'} := \{v_\ell \mid \ell \in L'\}$.
\end{lemma}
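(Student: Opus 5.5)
The plan is to derive the lemma from \Cref{prop:CacConCross}, which already states when $s$ and $t$ are $3$-edge-connected in $G \cup L'$. It suffices to show that the condition there (a sequence of crossing links in $L'$ from $s$ to $t$) is equivalent to $s$ and $t$ being connected in $H[S_{L'} \cup T_H]$. The argument is mostly a translation between sequences of links and paths in $H$. The one real issue is that paths in $H$ may pass through terminal nodes, which do not directly correspond to crossing relations.

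\textbf{Forward direction.} Suppose $L'$ gives three edge-disjoint $s$--$t$ paths. By \Cref{prop:CacConCross} there is a sequence $\ell_1,\dots,\ell_m$ in $L'$ with $s\in\mathcal{V}(\ell_1)$, $t\in\mathcal{V}(\ell_m)$, and $\ell_j,\ell_{j+1}$ crossing for each $j$. By the construction of $H$:
\begin{itemize}
\item $\{s,v_{\ell_1}\}$ and $\{v_{\ell_m},t\}$ are incidence edges;
\item each $\{v_{\ell_j},v_{\ell_{j+1}}\}$ is a crossing edge.
\end{itemize}
So $s,v_{\ell_1},\dots,v_{\ell_m},t$ is a walk in $H[S_{L'}\cup T_H]$.

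\textbf{Reverse direction.} Take an $s$--$t$ path in $H[S_{L'}\cup T_H]$. Split it at every terminal node it visits, including $s$ and $t$. This produces segments of the form $x, v_{\ell_1},\dots,v_{\ell_m}, y$ with $x,y\in T_H$ and all intermediate nodes Steiner nodes. Two terminals are never adjacent in $H$, so every segment has $m\ge 1$. Within a segment, consecutive Steiner nodes are joined by crossing edges, and $x\in\mathcal{V}(\ell_1)$, $y\in\mathcal{V}(\ell_m)$ by the incidence edges. By \Cref{prop:CacConCross}, $x$ and $y$ are therefore $3$-edge-connected in $G\cup L'$.

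The segments form a chain $s=x_0,x_1,\dots,x_p=t$ of terminals, each consecutive pair $3$-edge-connected. Since $3$-edge-connectivity between vertices is an equivalence relation (transitive, by Menger/cut arguments), $s$ and $t$ are $3$-edge-connected. This is the only step that goes beyond unwinding definitions.

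\textbf{Degenerate cases and the main obstacle.} Pairs with $s=t$ can be ignored. The main obstacle is making sure the incidence edges match the hypotheses of \Cref{prop:CacConCross} exactly. In the construction, incidence edges join $v_\ell$ to every vertex of the projection vertex set $\mathcal{V}(\ell)$, not just to the endpoints of $\ell$. The proposition is stated using membership in the projection set, so the two notions coincide. Likewise, the crossing edges are defined by precisely the crossing relation of \Cref{def:links}, which also covers links sharing exactly one endpoint. With these checks, both directions go through without further case analysis.
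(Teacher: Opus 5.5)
Your overall route is the paper's: reduce to \Cref{prop:CacConCross} and translate crossing sequences into paths in $H$. Your reverse direction is more careful than the paper's, which simply asserts that crossing sequences and such paths correspond. Splitting at every terminal and using transitivity of $3$-edge-connectivity is the right way to handle intermediate terminals. In a general cactus you cannot shortcut $v_{\ell_1},w,v_{\ell_2}$ to a crossing edge, because two links whose projection vertex sets share an articulation vertex $w$ need not cross; the example below shows this.

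Your forward direction, which is identical to the paper's, has a genuine hole. It uses the ($\Rightarrow$) direction of \Cref{prop:CacConCross} for $s,t$ on different cycles, and as literally stated that direction is false. Here is a counterexample.
\begin{itemize}
\item Take three triangles $C_1,C_2,C_3$ glued at one articulation vertex $w$. Pick $a\in C_1$, $b\in C_2$, $x\in C_3$, all distinct from $w$.
\item Let $L'=\{\ell_1,\ell_2\}$ with $\ell_1=\{a,b\}$ and $\ell_2=\{w,x\}$. Then $\mathcal{V}(\ell_1)=(a,w,b)$ and $\mathcal{V}(\ell_2)=(w,x)$.
\item The two links share no endpoint and have no projections onto a common cycle, so they do not cross. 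Hence $L'$ contains no crossing sequence from $a$ to $x$.
\item Yet $a$ and $x$ are $3$-edge-connected in $G\cup L'$. Every $2$-cut separating them lies either in $C_1$, where $\ell_1$ covers it, or in $C_3$, where $\ell_2$ covers it.
\end{itemize}
The concatenation step in the paper's proof of \Cref{prop:CacConCross} silently assumes that the last link of one block crosses the first link of the next. The lemma itself survives, because $a,v_{\ell_1},w,v_{\ell_2},x$ is a path in $H[S_{L'}\cup T_H]$ passing through the terminal $w$.

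To repair the forward direction, reuse the idea from your reverse direction.
\begin{itemize}
\item Write $\mathcal{V}((s,t))=(s=c_0,c_1,\dots,c_k,c_{k+1}=t)$.
\item By \Cref{prop:transitivity}, each pair $(c_i,c_{i+1})$ is $3$-edge-connected, and it lies on a single cycle $C$.
\item On a single cycle the ($\Rightarrow$) direction is sound. Project $L'$ onto $C$ and apply \Cref{prop:connectedCrossing}. Links whose projections cross in $C$ cross by definition, and $c_i$, being an endpoint of a projection, lies in that link's projection vertex set.
\item This gives a path from $c_i$ to $c_{i+1}$ in $H[S_{L'}\cup T_H]$ for each $i$.
\item Concatenating these paths at the terminal nodes $c_1,\dots,c_k\in T_H$ yields the required $s$--$t$ path.
\end{itemize}
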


\begin{proof}
By \Cref{prop:CacConCross}, a set of links $L' \subseteq L$ ensures three edge-disjoint paths between
$s$ and $t$ in $G \cup L'$ if and only if there exists a sequence of links in $L'$ such that
$s$ lies in the projection vertex set of the first link, $t$ lies in the projection vertex set of
the last link, and every two consecutive links in the sequence cross.

By construction of the reduced NWSF instance $H$, each link $\ell \in L$ corresponds to a Steiner
node $v_\ell$, and for every pair of crossing links $\ell_1$ and $\ell_2$, the graph $H$ contains the
edge $\{v_{\ell_1}, v_{\ell_2}\}$. Moreover, each terminal node $u \in T_H$ is adjacent in $H$ to all
Steiner nodes corresponding to links whose projection vertex set contains $u$.

Hence, the existence of such a sequence of crossing links in $L'$ from $s$ to $t$ is equivalent to
the existence of a path between terminals $s$ and $t$ in the induced subgraph
$H[S_{L'} \cup T_H]$, where $S_{L'} := \{v_\ell \mid \ell \in L'\}$. This completes the proof.
\end{proof}

Thus, \Cref{lem:reduction_preservation} immediately implies that the above reduction is both
cost- and competitiveness-preserving: each purchased link of cost $w_\ell$ in WCacAP corresponds
one-to-one to a purchased Steiner node of the same cost in NWSF, and vice versa. Consequently, any
competitive or approximation guarantee achieved for the online NWSF instance transfers unchanged
to the online WCacAP instance. Since an $O(\log^2 n')$-competitive algorithm is known for
NWSF~\cite{Borst0V25}, where $n' = |V(H)| = |V_H| + |T_H| = |V(G)| + |L| \le n + \binom{n}{2}$ and
$n = |V(G)|$, we obtain the following theorem.

\logsqcacap*
In the next subsection, we apply the above reduction to the special case of unweighted cycle augmentation.

\subsection{Approximating CycAP using EWSF}

While the above reduction allows us to reduce WCacAP exactly to NWSF, for the special case of
\emph{unweighted} cycle augmentation (CycAP) we further approximate the resulting NWSF instance by reducing
it to an instance of EWSF.

Consider the NWSF graph $H$ obtained from an unweighted CycAP instance consisting of a cycle $G$ and a set of links $L$. We construct an EWSF
instance on the same graph $H$ by assigning unit weight to every edge in $E(H)$. Note that in this
setting, each Steiner node is adjacent to exactly two terminals, since the projection of any link
consists precisely of its two endpoints.

Let a sequence of requests define a set of $k$ distinct terminals. Let $\OPT_N$ denote an optimal
solution to the NWSF instance, consisting of a set of Steiner nodes, with cost $|\OPT_N|$. Let
$\OPT_E$ denote an optimal solution to the corresponding EWSF instance, consisting of a set of
edges, with cost $|\OPT_E|$.

\begin{claim}\label{clm:opt_relation}
$|\OPT_E| \le |\OPT_N| + k - 1$.
\end{claim}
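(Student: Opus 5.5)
Starting from $\OPT_N$, I will build a feasible EWSF solution of cost at most $|\OPT_N| + k - 1$. Let $S = \OPT_N$ be the set of chosen Steiner nodes. In the induced subgraph $H[S \cup T_H]$, every requested pair $(s,t)$ is connected by \Cref{lem:reduction_preservation}. Let $H' = H[S \cup K]$, where $K$ is the set of $k$ requested terminals. Any path in $H[S\cup T_H]$ can only use unrequested terminals $u$ as intermediate vertices; since each Steiner node is adjacent to exactly the two endpoints of its link, a path $v_{\ell_1}, u, v_{\ell_2}$ means $\ell_1$ and $\ell_2$ share endpoint $u$. If they share exactly one endpoint they cross, so the edge $\{v_{\ell_1}, v_{\ell_2}\}$ exists. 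If they share both endpoints they are the same link. Hence we may assume all paths avoid unrequested terminals, so $H'$ connects all requested pairs.

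Next, I will take a spanning forest $F$ of $H'$, restricted to the components that contain requested terminals. This preserves connectivity of every pair, so it is a feasible EWSF solution whose cost is its number of edges. I will also drop Steiner nodes in components with no terminal; this only helps.

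For the count, each component of $F$ is a tree. Suppose the components have vertex counts $n_j = s_j + k_j$, where $s_j$ Steiner nodes and $k_j \ge 1$ terminals lie in component $j$. The number of edges is then $\sum_j (s_j + k_j - 1) \le |S| + k - 1$, since there is at least one component. This yields $|\OPT_E| \le |F| \le |\OPT_N| + k - 1$.

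The main obstacle is the step showing that intermediate unrequested terminals can be bypassed; in a cycle all projections are just the two endpoints, which is what makes it work. If that argument fails, I will instead include the unrequested terminals used by $F$ and bound their number. That is harder: each such terminal $u$ adjacent to two Steiner nodes could be charged to a Steiner node. So I will rely on the crossing-edge bypass, noting that the claim's $k$ counts only the requested terminals.
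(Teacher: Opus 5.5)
Your proof is correct and follows essentially the same route as the paper. You bypass unrequested intermediate terminals using the crossing edge $\{v_{\ell_1},v_{\ell_2}\}$, then take a spanning forest of the subgraph induced by $\OPT_N$ and the $k$ requested terminals, which has $|\OPT_N|+k-c\le|\OPT_N|+k-1$ edges; your explicit note that two distinct links sharing an endpoint share exactly one, and hence cross, is a small tightening of the paper's wording, and the fallback in your last paragraph is not needed.
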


\begin{proof}
Let the subgraph of $H$ induced by $\OPT_N$ together with the $k$ distinct terminals appearing in the
requests consist of $c$ connected components. Observe that it suffices to consider only these $k$
terminals: if a path connecting a request $(s,t)$ uses an intermediate terminal
$u$ not among the $k$ terminals, then this path must contain a subpath of the form
$v_{\ell_1} - u - v_{\ell_2}$. Since both $\ell_1$ and $\ell_2$ have $u$ in their projection vertex
sets, the two links cross, and by construction of $H$ the edge
$\{v_{\ell_1}, v_{\ell_2}\}$ exists. Hence, a shorter path avoiding $u$ can be used instead.

We now construct a spanning tree within each of the $c$ connected components, thereby obtaining a
forest that connects all $k$ terminals. This forest contains
$|\OPT_N| + k - c \le |\OPT_N| + k - 1$ edges and serves all requests. Hence, it constitutes a
feasible solution to the EWSF instance, implying the claimed bound on $|\OPT_E|$.
\end{proof}

\begin{claim}\label{clm:k_bound}
$k \le 2|\OPT_N|$.
\end{claim}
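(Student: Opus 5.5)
Each of the $k$ distinct terminals must be covered by some Steiner node of $\OPT_N$, and each Steiner node in this CycAP setting is adjacent to exactly two terminals. So the plan is a simple counting argument: show every requested terminal is incident to at least one node of $\OPT_N$, then use that each node of $\OPT_N$ accounts for at most two terminals.

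\textbf{Step 1: each terminal is adjacent to a node of $\OPT_N$.} Fix a terminal $u$ appearing in some request $(s,t)$ with $s \neq t$; requests are pairs of distinct vertices, so we may assume this. Since $\OPT_N$ is feasible, the induced subgraph $H[\OPT_N \cup T_H]$ contains a path from $s$ to $t$, and this path has at least one edge. By construction of $H$, no two terminal nodes are adjacent: every edge is either an incidence edge (Steiner node to terminal) or a crossing edge (Steiner node to Steiner node). Hence the first edge of the path out of $u$ (taking $u$ as the endpoint $s$ or $t$) goes to a Steiner node $v_\ell \in \OPT_N$. So $u$ lies in the projection vertex set of some link $\ell$ with $v_\ell \in \OPT_N$.

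\textbf{Step 2: count.} Since $G$ is a single cycle, the projection vertex set of every link $\ell=\{x,y\}$ is exactly $\{x,y\}$, as already noted before \Cref{clm:opt_relation}. So each Steiner node is adjacent to exactly two terminals. Define a map from the $k$ distinct terminals to $\OPT_N$ by sending each terminal to a Steiner node found in Step 1. Each node of $\OPT_N$ has at most two preimages, which gives $k \le 2|\OPT_N|$.

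There is no real obstacle here. The only point to check is that requests involve distinct vertices and that $H$ has no terminal–terminal edges. Both follow directly from the problem definition and from the construction of $H$.
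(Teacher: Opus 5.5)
Your proof is correct and is the same counting argument the paper uses: every requested terminal must be adjacent to some Steiner node of $\OPT_N$, and each such node has exactly two terminal neighbours in the CycAP setting. The paper states this in one line; you additionally justify the first step by noting that $H$ has no terminal–terminal edges and that requests consist of distinct vertices, which is a fine elaboration.
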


\begin{proof}
Each Steiner node in $\OPT_N$ is adjacent to exactly two terminals, and every terminal among the $k$ terminals appearing in the requests must be
adjacent to at least one Steiner node in $\OPT_N$ in order to be connected. Therefore, the total number of
distinct terminals satisfies $k \leq 2|\OPT_N|$.
\end{proof}

We now prove the main theorem of this subsection.

\begin{proof}[Proof of \Cref{thm:unwtcycap}]
We apply a known $O(\log n')$-competitive online algorithm for EWSF~\cite{ImaseW91} to the constructed
EWSF instance $H$, where $n' = |V(H)| = |V(G)| + |L| \le n + \binom{n}{2}$. Let the competitive ratio
of this algorithm be $c_e \log n'$ for some absolute constant $c_e$. The algorithm outputs an edge
set $\ALG_E$ with
\[
|\ALG_E| \le c_e \log n' \cdot |\OPT_E|.
\]

Our algorithm for NWSF (and hence for CycAP) selects both endpoints of each edge chosen in $\ALG_E$.
Therefore, the number of Steiner nodes selected, denoted by $\ALG_N$, satisfies
\[
|\ALG_N| \le 2|\ALG_E|.
\]

Using \Cref{clm:opt_relation}, we obtain
\[
\frac{|\ALG_N|}{|\OPT_N|}
\le \frac{2|\ALG_E|}{|\OPT_N|}
\le \frac{2c_e \log n' \cdot |\OPT_E|}{|\OPT_N|}
\le \frac{2c_e \log n' \cdot (|\OPT_N| + k)}{|\OPT_N|}.
\]
Applying \Cref{clm:k_bound}, which gives $k \le 2|\OPT_N|$, we conclude
\[
\frac{|\ALG_N|}{|\OPT_N|}
\le 2c_e \log n' \cdot \frac{3|\OPT_N|}{|\OPT_N|}
= 6c_e \log n'
= O(\log n).
\]

Finally, since the reduction from Online CycAP to Online NWSF is competitiveness-preserving
(\Cref{lem:reduction_preservation}), this yields an $O(\log n)$-competitive online algorithm for the
unweighted Online CycAP. This completes the proof.
\end{proof}

% %%%%%%%%%%%%End of Section%%%%%%%%%

\section{Simple \texorpdfstring{$O(\log n)$}{O(log n)}-Competitive Algorithm for TAP}\label{sec:onlineTap}

In this section, we describe a simple $O(\log n)$-competitive algorithm for TAP, thereby proving Theorem~\ref{thm:online-tap}. The underlying graph is a rooted tree $T$, and without loss of generality, we may assume that all links are uplinks (i.e., one endpoint of a link is an ancestor of the other in $T$), at the cost of increasing the competitive ratio by at most a factor of $2$. Furthermore, without loss of generality, we may assume that the online requests are \emph{edge-requests} (i.e., the two requested vertices are adjacent in $T$). Both of these assumptions are standard and were also used in~\cite{NaorUW22}.

The inspiration for our algorithm comes from a simple $O(1)$-competitive algorithm for online interval set cover (or equivalently, the online path augmentation problem). In the online interval set cover problem, we are given a collection of intervals, and points that need to be covered arrive online.
The algorithm considers a newly arrived uncovered request together with all intervals that cover it. Among these feasible intervals, it selects the interval with the leftmost left endpoint and the interval with the rightmost right endpoint. It can then be shown that, for every interval $I$ in the optimal solution, the algorithm selects at most two intervals that together cover all requests covered by $I$. This proves a competitive ratio of $2$ for the algorithm.

Similarly, for online TAP, we intuitively wish to select a \emph{topmost link} and a \emph{bottommost link} among the links that cover a newly arrived uncovered edge request \(e_i\).
Given a rooted tree, we say that a link \(t_i\) is a topmost link for \(e_i\) if, among all links covering \(e_i\), it covers the maximum number of edges on the path from \(e_i\) to the root. We now describe the \textbf{bottommost link} \(b_i\) for the request \(e_i\).
Starting from \(e_i\), we traverse the tree along links covering \(e_i\) in the direction of increasing distance from the root. Whenever we encounter a branching node at which multiple branches still contain valid links (i.e., links covering \(e_i\)), we continue along the branch whose subtree contains the largest number of vertices. The traversal stops once only a single valid link remains, or when a leaf is reached. We call the resulting link the bottommost link for the request \(e_i\). In both definitions above, ties may be broken arbitrarily.

The definition of the bottommost link that we use is based on the number of vertices contained in a subtree. The high-level intuition is that every incorrect choice made by the algorithm (i.e., selecting a non-optimal link) eliminates a large portion of the tree that an adversary could otherwise exploit to further force the algorithm into making additional incorrect choices. We therefore bound the number of such incorrect choices in order to obtain a good upper bound on the competitive ratio.

\subsection{Algorithm}
As mentioned earlier, the algorithm is very simple. For each uncovered request $e$, we select a topmost link and a bottommost link that cover $e$. A formal description of the algorithm is provided in~\Cref{alg:rotap}.

\begin{algorithm}
    \caption{Online algorithm for unweighted online tree augmentation problem.}
    \label{alg:rotap}
    \KwIn{A rooted tree $T=(V, E)$, a set of uplinks $L\subseteq\binom{V}{2}$. At each time step $i$, an online request $e_i\in E$ arrives.}
    \KwOut{At each time step $i$, a subset of links $L'_i\subseteq L$ that cover $\{e_1,\dots,e_i\}$ such that $L'_{i-1} \subseteq L'_i$, i.e., we add the set of links $L'_i\setminus L'_{i-1}$ to the existing solution $L'_{i-1}$ so that the request $e_i$ is covered.}
    $L'_0\leftarrow\emptyset$\\
    \For{each arriving request $e_i$ at time step $i$} {
        $L'_i\leftarrow L'_{i-1}$.\\
        \If{$e_i$ is uncovered by $L'_i$} {
        $L'_i\leftarrow L'_{i-1}\cup\{$topmost link covering $e_i\}\cup\{$bottommost link covering $e_i\}$.\\
        }
    }
    \Return{$L'_i$}.
\end{algorithm}

\subsection{Analysis}
Fix an optimal solution $\OPT$. Partition the set of input requests
$\{e_1,\ldots,e_s\}$ into $|\OPT|$ parts, one for each link in $\OPT$, such that the part corresponding to a link $\ell\in\OPT$ contains only requests covered by $\ell$. If a request is covered by multiple links in $\OPT$, assign it arbitrarily to one of their corresponding parts.

Observe that our algorithm selects at most two links for each request. Therefore, to prove~\Cref{thm:online-tap}, it suffices to show that, for every part in the above partition, the algorithm selects at most $O(\log n)$ links. We establish the following lemma.

\begin{lemma}\label{lem:relbranch}
For any set of requests assigned to a link in $\OPT$, \Cref{alg:rotap} selects at most $O(\log n)$ links while serving those requests, where $n=|V|$.
\end{lemma}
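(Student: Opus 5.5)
Fix a link $\ell=\{a,d\}\in\OPT$ with $a$ an ancestor of $d$, and let $P$ be the tree path from $d$ up to $a$. Every request assigned to $\ell$ is an edge of $P$. I will count only the requests $e_i$ in this part that are \emph{uncovered} on arrival, since the others cost nothing. For each such $e_i$, the link $\ell$ itself covers $e_i$ and so is a candidate. This gives two facts.
\begin{itemize}
\item The topmost link $t_i$ reaches at least as high as $a$. Hence $t_i$ covers every edge of $P$ above $e_i$.
\item The bottommost traversal starts at $e_i$ and descends along branches that still contain a valid link. One of these branches is the branch containing the continuation of $P$ toward $d$.
\end{itemize}

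The plan has three steps.

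\textbf{Step 1 (order along $P$).} Let the uncovered requests in the part be $e_{i_1},e_{i_2},\dots$ in arrival order. When $e_{i_j}$ arrives, it is uncovered, so it lies strictly below every earlier $e_{i_{j'}}$. Otherwise $t_{i_{j'}}$ would already cover it, because $t_{i_{j'}}$ covers everything on $P$ above $e_{i_{j'}}$. So the uncovered requests move strictly downward along $P$.

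\textbf{Step 2 (a deviation happens at every step).} The bottommost link $b_{i_j}$ must fail to cover $e_{i_{j+1}}$. If the traversal for $e_{i_j}$ had followed $P$ all the way down to $d$, then every link it considered would cover the segment of $P$ from $e_{i_j}$ to the stopping point. In particular, $b_{i_j}$ would reach at least as deep as $d$ and would cover $e_{i_{j+1}}$. So at some branching vertex $v_j$ on $P$, lying between $e_{i_j}$ and $e_{i_{j+1}}$ (or at the end of the traversal), the traversal leaves $P$. The rule is to go into the largest valid subtree, so the child subtree containing the rest of $P$ has at most as many vertices as the subtree it chose. It is therefore at most half the size of the subtree rooted at $v_j$. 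The case where the traversal stops because a single valid link remains also needs checking. That remaining link must then be deep enough, or else $\ell$ is another valid link and the traversal would not have stopped.

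\textbf{Step 3 (halving gives the bound).} The vertices $v_1,v_2,\dots$ lie on $P$ at increasing depth. The subtree-size rule gives
\[
|T_{\mathrm{child}(v_{j})}|\le \tfrac12 |T_{v_j}|,
\]
where $\mathrm{child}(v_j)$ is the child of $v_j$ on $P$. Also $T_{v_{j+1}}\subseteq T_{\mathrm{child}(v_j)}$. So the subtree sizes at least halve each time. Since subtree sizes are at most $n$, there are at most $\log_2 n$ deviations. The number of uncovered requests in the part is therefore at most $\log_2 n+1$, and the algorithm buys at most $2(\log_2 n+1)=O(\log n)$ links for this part.

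The main obstacle is Step 2. The valid-link sets differ between requests: the traversal for $e_{i_j}$ uses links covering $e_{i_j}$, not links covering $e_{i_{j+1}}$. I need to show that the largest-subtree comparison at $v_j$ is against the subtree containing $d$, which requires the $P$-branch to still be valid at $v_j$. This holds because $\ell$ covers $e_{i_j}$ and continues below $v_j$. Tie-breaking at $v_j$ also needs care: a tie still yields the factor $\tfrac12$, since two disjoint child subtrees of equal size fit inside $T_{v_j}$.
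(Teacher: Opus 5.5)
Your proof is correct and follows essentially the same route as the paper. In both, the topmost link forces the uncovered requests of a part to move strictly down $P(\ell_{\OPT})$, and the vertex where the bottommost traversal leaves that path (the paper's branching vertex $w(b)$) gives $N(w_i)\ge 1+2N(w_{i+1})$, which is your halving step; the only cosmetic difference is that the paper first shows distinct bottommost links have strictly deeper branching vertices and counts via those, whereas you index the deviation vertices directly by consecutive uncovered requests.
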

\begin{proof} 
    Fix a link $\ell_{\OPT}\in\OPT$, and let $R\subseteq \{e_1,\ldots,e_s\}$ be an arbitrary set of requests covered by $\ell_{\OPT}$. For any link $\ell$, let $P(\ell)$ denote the unique path in the tree $T$ between the endpoints of $\ell$. Since every request in $R$ is covered by $\ell_{\OPT}$, all requests in $R$ lie on the path $P(\ell_{\OPT})$.

Let $\ALG_R$ denote the set of links selected by our algorithm while serving the requests in $R$. We will show that $|\ALG_R|=O(\log n)$.
    
    For each link $b\in\ALG_R$ that is selected as a bottommost link for some uncovered request in $R$, we associate a unique vertex $w(b)$ on the path $P(\ell_{\OPT})$. Specifically, $w(b)$ is the vertex that lies on both $P(\ell_{\OPT})$ and $P(b)$ and is farthest from the root. We refer to $w(b)$ as the \emph{branching vertex} of $b$.

We first show that distinct bottommost links have distinct branching vertices. Let $b_1$ and $b_2$ be two distinct bottommost links selected for requests $r_1$ and $r_2$, respectively, and assume without loss of generality that $r_1$ arrives before $r_2$. Since $b_2$ is selected for $r_2$, the request $r_2$ must be uncovered upon arrival.
Both \emph{edges} $r_1$ and $r_2$ lie on the path $P(\ell_{\OPT})$. We claim that $r_2$ cannot lie above $r_1$ on this path (that is, $r_2$ cannot be closer to the root than $r_1$). Indeed, when $r_1$ arrived, the algorithm selected a topmost link covering $r_1$. Since $\ell_{\OPT}$ also covers $r_1$, it was one of the candidate links, and therefore the selected topmost link must cover every edge of $P(\ell_{\OPT})$ above $r_1$, including $r_2$. Hence $r_2$ would already have been covered before its arrival, a contradiction.
Similarly, $r_2$ cannot lie on the portion of $P(\ell_{\OPT})$ between $r_1$ and the branching vertex $w(b_1)$. By definition, this segment is covered by the bottommost link $b_1$, and thus $r_2$ would again be covered before its arrival. Therefore, $r_2$ must lie strictly below both $r_1$ and $w(b_1)$ on $P(\ell_{\OPT})$. It follows that the branching vertex $w(b_2)$ lies strictly below $w(b_1)$, and hence $w(b_1)\neq w(b_2)$.

\begin{figure}
    \centering
    \includegraphics[scale=1]{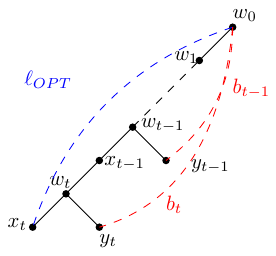}
    \caption{$w_i$ is the branching vertex for the bottommost link $b_i$. $x_i$ and $y_i$ are its two children.}
    \label{fig:simpleTAP}
\end{figure}

Let $w_1,\ldots,w_t$ be the branching vertices induced by the bottommost links selected by the algorithm for requests in $R$, ordered by increasing distance from the root along the path $P(\ell_{\OPT})$. We claim that there are exactly $t$ requests in $R$ that are uncovered upon arrival. To see this, observe that each uncovered request $r \in R$ causes the algorithm to select a distinct bottommost link (if two requests shared the same bottommost link, the one arriving later would already be covered by it, a contradiction). Since we established that distinct bottommost links have distinct branching vertices, the number of uncovered requests in $R$ must equal the number of branching vertices, which is $t$. Since the algorithm selects at most two links for each uncovered request, it follows that

\[
|\ALG_R| \le 2t.
\]

    \paragraph{Bound on $t$.} Let $N(v)$ denote the number of vertices in the subtree rooted at a vertex $v$. We first show that, assuming
\[
N(w_i)\ge 1+2N(w_{i+1})
\]
for every $i\in\{1,\ldots,t-1\}$, we obtain the desired bound on $t$. Indeed, the above inequality immediately implies that
\[
N(w_1)\ge 1+2+2^2+\cdots+2^{t-1}N(w_t)\ge 2^t-1.
\]
Since $N(w_1)\le n$, it follows that
\[
n\ge N(w_1)\ge 2^t-1,
\]
and hence $t=O(\log n)$. Combining this with the bound $|\ALG_R|\le 2t$, we conclude that
\[
|\ALG_R|=O(\log n).
\]

It remains to prove the claim that
\[
N(w_i)\ge 1+2N(w_{i+1})
\]
for every $i\in\{1,\ldots,t-1\}$. Fix any such $i$. We will argue that $w_i$ has two distinguished children. One child lies on the path $P(\ell_{\OPT})$; denote it by $x$.

Let $b$ be the bottommost link whose branching vertex is $w_i$, i.e., $w(b)=w_i$. By the definition of a branching vertex, either $w_i$ is an endpoint of $b$, or the path $P(b)$ branches away from $P(\ell_{\OPT})$ at $w_i$. The first case is impossible, since then $\ell_{\OPT}$ would be a strictly better candidate for the corresponding request, contradicting the choice of $b$ as a bottommost link. Therefore, $P(b)$ must branch away from $P(\ell_{\OPT})$ at $w_i$. Let $y$ denote the child of $w_i$ that lies on $P(b)$.

By the definition of a bottommost link, we have $N(y)\ge N(x)$. Since the subtrees rooted at $x$ and $y$ are disjoint,
\[
N(w_i)\ge 1+N(x)+N(y)\ge 1+2N(x).
\]
Finally, because $w_{i+1}$ lies below $w_i$ on the path $P(\ell_{\OPT})$, the subtree rooted at $w_{i+1}$ is contained in the subtree rooted at $x$. Hence $N(x)\ge N(w_{i+1})$, and therefore
\[
N(w_i)\ge 1+2N(w_{i+1}),
\]
as required.
\end{proof}

\section*{Acknowledgment}
The authors wish to thank Arindam Khan for participating in the initial discussions that helped shape this work.

\bibliographystyle{alpha}
\bibliography{references_tap}

\appendix
\crefalias{section}{appendix}

\section{Reducing Online CAP to Online CacAP}\label{app:captocacap}

In this section, we prove the following proposition.
\CapFromCacap*

To establish the above proposition, we first recall a classical result of
Dinitz, Karzanov, and Lomonosov~\cite{DKL76}, as stated in Theorem~3.1 of
Traub and Zenklusen~\cite{TZ23}.

\begin{theorem}[\cite{DKL76}]\label{thm:cactus-representation}
For any undirected graph $G=(V,E)$, one can efficiently construct a cactus
$\overline{G}=(\overline{V},\overline{E})$ together with a mapping
$\phi: V \to \overline{V}$ such that:
\begin{enumerate}[label=(\roman*)]
\item If $C \subseteq V$ is a minimum cut in $G$, then there exists a $2$-cut
$\overline{C} \subseteq \overline{V}$ in the cactus $\overline{G}$ satisfying
\[
   C = \{ v \in V : \phi(v) \in \overline{C} \}.
\]
\item Conversely, if $\overline{C} \subseteq \overline{V}$ is a $2$-cut in the
cactus $\overline{G}$, then
\[
   C = \{ v \in V : \phi(v) \in \overline{C} \}
\]
is a minimum cut in $G$.
\end{enumerate}
\end{theorem}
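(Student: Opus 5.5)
The plan is to follow the classical structural proof of the cactus representation. Let $\lambda$ be the value of a minimum cut of $G$, and write $d(X)$ for the number of edges leaving $X$. First I would establish the \emph{crossing lemma}. Two minimum cuts $A,B$ cross if $A\cap B$, $A\setminus B$, $B\setminus A$ and $V\setminus(A\cup B)$ are all nonempty. For such a pair, submodularity $d(A)+d(B)\ge d(A\cap B)+d(A\cup B)$ and posimodularity $d(A)+d(B)\ge d(A\setminus B)+d(B\setminus A)$, together with minimality of $\lambda$, force all four corner sets to be minimum cuts. Comparing the exact edge counts in the same two identities then gives $d(A\setminus B,\,B\setminus A)=0$ and $d(A\cap B,\,V\setminus(A\cup B))=0$. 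Everything below rests on this lemma.

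Next I would organize the family $\mathcal{C}$ of all minimum cuts. Fix a root $r\in V$ and represent each cut by its side not containing $r$. Call a cut \emph{crossed} if some other minimum cut crosses it. The uncrossed cuts form a laminar family. The crossed cuts split into classes under the transitive closure of the crossing relation.
\begin{itemize}
\item The central claim is that each such class $\mathcal{K}$ is \emph{circular}. That is, there is a partition $V=V_1\sqcup\dots\sqcup V_m$ with $m\ge 4$ such that the cuts in $\mathcal{K}$ are exactly the unions of cyclically consecutive blocks (other than $\emptyset$, $V$, single blocks and their complements).
\item Moreover, $d(V_i,V_{i+1})=\lambda/2$ for cyclically adjacent blocks, and $d(V_i,V_j)=0$ for nonadjacent ones.
\end{itemize}
I would prove this by induction on the class. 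Start from two crossing cuts, which give four blocks arranged in a cycle by the crossing lemma (the zero-edge diagonals fix the cyclic order). Then show that every further cut crossing some cut of the current circular family either refines a single block into two consecutive pieces or is already a consecutive union. This uses the crossing lemma repeatedly to rule out a cut that takes nonconsecutive blocks. The edge counts follow because every two-block-arc cut has value $\lambda$, which yields a linear system forcing the adjacent-block values to be $\lambda/2$.

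With this structure I would build $\overline{G}$ and $\phi$. The atoms, meaning the maximal vertex sets not separated by any minimum cut, become cactus nodes, and $\phi$ maps each vertex to its atom. Possibly empty nodes are added for regions bounded by several families. Each circular partition contributes a cycle of length $m$ whose nodes are the regions $V_i$. Each uncrossed cut contributes a 2-cycle, i.e.\ a doubled tree edge, so that cutting it is a $2$-cut. The remaining ingredient is a compatibility step. A laminar cut and any block structure are nested, and two distinct circular partitions are nested in the sense that all of one partition except one block lies inside a single block of the other. This is again a consequence of the crossing lemma, since a cut crossing cuts from two different classes would merge those classes. Nestedness lets the cycles and 2-cycles be glued into a tree-of-cycles, which is a cactus. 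Each region of a circular partition is identified with the node through which the rest of the structure hangs.

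Finally I would verify (i) and (ii). In a cactus, the $2$-cuts are exactly the cuts obtained by removing two edges of the same cycle. Removing two edges of a 2-cycle gives an uncrossed cut. Removing two edges of an $m$-cycle gives a consecutive union of blocks, which is a minimum cut by construction, and this proves (ii). For (i), every minimum cut is either uncrossed, and hence some 2-cycle, or lies in a crossing class, and hence is a consecutive arc of that class's cycle. Its preimage under $\phi$ is exactly $C$ because atoms are never split. I expect the circularity induction, together with the nesting of distinct circular families, to be the main obstacle. I would attack it by first proving the "four-block" lemma: for three pairwise-crossing cuts, the induced partition is still cyclically ordered. Everything else then reduces to applying this lemma.
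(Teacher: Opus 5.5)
The paper gives no proof of this statement. It imports it from Dinitz--Karzanov--Lomonosov (as formulated by Traub and Zenklusen) and uses it only for $k$-edge-connected graphs with $k\ge 1$. Your outline is the classical structural proof, and its opening ingredients are correct: the crossing lemma with the two zero-diagonal identities, laminarity of the uncrossed cuts, and the $\lambda/2$ computation for a circular partition. However, two steps fail as written and a third is only asserted.

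\emph{First}, the theorem as quoted (``for any undirected graph'') is false when $\lambda=0$. For four isolated vertices $a,b,c,d$, every nontrivial bipartition is a minimum cut. The cuts $\{a,b\},\{a,c\},\{a,d\}$ pairwise cross, and $2$-cuts of a cactus coming from different cycles never cross. So all three would have to be arcs of one cycle, with $a,b,c,d$ in four distinct pieces hanging off it. But of the three ways to split four cyclically ordered points into two pairs, only two are arcs. Your outline hides this because the crossing lemma is vacuous at $\lambda=0$. What really excludes a nonconsecutive union such as $X=V_1\cup V_3$ is that $X$ crosses $V_1\cup V_2$. The crossing lemma then forces $d(V_2,V_3)=0$, which contradicts $d(V_2,V_3)=\lambda/2$ only when $\lambda>0$. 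You must assume $G$ connected, which is harmless for the paper.

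\emph{Second}, your induction step is false: a new cut crossing some cut of the current circular family need not be either a consecutive union or a refinement of a \emph{single} block. In the cycle on $1,2,\ldots,8$, take $A=\{1,2,3,4\}$ and $B=\{3,4,5,6\}$, giving blocks $\{1,2\},\{3,4\},\{5,6\},\{7,8\}$. The minimum cut $X=\{2,3\}$ crosses $B$ yet splits two blocks. Since $X\subseteq A$, your lemma on three pairwise crossing cuts does not apply. The correct statement is that such an $X$ equals $V_i'\cup V_{i+1}\cup\cdots\cup V_{j-1}\cup V_j'$ with $V_i'\subseteq V_i$ and $V_j'\subseteq V_j$; that is, only the two end blocks of an arc may be split. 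You must then still prove that every arc of the refined partition is a minimum cut, via closure of crossing minimum cuts under intersection, union and difference.

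\emph{Third}, the part where (i) is really decided is only asserted. Deducing from ``no arc of one class crosses an arc of another'' that two circular partitions are nested needs its own combinatorial argument. One route: each arc of one family lies inside, or has its complement inside, a single block of the other; then a $2$-arc argument rules out one partition refining the other. The tree of cycles must also be defined precisely enough to check that each side of each $2$-cut has preimage exactly the intended cut. Note too that the single blocks of a maximal circular partition are themselves uncrossed. Your rule that every uncrossed cut contributes a $2$-cycle therefore represents them twice; this is harmless for (i) and (ii) but has to be reconciled in the gluing.
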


To prove the proposition, we assume the existence of a $c$-competitive online
algorithm for cactus augmentation and show how to obtain a $c$-competitive
online algorithm for connectivity augmentation.

Given a (weighted) CAP instance consisting of a $k$-edge-connected graph $G$
(which is not $(k+1)$-edge-connected) and a set of links $\mathcal{L}$, we apply
\Cref{thm:cactus-representation} to efficiently construct a corresponding
(weighted) CacAP instance. In particular, since $G$ is $k$-edge-connected but
not $(k+1)$-edge-connected, every minimum cut in $G$ has size exactly $k$.
Specifically, we obtain a cactus graph $\overline{G}$ together with a mapping
$\phi: V(G) \to V(\overline{G})$, and define the set of links
\[
\overline{\mathcal{L}} := \{ (\phi(u), \phi(v)) \mid (u,v) \in \mathcal{L} \}.
\]
For each online request $(s,t)$ in the CAP instance, we create a corresponding
request $(\phi(s), \phi(t))$ in the CacAP instance.

We run the online algorithm for CacAP on the constructed instance. Whenever the
algorithm selects a link $(\phi(u), \phi(v)) \in \overline{\mathcal{L}}$, our
algorithm for CAP selects the corresponding link $(u,v) \in \mathcal{L}$. Since
costs are preserved, it remains to argue that feasibility is preserved by this
reduction.

\begin{claim}\label{clm:cap-cacap-feasibility}
A set of links $A \subseteq \mathcal{L}$ serves a request $(s,t)$ in $G$ if
and only if the corresponding set of links
$\bar{A} := \{(\phi(u),\phi(v)) \mid (u,v) \in A\}$ serves the request
$(\phi(s), \phi(t))$ in $\overline{G}$.
\end{claim}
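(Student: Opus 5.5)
We will prove both directions by comparing cuts in $G \cup A$ with $2$-cuts in $\overline{G} \cup \bar A$, using \Cref{thm:cactus-representation}. The key reformulation is this. Since $G$ is $k$-edge-connected, $s$ and $t$ are $(k+1)$-edge-connected in $G\cup A$ if and only if every cut $C$ separating $s$ from $t$ has $|\delta_{G\cup A}(C)|\ge k+1$. Every such cut already has $|\delta_G(C)|\ge k$. So the only possible obstruction is a \emph{minimum} cut $C$ of $G$ (size exactly $k$) that separates $s$ and $t$ and is crossed by no link of $A$. Hence $A$ serves $(s,t)$ if and only if every minimum cut of $G$ separating $s$ from $t$ is covered by some link of $A$.

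The same reasoning applies in the cactus. $\overline{G}$ is $2$-edge-connected and its minimum cuts are exactly the $2$-cuts. So $\bar A$ serves $(\phi(s),\phi(t))$ if and only if every $2$-cut $\overline C$ of $\overline G$ separating $\phi(s)$ from $\phi(t)$ is covered by a link of $\bar A$. One degenerate case needs a remark: if $\phi(s)=\phi(t)$, then no minimum cut of $G$ separates $s$ and $t$ by item (i), so both sides hold trivially.

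With these two characterizations, the claim reduces to a correspondence between cuts and coverage. Set $C=\phi^{-1}(\overline C)$. Then
\begin{itemize}
\item $C$ separates $s$ and $t$ if and only if $\overline C$ separates $\phi(s)$ and $\phi(t)$;
\item a link $(u,v)$ covers $C$ if and only if $(\phi(u),\phi(v))$ covers $\overline C$.
\end{itemize}
Both follow directly from the definition of the preimage.

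\emph{($\Rightarrow$)} Suppose $A$ serves $(s,t)$, and let $\overline C$ be a $2$-cut separating $\phi(s),\phi(t)$. By item (ii), $C=\phi^{-1}(\overline C)$ is a minimum cut of $G$, and it separates $s,t$. So some $(u,v)\in A$ covers $C$, and then its image covers $\overline C$.

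\emph{($\Leftarrow$)} Suppose $\bar A$ serves $(\phi(s),\phi(t))$, and let $C$ be a minimum cut of $G$ separating $s,t$. By item (i), there is a $2$-cut $\overline C$ with $C=\phi^{-1}(\overline C)$, and $\overline C$ separates $\phi(s),\phi(t)$. So some link of $\bar A$ covers $\overline C$; its preimage link in $A$ covers $C$.

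The main obstacle is justifying the characterizations, not the correspondence. In the cactus, I would check that $2$-edge-connectivity together with the absence of an uncovered separating $2$-cut gives $3$-edge-connectivity. This is Menger's theorem applied to $\overline G\cup\bar A$, whose only cuts of size $\le 2$ are uncovered $2$-cuts of $\overline G$. A second subtlety is that $\phi$ need not be surjective, since a cactus node may have empty preimage. This is handled because item (ii) applies to every $2$-cut, regardless of which nodes are empty. Finally, links with $\phi(u)=\phi(v)$ become loops; they cover no cactus cut and, by item (i), no minimum cut of $G$ either, so dropping them is harmless.
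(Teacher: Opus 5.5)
Your proof is correct and follows essentially the same route as the paper: both read ``serves'' as ``every separating minimum cut (respectively, $2$-cut) is covered,'' pass cuts and link coverage through $\phi^{-1}$, and use item (ii) of \Cref{thm:cactus-representation} for ($\Rightarrow$) and item (i) for ($\Leftarrow$). The paper phrases each direction as a contradiction and leaves the Menger-based characterization and the degenerate cases implicit; you argue directly and spell those out (coinciding images, empty cactus nodes, links that become loops).
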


\begin{proof}
We prove the claim in two directions.

\paragraph{($\Rightarrow$)}
Assume that $A$ serves the request $(s,t)$ in $G$. We show that $\bar{A}$ serves
the request $(\phi(s),\phi(t))$ in $\overline{G}$. Suppose, for the sake of
contradiction, that $\bar{A}$ does not serve the request
$(\phi(s),\phi(t))$ in $\overline{G}$. Then there exists a $2$-cut
$\overline{M} \subseteq V(\overline{G})$ in the cactus $\overline{G}$ that
separates $\phi(s)$ and $\phi(t)$ and is not covered by any link in $\bar{A}$.

By part~(ii) of \Cref{thm:cactus-representation}, the set
\[
M := \{ v \in V(G) \mid \phi(v) \in \overline{M} \}
\]
is a $k$-cut in $G$. Moreover, since $\overline{M}$ separates $\phi(s)$ and
$\phi(t)$ in $\overline{G}$, the cut $M$ separates $s$ and $t$ in $G$.

Since $A$ serves the request $(s,t)$ in $G$, there is no $k$-cut separating
$s$ and $t$ in $G \cup A$. Hence, there must exist a link $\ell=(u,v) \in A$
that covers the cut $M$. Note that a link $(u,v)$ covers $M$ if and only if the
corresponding link $(\phi(u),\phi(v))$ covers $\overline{M}$. Therefore, the
link $(\phi(u),\phi(v)) \in \bar{A}$ covers the $2$-cut $\overline{M}$,
contradicting the assumption that $\overline{M}$ is not covered by any link in
$\bar{A}$. This contradiction shows that $\bar{A}$ serves the request
$(\phi(s),\phi(t))$ in $\overline{G}$.

\paragraph{($\Leftarrow$)}
Assume that $\bar{A}$ serves the request $(\phi(s),\phi(t))$ in $\overline{G}$.
We show that $A$ serves the request $(s,t)$ in $G$. Suppose, for the sake of
contradiction, that $A$ does not serve the request $(s,t)$ in $G$. Then there
exists a $k$-cut $M \subseteq V(G)$ in $G$ that separates $s$ and $t$ and is
not covered by any link in $A$.

By part~(i) of \Cref{thm:cactus-representation}, there exists a $2$-cut
$\overline{M} \subseteq V(\overline{G})$ in the cactus $\overline{G}$ such that
\[
M = \{ v \in V(G) \mid \phi(v) \in \overline{M} \}.
\]
Since $M$ separates $s$ and $t$ in $G$, the corresponding $2$-cut
$\overline{M}$ separates $\phi(s)$ and $\phi(t)$ in $\overline{G}$. Moreover,
a link $(u,v)$ covers $M$ if and only if the corresponding link
$(\phi(u),\phi(v))$ covers $\overline{M}$. Hence, since no link in $A$ covers
$M$, it follows that no link in $\bar{A}$ covers $\overline{M}$.

This contradicts the assumption that $\bar{A}$ serves
$(\phi(s),\phi(t))$ in $\overline{G}$, since $\overline{M}$ is a $2$-cut
separating $\phi(s)$ and $\phi(t)$ that remains uncovered. Therefore, $A$ must
serve the request $(s,t)$ in $G$.
\end{proof}

\section{Proof of \Cref{prop:transitivity}}\label{sec:transitivity}

In this section, we prove the following proposition.
\transitivity*

\begin{proof}
Fix any pair of vertices $(x,y) \in \mathcal{V}((u,v)) \times \mathcal{V}((u,v))$.
Let $F \subseteq E(G)$ be any $2$-cut in the cactus $G$ that separates $x$ and $y$.
We show that $F$ also separates $u$ and $v$.

By definition of the projection vertex set $\mathcal{V}((u,v))$, every $u$--$v$ path in $G$
contains both $x$ and $y$. Hence, after deleting the edges in $F$, there is no remaining
$x$--$y$ path in $G$, and consequently no $u$--$v$ path remains either. Thus, $F$ also
separates $u$ and $v$.

Since $u$ and $v$ are $3$-edge-connected in the augmented cactus, every $2$-cut separating
$u$ and $v$ is covered by some added link. As every $2$-cut separating $x$ and $y$ also
separates $u$ and $v$, all such $2$-cuts are covered as well. Therefore, $x$ and $y$ are
also $3$-edge-connected.

\end{proof}

\section{Proof of \Cref{prop:CacConCross}} \label{app:connectedCrossing}
We begin by proving the following proposition.

\begin{restatable}{proposition}{connectedCrossing}\label{prop:connectedCrossing}
    Let \(G\) be a cycle and let \(A \subseteq L\) be a set of links added to \(G\).
    A pair of distinct vertices \((u,v)\) is \(3\)-edge-connected in \(G \cup A\) if and only if
    there exists a sequence of crossing links in \(A\) from \(u\) to \(v\).
\end{restatable}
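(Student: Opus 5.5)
We prove both directions by working with $2$-cuts of the cycle. Removing two edges $e, f$ of the cycle $G$ splits its vertex set into two arcs, and every $2$-edge cut of a cycle has this form. Since $G$ is $2$-edge-connected, $u$ and $v$ are $3$-edge-connected in $G\cup A$ exactly when, for every pair of cycle edges whose removal separates $u$ from $v$, some link of $A$ has one endpoint in each arc, i.e.\ covers the cut. (By Menger, any cut of size at most $2$ in $G\cup A$ separating $u,v$ must use exactly two cycle edges and no links.) For a cycle the projection vertex set of a link is just its two endpoints, so "$u$ lies in the projection set of the first link" means $u$ is an endpoint of it.

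\textbf{($\Leftarrow$).} Let $\ell_1,\dots,\ell_m$ be a sequence of crossing links with $u\in\ell_1$ and $v\in\ell_m$. Form the vertex set $U=\bigcup_j \ell_j$. We first show that all vertices of $U$ lie on one side of any uncovered $2$-cut. Each single link has its two endpoints on the same side of an uncovered cut, by definition. Now take two consecutive links that cross.
\begin{itemize}
\item If they share an endpoint, their endpoint sets overlap, so both links lie on the same side.
\item If they interleave, then $\ell_j$ lying entirely in one arc $X$ forces $\ell_{j+1}$ into $X$ as well. The alternative is that $\ell_{j+1}$ lies entirely in the complementary arc. But two chords lying in disjoint arcs of the cycle cannot interleave, which is a contradiction.
\end{itemize}
By induction along the sequence, $u$ and $v$ lie on the same side of every uncovered $2$-cut. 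Hence every $2$-cut separating them is covered, which gives $3$-edge-connectivity.

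\textbf{($\Rightarrow$).} Suppose $u$ and $v$ are $3$-edge-connected. Define a crossing graph on $A$: its vertices are the links, with an edge between two links when they cross. Let $K$ be the union of the components of this graph containing a link incident to $u$. Assume for contradiction that no link in $K$ is incident to $v$, and let $S$ be the set of endpoints of the links in $K$; then $v\notin S$. We aim to produce a $2$-cut separating $u$ and $v$ that no link of $A$ covers.

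\textbf{Main obstacle: building the uncovered cut.} The natural first attempt is a planarity-type argument. Draw the links of $K$ as chords inside the cycle. Since the links of one component pairwise chain via crossings, the region they span is "connected": the vertices of $S$ occupy positions on the cycle, and we pick the arc $I$ between two cyclically consecutive vertices of $S$ that contains $v$. We then take the cut consisting of the two cycle edges just inside the ends of $I$, so the side containing $v$ is the interior of $I$ (nonempty, containing $v$) and the other side contains $S\ni u$. We must show no link of $A$ has one endpoint strictly inside $I$ and the other outside it.
\begin{itemize}
\item Such a link cannot share an endpoint with a link of $K$. If it did, it would cross that link by definition (exactly one shared endpoint), so it would belong to $K$ and its other endpoint would lie in $S$. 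That endpoint would sit in the interior of $I$, contradicting the choice of $I$ as a gap between consecutive vertices of $S$.
\item Otherwise its endpoints avoid $S$. We must show it interleaves some link of $K$. Because its endpoints separate the vertices of $S$ lying in $I$'s complement from the gap, a connectivity argument on the crossing components should give this. The argument is that if no chord of $K$ separated the link's endpoints, all of $S$ would lie on one side of the link, and then $I$ would not be a gap. Making this rigorous requires care when $K$ consists of several components, or when $u$ itself is the only point of $S$.
\end{itemize}
Once such a crossing is established, the link would have joined $K$, a contradiction. Consequently the constructed $2$-cut is uncovered, contradicting $3$-edge-connectivity. Extracting a path in the component of $K$ from a link at $u$ to a link at $v$ then yields the required sequence.
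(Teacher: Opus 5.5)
Your proof is correct and takes a different route from the paper's for $(\Rightarrow)$. The step you flag as the main obstacle does go through, and the argument is short. Let $\{p,q\}\in A$ have $p$ strictly inside the gap $I$ and $q\notin S$ outside it. Since $p,q\notin S$, every link of $K$ either interleaves $\{p,q\}$ or has both endpoints in one of the two open arcs $X,Y$ into which $p,q$ cut the cycle. Two links lying in different open arcs share no endpoint and cannot interleave, so they do not cross. Hence, if no link of $K$ interleaved $\{p,q\}$, connectivity of $K$ in the crossing graph would put all of $K$, and so all of $S$, inside one arc, say $Y$. Then $X\cup\{p,q\}$ is an $S$-free arc, so $p$ and $q$ lie in the same gap, contradicting $q\notin I$. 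So $\{p,q\}$ crosses a link of $K$, which gives $\{p,q\}\in K$ and $p\in S$, a contradiction.

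The complications you worried about do not arise. Any two links at $u$ share exactly the endpoint $u$ and therefore cross, so $K$ is a single component. If $K\neq\emptyset$ then $|S|\ge 2$, so $I$ has two distinct end vertices in $S$. The only degenerate case is $K=\emptyset$, which is presumably what you meant by ``$u$ is the only point of $S$''; your sentence ``the other side contains $S\ni u$'' silently assumes it away. In that case no link touches $u$, and the two cycle edges at $u$ already form an uncovered $2$-cut separating $u$ from $v$.

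Your $(\Leftarrow)$ direction is the same as the paper's. For $(\Rightarrow)$, the paper closes under \emph{arcs} rather than under crossings. It grows a blue arc from $u$ and a red arc from $v$, each time absorbing the whole arc spanned by a link that touches the colored set. This makes ``no link leaves a colored arc'' automatic, so whenever $B\neq R$ the boundary of one of them is immediately an uncovered $2$-cut. The work is pushed into the case $B=R$, where the interval closure must be converted back into genuine crossing sequences from $u$ and from $v$ to the extreme colored vertex $w$; the paper justifies this only briefly.

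Your closure under the crossing relation is the dual choice. The final sequence is simply a path in the crossing graph, at the cost of the chord-separation argument above to certify that the gap around $v$ is uncovered. This avoids the three-way case analysis on $B$ and $R$, and you never have to translate covered intervals back into crossings.
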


\begin{proof}
We prove the proposition in two directions.

\paragraph{(\(\Leftarrow\))}
    Suppose there exists a sequence of crossing links from \(u\) to \(v\).
    We show that \(u\) and \(v\) are \(3\)-edge-connected in \(G \cup A\).

    Consider any pair of edges \(e_1,e_2\) of the cycle forming a \(2\)-cut that separates
    \(u\) and \(v\). Let \(V_u\) and \(V_v\) denote the vertex sets of the two resulting components,
    containing \(u\) and \(v\), respectively.
    By assumption, there is a sequence of crossing links
    \(\ell_1 - \ell_2 - \cdots - \ell_q\), where one endpoint of \(\ell_1\) is \(u\) and one
    endpoint of \(\ell_q\) is \(v\).

    If none of the links in this sequence crossed the cut \(\{e_1,e_2\}\), then there would exist
    an index \(i\) such that the links \(\ell_1,\ell_2,\dots,\ell_i\) have both endpoints in \(V_u\),
    while \(\ell_{i+1}\) has both endpoints in \(V_v\).
    However, in this case \(\ell_i\) and \(\ell_{i+1}\) cannot cross, contradicting the assumption
    that consecutive links in the sequence are crossing.

    Therefore, every \(2\)-cut separating \(u\) and \(v\) in the cycle is covered by at least one
    link in \(A\). This implies that no \(2\)-cut separates \(u\) and \(v\) in \(G \cup A\), and hence
    \(u\) and \(v\) are \(3\)-edge-connected.

    \paragraph{(\(\Rightarrow\))}
Assume that \(u\) and \(v\) are \(3\)-edge-connected in \(G \cup A\).
We show that there exists a sequence of crossing links in \(A\) from \(u\) to \(v\).

Label the vertices of the cycle by their clockwise distance from \(u\).
Initially, assign the color blue to \(u\).
While there exists an (as yet unselected) link with at least one endpoint colored blue,
assign the color blue to all vertices whose labels lie between the labels of the two endpoints
of that link (both inclusive).
Next, assign the color red to \(v\) and repeat the same coloring procedure for the red color.

Let \(B\) and \(R\) denote the sets of vertices colored blue and red, respectively.
Note that a vertex (including \(u\) or \(v\)) may receive both colors, and that the two endpoints
of every link are always assigned the same set of colors (possibly none, one color, or both).

Observe that the set \(B\) forms a contiguous clockwise path starting at \(u\) and ending at some
vertex \(u'\), and similarly \(R\) forms a contiguous path containing \(v\).
Let \(x\) be the vertex with the smallest label that is colored red.
If \(x\) is ever assigned the color blue during the coloring process, then all vertices that
are subsequently colored red will also be colored blue.
Hence, if \(R\cap B\neq\emptyset\), then \(R\subseteq B\).

We now analyze the possible interactions between \(R\) and \(B\); see \Cref{fig:connCross}
for an illustration.

\begin{figure}
\centering
\begin{subfigure}{0.3\textwidth}
    \centering
    \includegraphics[scale=0.8,page=1]{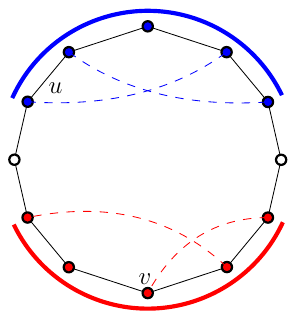}
    \caption{No intersection.}
    \label{fig:cc_a}
\end{subfigure}
\begin{subfigure}{0.3\textwidth}
    \centering
    \includegraphics[scale=0.78,page=2]{fig_crossing_connect.pdf}
    \caption{$R=B$.}
    \label{fig:cc_b}
\end{subfigure}
\begin{subfigure}{0.3\textwidth}
    \centering
    \includegraphics[scale=0.72,page=3]{fig_crossing_connect.pdf}
    \caption{One contains the other.}
    \label{fig:cc_c}
\end{subfigure}
\caption{Different ways in which sections of colored vertices can interact.}
\label{fig:connCross}
\end{figure}

\begin{itemize}
\item \emph{\(R\cap B=\emptyset\)} (\Cref{fig:cc_a}).
Consider the cut \((B, V(G)\setminus B)\), which separates \(u\) and \(v\).
By construction, every link with one endpoint in \(B\) has its other endpoint also in \(B\),
and the same holds for \(R\).
Thus, no link in \(A\) covers this cut, implying that the cut has size two in \(G \cup A\),
contradicting the assumption that \(u\) and \(v\) are \(3\)-edge-connected.

\item \emph{\(B=R\)} (\Cref{fig:cc_b}).
We claim that there exists a sequence of crossing links from \(u\) to \(v\).
Let \(w\) be the colored vertex with the largest label.
By the coloring procedure, there exists a sequence of links starting at \(u\) and ending at \(w\),
and similarly a sequence of links starting at \(v\) and ending at \(w\).
The last links in these two sequences both have \(w\) as an endpoint and therefore cross.
Concatenating these two sequences yields a sequence of crossing links from \(u\) to \(v\).

\item \emph{\(R\subsetneq B\)} (\Cref{fig:cc_c}).
Consider the cut \((R, V(G)\setminus R)\).
Vertex \(v\) is colored both red and blue, while vertex \(u\) is colored only blue.
(Indeed, if \(u\) were ever colored red, the coloring process would imply \(B\subseteq R\),
contradicting the strict containment.)
Hence, this cut separates \(u\) and \(v\).
Since the endpoints of every link receive the same set of colors, no link covers the cut
\((R, V\setminus R)\).
Thus, this cut has size two in \(G \cup A\), again contradicting that \(u\) and \(v\)
are \(3\)-edge-connected.
\end{itemize}

The above contradictions leave only the case \(B=R\), which guarantees the existence of
a sequence of crossing links from \(u\) to \(v\).

\end{proof}

We are now ready to prove the following proposition.

\CanConCross*
\begin{proof}
If \(u\) and \(v\) lie in the same cycle, the claim follows directly from
\Cref{prop:connectedCrossing}. Hence, we assume that \(u\) and \(v\) lie in different
cycles. We prove the proposition in two directions.

\paragraph{(\(\Rightarrow\))}
Assume that \(u\) and \(v\) are \(3\)-edge-connected.
We show that there exists a sequence of crossing links from \(u\) to \(v\).

Let \(v_1=u, v_2, \ldots, v_q=v\) be the vertices that lie on every \(u\)--\(v\) path;
note that \(v_2,\ldots,v_{q-1}\) are articulation vertices.
Each consecutive pair \((v_i,v_{i+1})\) lies on a common cycle and admits three
edge-disjoint paths between them by \Cref{prop:transitivity}.

Suppose, for contradiction, that no sequence of crossing links exists from \(u\) to \(v\).
Then there must exist an index \(i\) such that no sequence of crossing links exists
from \(v_i\) to \(v_{i+1}\); otherwise, concatenating such sequences for all consecutive
pairs would yield a sequence from \(u\) to \(v\).

Fix such a pair \((v_i,v_{i+1})\), and let \(C\) be the cycle containing them.
Consider the projections of all added links onto the cycle \(C\).
Since \(v_i\) and \(v_{i+1}\) admit three edge-disjoint paths in \(G \cup A\),
they are \(3\)-edge-connected in the graph induced by \(C\) together with these
projected links.
By \Cref{prop:connectedCrossing}, this implies the existence of a sequence of crossing
links from \(v_i\) to \(v_{i+1}\), contradicting our assumption.
Therefore, a sequence of crossing links from \(u\) to \(v\) must exist.

\paragraph{(\(\Leftarrow\))}
Assume that there exists a sequence of crossing links
\(\ell_1-\ell_2-\cdots-\ell_t\) from \(u\) to \(v\), where \(u\) lies in the projection
vertex set of \(\ell_1\) and \(v\) lies in the projection vertex set of \(\ell_t\).
We show that \(u\) and \(v\) are \(3\)-edge-connected.

Let \(e_1,e_2\) be any two edges of the cactus forming a \(2\)-cut that separates \(u\)
and \(v\), and let \(V_u\) and \(V_v\) be the vertex sets of the two connected components
of \(G-\{e_1,e_2\}\) containing \(u\) and \(v\), respectively.
Note that \(e_1\) and \(e_2\) lie on a common cycle.
We claim that some link \(\ell_i\) in the sequence covers this \(2\)-cut.

Suppose not.
Then \(\ell_1\) does not cover the cut, and hence its endpoints lie on the same side.
They cannot both lie in \(V_v\), since in that case they would be connected within the
component \(V_v\) of \(G-\{e_1,e_2\}\), yielding an endpoint-to-endpoint path avoiding
\(u \in V_u\), contradicting that \(u\) lies in the projection vertex set of \(\ell_1\).
Therefore, both endpoints of \(\ell_1\) lie in \(V_u\).
By the same argument, both endpoints of \(\ell_t\) lie in \(V_v\).

Since no link in the sequence covers the cut, there exists an index \(i\) such that
\(\ell_1,\ldots,\ell_i\) have both endpoints in \(V_u\), while \(\ell_{i+1}\) has both
endpoints in \(V_v\).
But then the cut \((V_u,V_v)\) separates the endpoints of \(\ell_i\) and \(\ell_{i+1}\),
so these two links cannot cross, contradicting the assumption that consecutive links in
the sequence are crossing.

Thus, every \(2\)-cut separating \(u\) and \(v\) is covered by at least one link, implying
that \(u\) and \(v\) are \(3\)-edge-connected.
\end{proof}

\end{document}